\documentclass[svgnames,11pt,a4paper]{article}

\usepackage{amssymb,amsmath,amsthm}
\usepackage{vmargin}

%% %\setmarginsrb{2.55cm}{2.4cm}{2.55cm}{2.4cm}{0pt}{0pt}{0pt}{6mm}

%\setmarginsrb{1.2in}{1.2in}{1.2in}{1.2in}{0pt}{0pt}{0pt}{7mm}

\setmarginsrb{1.1in}{1.1in}{1.1in}{1.1in}{0pt}{0pt}{0pt}{7mm}

%% %\setmarginsrb{2.54cm}{2.54cm}{2.54cm}{2.54cm}{0pt}{0pt}{0pt}{6mm}
%%\setmarginsrb{1in}{1in}{1in}{1in}{0pt}{0pt}{0pt}{7mm}
%\usepackage{times}

\usepackage{amstext}

%\newenvironment{my_enumerate}
%{\begin{enumerate}
%  \setlength{\itemsep}{0pt}
%  \setlength{\parskip}{0pt}
%  \setlength{\parsep}{0pt}}
%{\end{enumerate}}

%\usepackage{safetree}

\newtheorem{brule}{B}	
\newtheorem{theorem}{Theorem}

\newtheorem{corollary}{Corollary}
\newtheorem{lemma}{Lemma}
\newtheorem{claim}{Claim}

\newtheorem{define}{Definition}

%\usepackage{comment}

%\usepackage{fullpage}
%\usepackage{times}
%\usepackage[small,compact]{titlesec}
%\usepackage{algorithm}
%\usepackage{algorithmic}
%useful special symbols:
%\usepackage{latexsym}
% \usepackage{tikz}
% a useful package if you write url addresses:
%\usepackage{url}
%a package for figures:
\usepackage{xcolor}
\usepackage{epsfig}
\usepackage{boxedminipage}
\usepackage{boxedminipage}
\usepackage{xcolor}
\usepackage{framed}

\usepackage{amsmath, amssymb, latexsym}
\usepackage{enumerate}

%\GetGinDriver
%\usepackage[pagebackref,linktocpage,linkbordercolor=white,colorlinks=true,linkcolor=purple,citecolor=FireBrick,pdfstartpage=1]{hyperref}
%\setcounter{tocdepth}{1}
%if you want to use multicolumn tables, uncomment the following:
%\usepackage{multicol}
%if you want to use sideway tables or figures, uncomment the following:
%\usepackage{rotating}

%The following packages are needed e.g. for algorithm environment
\usepackage{float}
\usepackage{xspace}
%Uncomment to include algorithm environment (you need file algorithmwh.sty):
%\usepackage{algorithmwh}
% 
% \textwidth=14.945cm
% \oddsidemargin=0.5cm
% \evensidemargin=0.5cm

%By default, each paragraph begins by an indent (space). If you prefer
%no indent but an empty line between paragraphs, uncomment the following: 

%\setlength{\parindent}{0pt}
%\setlength{\parskip}{1.5ex plus 0.5ex minus 0.2ex}
 
%%%%%  
%\usepackage{kpfonts}

  % 
  % 
  % \newcommand{\un}{\{0,1\}^*}
  % 
  % \newcommand{\wtwos}{{\sc Weighted $2$--CNFSAT}}
  % \newcommand{\PP}{\textit{purely positive clause}}
  % \newcommand{\MM}{\textit{mixed clause}}
  % \newcommand{\NN}{\textit{purely negative clause}}
  % \newcommand{\PPP}{purely positive clause}
  % \newcommand{\MMM}{mixed clause}
  % \newcommand{\NNN}{purely negative clause}
  % \newcommand{\no}{{\sc NO}}

% \newcommand{\setn}[2]{\ensuremath{{\cal {#1}}(S_{\bar{#2}})}}

% \newcommand{\jce}[1]{$\theta_{#1}$}

% 
%Create your own environments

%\theoremstyle{remark}
 
%  \theoremstyle{plain}

 %\theoremstyle{definition}
 
 % \theoremstyle{plain}

 % \newtheorem*{convention*}{Convention}
 % \theoremstyle{plain}

 % \theoremstyle{plain}

 % \theoremstyle{plain}

  %\usepackage{marvosym}

  %\usepackage{multirow}
% \usetikzlibrary{mindmap,trees} 
% \usetikzlibrary{decorations,fadings,snakes} 
% 

\setlength{\parskip}{3pt}

\newcommand{\Konig}{K\"{o}nig}
\newcommand{\KE}{K\"{o}nig-Egerv\'{a}ry}

\newcommand{\vclp}{{\sc Vertex Cover Above LP}}
\newcommand{\agvc}{{\sc  agvc}}
\newcommand{\agvcfull}{{\sc  Above Guarantee Vertex Cover }}

\newcommand{\runtime}[1]{$O^*(2.3146^{#1})$}

\newtheorem{reduction}{Preprocessing Rule}

%opening
%\title{Improved Algorithms for Parameterized Problems using LP Techniques\thanks{A preliminary version of this paper appears in the proceedings of STACS 2012.}}

\title{Faster Parameterized Algorithms using Linear Programming\thanks{A preliminary version of this paper appears in the proceedings of STACS 2012.}}

%\title{Linear Programming brings faster algorithm for {\sc Vertex Cover}}
%\title{Improved FPT algorithms for odd cycle transversal and vertex cover above the matching, using LP techniques}
% 

\author{Daniel Lokshtanov\thanks{University of California San Diego, San Diego, USA. \texttt{daniello@ii.uib.no}} 
\and N.S. Narayanaswamy\thanks{Department of CSE, IIT Madras, Chennai, India. \texttt{swamy@cse.iitm.ernet.in}} 
\and Venkatesh Raman\thanks{The Institute of Mathematical Sciences, Chennai 600113, India. \newline 
  \texttt{\{vraman|msramanujan|saket\}@imsc.res.in} }
\and \addtocounter{footnote}{-1}M.S. Ramanujan\footnotemark 
\and \addtocounter{footnote}{-1}Saket Saurabh\footnotemark}
\date{}
%\institute{}
%\and N.S. Narayanaswamy\thanks{Department of Computer Science and Engineering, IIT Madras, Chennai, India. \texttt{swamy@cse.iitm.ernet.in}} 
\begin{document}
\maketitle

\begin{abstract}
We investigate the parameterized complexity of {\sc Vertex Cover} parameterized by the difference between the size of the optimal solution and the value of the linear programming (LP) 
relaxation 
%of the integer linear programming formulation 
of the problem. By carefully analyzing the change in the LP value in the branching steps, we argue that combining previously known preprocessing rules with the most straightforward branching algorithm yields an $O^*((2.618)^k)$ algorithm for the problem. Here $k$ is the excess of the vertex cover size over the LP optimum, and we write $O^*(f(k))$ for a time complexity of the form $O(f(k)n^{O(1)})$, where $f (k)$ grows exponentially with $k$. We proceed to show that a more sophisticated branching algorithm achieves a runtime of \runtime{k}.

Following this, using known and new reductions, we give \runtime{k} algorithms for the parameterized versions of {\sc Above Guarantee Vertex Cover}, {\sc Odd Cycle Transversal}, {\sc Split Vertex Deletion} and {\sc Almost 2-SAT}, and an $O^*(1.5214^k)$ algorithm for {\sc Ko\"nig Vertex Deletion}, {\sc Vertex Cover Param by OCT} and {\sc Vertex Cover Param by KVD}. These algorithms significantly improve the best known bounds for these problems. The most notable improvement is the new bound for {\sc Odd Cycle Transversal} - this is the first algorithm which beats the dependence on $k$ of the seminal $O^*(3^k)$ algorithm of Reed, Smith and Vetta.
%that showed it fixed-parameter tractable in 2003. 
Finally, using our algorithm, we obtain a kernel for the standard parameterization of {\sc Vertex Cover} with at most $2k - O(\log k)$ vertices. Our kernel is simpler than previously known kernels achieving the same size bound. 
%another known kernel with the same bound. 

{\bf Topics:}  Algorithms and data structures. Graph Algorithms, Parameterized Algorithms.

\end{abstract}

\section{Introduction and Motivation}
In this paper we revisit one of the most studied problems in parameterized complexity, the {\sc Vertex Cover} problem. Given a graph $G=(V,E)$, a subset $S\subseteq V$ is called a {\em vertex cover} if every edge in $E$ has at least one end-point in $S$.  The {\sc Vertex Cover} problem is formally defined as follows. 

\begin{center}
\begin{boxedminipage}{.9\textwidth}

\textsc{Vertex Cover}\vspace{10 pt}

\begin{tabular}{ r l }

\textit{~~~~Instance:} &  An undirected graph $G$ and a positive integer $k$. \\

\textit{Parameter:} & $k$.\\
\textit{Problem:} & Does $G$ have a vertex cover of of size at most $k$? 
 %, \\
%~~~~~~~~~~~~~~~~~~ & such that $G \setminus S$ does not contain \jc{} as a minor?
 \\
\end{tabular}
\end{boxedminipage}
\end{center}

We start with a few basic definitions regarding parameterized complexity.  For decision problems with input size $n$, 
and a parameter $k$, %(which typically, %and in all the problems we consider in this paper, is the
the goal in parameterized complexity is to design an algorithm with runtime $f(k)n^{O(1)}$ where $f$ is a function of $k$ alone, as contrasted with a trivial $n^{k+O(1)}$ algorithm. Problems which admit such algorithms are said to be fixed parameter tractable (FPT). 
%We also call an algorithm with a runtime of $f(k)n^{O(1)}$, as an FPT algorithm, and such a runtime as FPT runtime. 
The theory of parameterized complexity was developed by Downey and Fellows \cite{DF99}. For recent developments, see the book by Flum and Grohe \cite{FG06}. 

{\sc Vertex Cover}  was one of the first problems that was shown to be FPT~\cite{DF99}. After a long race, the current best algorithm for 
{\sc Vertex Cover} runs in time $O(1.2738^k+kn)$~\cite{ChenKX10}.
%\footnote{{\bf We need to cite more paper here.....}}
However, when $k < m$, the size of the maximum matching, the {\sc Vertex Cover} problem is not interesting, as the answer is trivially NO. Hence, when $m$ is large (for example when the graph has a perfect matching), the running time bound of the standard FPT algorithm is not practical, as $k$, in this case, is quite large.  
%either  n for the cases the problem is interesting, the running time of the standard version is
% In fact, for {\sc Vertex Cover}, given an instance $(G,k)$ in polynomial time 
% we can obtain another instance $(G'=(V',E'),k')$ such that $G$ has a vertex cover of size at most $k$ if and only if $G'$ has a vertex cover of size at most $k'$. Furthermore, either the answer is trivially NO, or $k'\geq |V'|/2$. 
This led to the following natural ``above guarantee'' variant of the {\sc Vertex Cover} problem. 

\begin{center}
\begin{boxedminipage}{.9\textwidth}

\textsc{\agvcfull\  (\agvc)}\vspace{10 pt}

\begin{tabular}{ r l }

\textit{~~~~Instance:} &  An undirected graph $G$, a maximum matching $M$ and \\ 
& a positive integer $k$. \\

\textit{Parameter:} & $k-\vert M\vert$.\\
\textit{Problem:} & Does $G$ have a vertex cover of of size at most $k$? 
 %, \\
%~~~~~~~~~~~~~~~~~~ & such that $G \setminus S$ does not contain \jc{} as a minor?
 \\
\end{tabular}
\end{boxedminipage}
\end{center}
In addition to being a natural parameterization of the classical {\sc Vertex Cover} problem, the \agvc\ problem has a central spot in the ``zoo'' of parameterized problems. We refer to Figure~\ref{fig:zoo} for the details of problems reducing to \agvc. (See the Appendix for the definition of these problems.) In particular an improved algorithm for \agvc\ implies improved algorithms for several other problems as well, including {\sc Almost $2$-SAT} and {\sc Odd Cycle Transversal}. 
%\begin{figure}[\top]
\begin{figure}[t]
\begin{center}
\includegraphics[height=145 pt, width=210 pt]{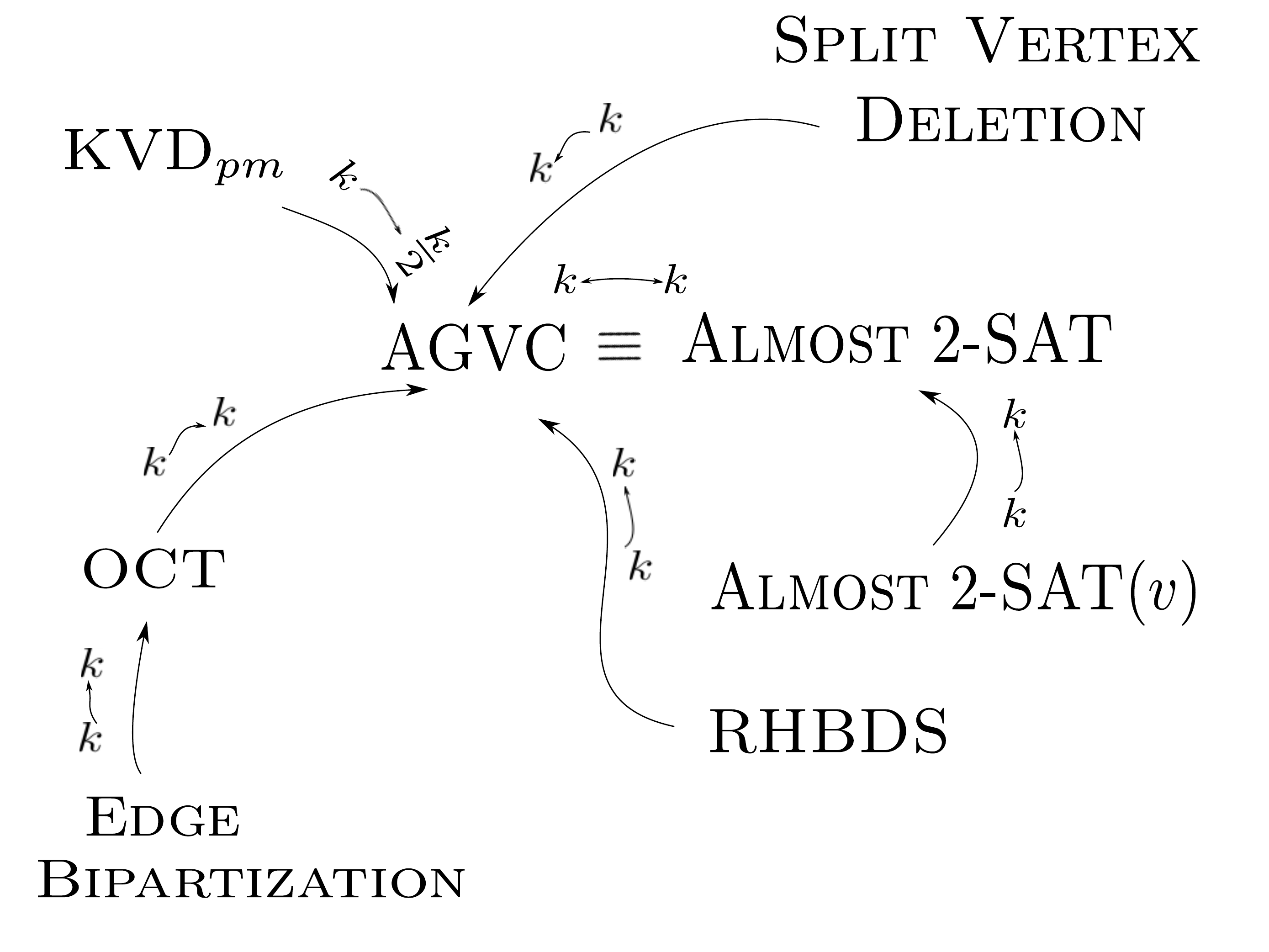}
\caption{The zoo of problems around {\agvc}; An arrow from a problem $P$ to a problem $Q$  indicates that there is a parameterized reduction from $P$ to $Q$ with the parameter changes as indicated on the arrow.}
\label{fig:zoo}
\end{center}
\end{figure}

\agvc\ was first shown fixed parameter tractable by a parameter preserving reduction to {\sc Almost $2$-SAT}. In {\sc Almost $2$-SAT}, we are given a $2$-SAT formula $\phi$, a positive integer $k$ and the objective is to check whether there exists at most $k$ clauses whose deletion from $\phi$ can make the resulting formula satisfiable. The {\sc Almost $2$-SAT} problem was introduced in~\cite{MahajanR99} and a decade later it was proved FPT by Razgon and O'Sullivan~\cite{RazSul09}, who gave a $O^*(15^k)$ time algorithm for the problem. In 2011, there were two new algorithms for the \agvc\ problem~\cite{CPPW11,RRS11}.  The first used new structural results about \KE\  graphs --- graphs where the size of a minimum vertex cover is equal to the size of a maximum matching~\cite{RRS11} while the second invoked a reduction to an ``above guarantee version" of the {\sc Multiway Cut} problem~\cite{CPPW11}. The second algorithm runs in time $O^*(4^k)$ and this is also the  fastest algorithm \agvc\ prior to our work.

In order to obtain the $O^*(4^k)$ running time bound for {\sc Above Guarantee Multiway Cut} (and hence also for \agvc{}), Cygan et al~\cite{CPPW11} introduce a novel measure in terms of which the running time is bounded. Specifically they bound the running time of their algorithm in terms of the difference between the size of the solution the algorithm looks for and the value of the optimal solution to the linear programming relaxation of the problem. Since {\sc Vertex Cover} is a simpler problem than {\sc Multiway Cut} it is tempting to ask whether applying this new approach directly on {\sc Vertex Cover} could yield simpler and faster algorithms for \agvc{}. This idea
%This novel approach, combined with the fact that an improvement on the above guarantee versions of {\sc Vertex Cover} improves the best known bounds for a number of parameterized problems, is the main motivation for this study on {\sc Vertex Cover}. 
is the starting point of our work. 

The well known integer linear programming formulation (ILP)  for {\sc Vertex Cover} is as follows. 
\begin{center}
\begin{boxedminipage}{.9\textwidth}

\textsc{ILP formulation of Minimum Vertex Cover -- ILPVC}\vspace{10 pt}

\begin{tabular}{ r l }

\textit{~~~~Instance:} &   A graph $G=(V,E)$.\\

\textit{Feasible Solution:} &  A function $x: V \rightarrow \{0, 1\}$ satisfying edge constraints \\ 
&  $x(u) + x (v)\geq 1$ for each edge $(u, v) \in E$. \\

\textit{Goal:} & To minimize $w(x )= \Sigma_{u\in V} x (u)$ over all feasible solutions $x$. 
 %, \\
%~~~~~~~~~~~~~~~~~~ & such that $G \setminus S$ does not contain \jc{} as a minor?
 \\
\end{tabular}
\end{boxedminipage}
\end{center}
In the linear programming relaxation of the above ILP, the constraint $x(v)\in \{0,1\}$ is replaced with 
$ x(v) \geq 0$, for all $v \in V$.  For a graph $G$, we call this relaxation {\sc LPVC($G$)}. Clearly, every integer feasible solution is also a feasible solution 
to {\sc LPVC($G$)}. If the minimum value of {\sc LPVC($G$)} is $vc^*(G)$ then clearly the size of a minimum vertex cover is at least 
$vc^*(G)$. This leads to the following parameterization of {\sc Vertex Cover}.

\begin{center}
\begin{boxedminipage}{.85\textwidth}

\textsc{Vertex Cover above LP}\vspace{10 pt}

\begin{tabular}{ r l }

\textit{~~~~Instance:} &  An undirected graph $G$, positive integers $k$ and $\lceil vc^*(G) \rceil $,  \\ 
& where $vc^*(G)$ is  the minimum value  of {\sc LPVC($G$)}. \\
\textit{Parameter:} & $k-\lceil vc^*(G) \rceil$.\\
\textit{Problem:} & Does $G$ have a vertex cover of of size at most $k$? 
 %, \\
%~~~~~~~~~~~~~~~~~~ & such that $G \setminus S$ does not contain \jc{} as a minor?
 \\
\end{tabular}
\end{boxedminipage}
\end{center}
Observe that since $vc^*(G) \geq m$, where $m$ is the size of a maximum matching of $G$, we have that $k-vc^*(G) \leq k-m$. Thus, any parameterized algorithm for \textsc{Vertex Cover above LP} is also a parameterized algorithm  for \agvc\ and hence an algorithm for every problem depicted in Figure~\ref{fig:zoo}. 
%We refer to appendix for 
% definitions of the problems mentioned in Figure~\ref{fig:zoo}. 

\begin{table}[t]
\begin{center}
\begin{tabular}{|l|c|c|c|}\hline
Problem Name & Previous $f(k)$/Reference & New $f(k)$ in this paper\\
\hline
\hline
{\sc agvc} & $4^k$ \cite{CPPW11} & $2.3146^k$    \\
\hline
{\sc Almost $2$-SAT} & $4^k$ \cite{CPPW11}& $2.3146^k$   \\
\hline 
{\sc RHorn-Backdoor Detection Set} & $4^k$ \cite{CPPW11,GottlobS08}& $2.3146^k$   \\

\hline
{\sc K\"{o}nig Vertex Deletion} & $4^k$ \cite{CPPW11,MishraRSSS10} & $1.5214^k$  \\
\hline 
{\sc Split Vertex Deletion} & $5^k$ \cite{Cai96} & $2.3146^k$    \\
\hline
{\sc Odd Cycle Transversal} & $3^k$ \cite{ReedSV04} & $2.3146^k$    \\
\hline
{\sc Vertex Cover Param by OCT} & $2^k$ (folklore) & $1.5214^k$  \\
\hline
{\sc Vertex Cover Param by KVD} & -- & $1.5214^k$  \\
\hline
\end{tabular} 
\label{tableofruntime}
\caption{The table gives the previous $f(k)$ bound in the running time of various  problems and the ones obtained in this paper.}
\end{center}
\end{table}

\noindent
{\bf Our Results and Methodology.} We develop a \runtime{(k-vc^*(G))} time branching algorithm for \textsc{Vertex Cover above LP}. In an effort to present the key ideas of our algorithm in as clear a way as possible, we first present a simpler and slightly slower algorithm in Section~\ref{sec:mainalgo}. 
%Our algorithm 
%is a simple branching algorithm. 
This algorithm exhaustively applies a collection of previously known preprocessing steps. If no further preprocessing is possible the algorithm simply selects an arbitrary 
vertex $v$ and recursively tries to find a vertex cover of size at most $k$ by considering whether $v$ is in the solution or not. 
While the algorithm is simple, the analysis is more involved as it is not obvious that the measure $k-vc^*(G)$ actually drops in the recursive calls. In order to prove that the measure does drop we string together several known results about the linear programming relaxation of {\sc Vertex Cover}, such as 
%  to obtain this algorithm for \textsc{Vertex Cover above LP}. 
%Some of the results we use are 
the classical Nemhauser-Trotter theorem and properties of ``minimum surplus sets''. We find it intriguing that, as our analysis shows, combining well-known reduction rules with naive branching yields fast FPT algorithms for all problems in Figure~\ref{fig:zoo}. 
%We remark that the analysis of this first algorithm shows that a known, simple vertex cover algorithm already gives an improved algorithm for all the problems in Figure~\ref{fig:zoo}.
We then show in Section~\ref{sec:improvalgo} that adding several more involved branching rules to our algorithm yields an improved running time of \runtime{(k-vc^*(G))}. Using this algorithm we obtain even faster algorithms for the problems in Figure~\ref{fig:zoo}. 

We give a list of problems with their previous best running time and the ones obtained in this paper in Table~\ref{tableofruntime}. The most notable among them is the new algorithm for {\sc Odd Cycle Transversal}, the problem of deleting at most $k$ vertices to obtain a bipartite graph. The parameterized complexity of {\sc Odd Cycle Transversal} was a long standing open problem in the area, and only in 2003 Reed et al.~\cite{ReedSV04} developed an algorithm for the problem running in time $O^*(3^k)$. 
%In fact this was the first time that the iterative compression technique was used. 
However, there has been no further improvement over this algorithm in the last $9$ years; though several reinterpretations of the algorithm have been published~\cite{Huffner09,LokshtanovSS09}. 

We also find the algorithm for {\sc \Konig\ Vertex Deletion}, the problem of deleting at most $k$ vertices to obtain a \Konig\ graph very interesting. {\sc \Konig\ Vertex Deletion} is a natural variant of the odd cycle transversal problem. In~\cite{MishraRSSS10} it was shown that
%one can solve this problem by obtaining a minimum sized vertex cover of the graph and 
given a minimum vertex cover one can solve {\sc \Konig\ Vertex Deletion} in polynomial time. In this article we show a relationship between the measure $k-vc^*(G)$ and the minimum number of vertices needed to delete to  obtain a \Konig\ graph. This relationship together with a reduction rule for {\sc \Konig\ Vertex Deletion} based on the Nemhauser-Trotter theorem gives an algorithm for the problem with running time $O^*(1.5124^k)$. 

We also note that using our algorithm, we obtain a polynomial time algorithm for {\sc Vertex Cover} that, given an input $(G,k)$ returns an equivalent instance $(G'=(V',E'),k')$ such that $k'\leq k$ and  $|V(G')|\leq 2k-c \log k$ for any fixed constant $c$. This is known as a kernel for {\sc Vertex Cover} in the literature. We note that this kernel is simpler than another kernel with the same size bound~\cite{Lampis11}.

We hope that this work will lead to a new race towards better algorithms for \textsc{Vertex Cover above LP} like what we have seen for its classical counterpart, {\sc Vertex Cover}.

%We find the new algorithm for {\sc Odd Cycle Transversal} and various other problems using an algorithm for {\sc Vertex Cover} very exciting and hope that this work will lead to a new race towards better algorithms for \textsc{Vertex Cover above LP} like what we have seen for its classical counterpart, {\sc Vertex Cover}.

\section{Preliminaries}
\label{prelim}
For a graph $G=(V,E)$, for a subset $S$ of $V$, the {\it subgraph of $G$ induced by $S$} is denoted by $G[S]$ and it is defined as the subgraph of $G$ with vertex set $S$ and edge set $\{(u,v) \in E :u,v\in S\}$. By $N_G(u)$ we denote the (open) neighborhood of $u$, that is, the set of all vertices adjacent to $u$. Similarly, for a subset $T \subseteq V$, we define $N_G(T)=(\cup_{v\in T} N_G(v))\setminus T$. When it is clear from the context, we drop the subscript $G$ from the notation. We denote by $N_i[S]$, the set $N[N_{i-1}(S)]$ where $N_1[S]=N[S]$, that is, $N_i[S]$ is the set of vertices which are within a distance of $i$ from a vertex in $S$. The surplus of an independent set $X \subseteq V$ is defined as 
${\bf surplus}(X) = |N(X)| - |X|$. 
For a set ${\cal A}$ of independent sets of a graph, ${\bf surplus}({\cal A})= \min_{X\in{\cal A}}{\bf surplus}(X)$.
%\[ {\bf surplus}({\cal A})=\min_{I\in {\cal A}} |N(I)|-|I|.\]
The surplus of a graph $G$, {\bf surplus($G$)}, is defined to be the minimum surplus over all independent sets in the graph.

By the phrase an optimum solution to LPVC($G$), we mean a feasible solution with $x(v)\geq 0$ for all $v\in V$ 
minimizing the objective function $w(x)=\sum_{u\in V}x(u)$. 
It is well known that for any graph $G$, there exists an optimum solution to LPVC($G$), 
such that $x(u)\in \{0, \frac{1}{2},1\}$ for all $u\in V$~\cite{NT1}. Such a feasible  
optimum solution to LPVC($G$) is called a half integral solution and can be found in polynomial time~\cite{NT1}. 
In this paper we always deal 
with half integral optimum solutions to LPVC($G$). Thus, by default whenever we refer to an {\em optimum solution} to LPVC($G$) we will be referring to a  {\em half integral optimum solution} to LPVC($G$). 
% any extreme point of LPVC($G$), a feasible solution with $x(v)\geq 0$ for all 
% $v\in V$ minimizing the $w(x)=\sum_{u\in V}x(u)$, is half integral (HI) \cite{NT1}. That is, for all extreme point of LPVC($G$) we 
% have that $x(u)\in \{0, \frac{1}{2},1\}$ for all $u\in V$.  This means that there always exists an optimum solution to LPVC($G$) such that 
% $x(u)\in \{0, \frac{1}{2},1\}$ for all $u\in V$ and can be found in polyn
%Thus, instead of the LP relaxation it is sufficient to consider solutions of the HI-relaxation of the MIN VC problem, that restricts the variables $x(u)$ to $\{0,\frac{1}{2},1\}$. 
Let $VC(G)$ be the set of all minimum vertex covers of $G$ and $vc(G)$ denote the size of a minimum 
vertex cover of $G$. Let $VC^*(G)$ be the set of all optimal solutions (including non half integral optimal solution) to 
LPVC($G$). By $vc^*(G)$ we denote the value of an optimum solution to LPVC($G$). 
% Let $VC^*(G)$ be the set of all extreme points of LPVC, or all minimum HI vertex covers $x: V \rightarrow \{0,\frac{1}{2}, 1\}$ of $G$. By $vc^*(G)$ we denote the value of an optimum HI vertex cover of $G$. 
We define $V^x_i=\{u\in V: x(u)=i\}$ for each $i\in \{0,\frac{1}{2},1\}$ and define $x\equiv i$, $i\in \{0,\frac{1}{2},1\}$, if $x(u)=i$ for every $u\in V$. Clearly, $vc(G)\geq vc^*(G)$ and $vc^*(G)\leq \frac{\vert V\vert}{2}$ since $x\equiv \frac{1}{2}$ is always a feasible solution to LPVC($G$).  
We also refer to the $x\equiv \frac{1}{2}$ solution simply as the all $\frac{1}{2}$ solution.

In branching algorithms, we say that a branching step results in a drop of $(p_1, p_2, ...p_l)$  where $p_i, 1 \leq i \leq l$ is an integer, if the measure we use to analyze drops respectively by $p_1, p_2, ... p_l$ in the corresponding branches. We also call the vector $(p_1, p_2, \ldots, p_l)$ the branching vector of the step.
%\section{A simple algorithm}
\section{A Simple Algorithm for  {\sc Vertex Cover above LP}}
\label{sec:mainalgo}
In this section, we give a simpler algorithm for \textsc{Vertex Cover above LP}. The algorithm has two phases, a preprocessing phase and a branching phase. We first describe the preprocessing steps used in the algorithm and then give a simple description of the algorithm. Finally, we argue about its correctness and prove the desired running time bound on the algorithm. 
\subsection{Preprocessing}
We describe three standard preprocessing rules to simplify the input instance. We first state the (known) results which allow for their correctness, and then describe the rules.

%\begin{observation} $V^x_0$, and hence $V^x_1$ is non empty if and only if $x\neq \frac{1}{2}$.
%\end{observation}

\begin{lemma}\label{lem:compute good set}{\sc \cite{NT2,Picard:1977fk}}
For a graph $G$, in polynomial time, we can compute an optimal solution $x$ to LPVC($G$) 
%which minimizes the size of the set $V^x_{1/2}$.
such that all $\frac{1}{2}$ is the unique optimal solution to LPVC($G[V^x_{1/2}]$). Furthermore, ${\bf surplus}(G[V^x_{1/2}])>0$.
\end{lemma}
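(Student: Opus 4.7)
The plan is to iteratively refine a half-integral LP-optimum of $G$ until its $\frac{1}{2}$-part becomes ``tight''. I start, using the standard Nemhauser-Trotter theorem, by computing in polynomial time any half-integral LP-optimum $x$ of $G$. I then run the following loop: while $H := G[V^x_{1/2}]$ contains a non-empty independent set $X$ with $|N_H(X)| \le |X|$, update $x$ by setting $x(v) := 0$ on $X$ and $x(v) := 1$ on $N_H(X)$, leaving the rest unchanged.

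Each update is safe. Since $X$ is independent in $H$, every edge of $G$ with an endpoint in $X$ has its other endpoint either in $N_H(X)$ (now assigned $1$) or in $V^x_1$ (already $1$); values on $V \setminus (X \cup N_H(X))$ are unchanged, so all remaining edge constraints continue to hold. The change in the LP weight is $\frac{1}{2}(|N_H(X)| - |X|) \le 0$, which combined with the prior optimality of $x$ forces $|N_H(X)| = |X|$ and preserves optimality. Because $X$ is non-empty, $|V^x_{1/2}|$ strictly decreases at each iteration, so the loop terminates after at most $|V(G)|$ rounds. At termination ${\bf surplus}(G[V^x_{1/2}]) > 0$ holds by construction.

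For the uniqueness claim on $H = G[V^x_{1/2}]$, I would invoke Nemhauser-Trotter once more (now on $H$) to restrict attention to half-integral optima. Let $y$ be any such optimum and write $A = y^{-1}(0)$, $B = y^{-1}(1)$, $C = y^{-1}(\frac{1}{2})$. Feasibility of $y$ forces $A$ to be independent in $H$ with $N_H(A) \subseteq B$. Since all-$\frac{1}{2}$ is optimal on $H$, optimality of $y$ gives $|B| + \frac{1}{2}|C| = \frac{1}{2}|V(H)|$, whence $|B| = |A|$ and therefore $|N_H(A)| \le |A|$; the surplus condition then forces $A = \emptyset$, hence $B = \emptyset$, and $y$ is the all-$\frac{1}{2}$ solution, as required.

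The only non-trivial algorithmic step is detecting such an $X$ in polynomial time. The cleanest route is via the bipartite double cover $\widetilde H$ of $H$: its maximum matching yields a half-integral LPVC($H$) optimum (this is precisely how Nemhauser-Trotter is proved), and the zero/one support of that optimum exhibits the desired deficient pair $(X, N_H(X))$. Equivalently, one can just re-invoke Nemhauser-Trotter on $H$ itself and read $X$ off from the output. The main conceptual point, which I expect to be the subtlest part of the write-up, is the duality between the surplus condition and the uniqueness of the all-$\frac{1}{2}$ LP-optimum on $H$; once this duality is isolated, the termination criterion of the loop coincides with both desired properties.
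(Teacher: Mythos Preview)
The paper does not actually prove this lemma; it is quoted from the Nemhauser--Trotter and Picard references, so there is no in-paper argument to compare against. Your structural argument is the standard one and is correct: each update keeps $x$ feasible and optimal, strictly shrinks $V^x_{1/2}$, and the equivalence you isolate between ${\bf surplus}(H)>0$ and ``all $\frac{1}{2}$ is the unique half-integral optimum of LPVC($H$)'' is exactly right. (The paper itself records this equivalence later, as Lemma~\ref{lem:ntequiv}.)

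There is, however, a genuine gap in the algorithmic step. You propose to detect a non-empty independent $X\subseteq V(H)$ with $|N_H(X)|\le|X|$ by running Nemhauser--Trotter on $H$ and reading $X$ off the zero set of the returned half-integral optimum. But the bipartite-double-cover construction you invoke can perfectly well return the all-$\frac{1}{2}$ solution even when other half-integral optima exist. Concretely, take $H$ to be a single edge $ab$: then ${\bf surplus}(H)=0$, yet the double cover has a perfect matching, and K\"onig's procedure started from that matching (no unmatched left vertices, hence empty alternating-reachable set) returns one full side as the minimum vertex cover, which translates back to $x\equiv\frac{1}{2}$. So ``re-invoke NT and read $X$ off'' does not, as stated, detect the deficient set, and your loop could terminate prematurely. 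The fix is easy and is essentially what the paper does later in Lemma~\ref{lem:test r3}: for each $u\in V(H)$, solve LPVC($H$) with the extra constraint $x(u)=0$; if for some $u$ the optimum value is still $\frac{1}{2}|V(H)|$, the zero set of that constrained optimum is your $X$, and if no such $u$ exists then ${\bf surplus}(H)>0$ and you correctly stop. This costs an extra $|V(H)|$ LP solves per round and keeps the whole procedure polynomial. With this patch your proof is complete.
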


\begin{lemma}
\label{lem:classicalNT}{\sc \cite{NT2}}
Let $G$ be a graph and $x$ be an optimal solution to LPVC($G$). There is a minimum vertex cover for $G$ which contains all the vertices in 
$V^x_1$ and none of the vertices in $V^x_0$. \end{lemma}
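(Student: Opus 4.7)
The plan is to prove this by a direct exchange/swap argument: starting from any minimum vertex cover $C$ of $G$, produce a new vertex cover $C^\star$ that (i) contains all of $V^x_1$, (ii) avoids all of $V^x_0$, and (iii) is no larger than $C$, hence is itself a minimum vertex cover satisfying the lemma.

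The natural candidate is $C^\star := (C\setminus V^x_0)\cup V^x_1$. First I would verify $C^\star$ is a vertex cover by case analysis on an arbitrary edge $(u,v)$. If neither endpoint lies in $V^x_0$, then whichever of $u,v$ was in $C$ is still in $C^\star$ (since only $V^x_0$-vertices are removed), possibly augmented by $V^x_1$ membership. If, say, $u\in V^x_0$, then $x(u)=0$ forces $x(v)=1$ from the LP edge constraint $x(u)+x(v)\ge 1$, hence $v\in V^x_1\subseteq C^\star$. So every edge is covered.

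The heart of the argument is showing $|C^\star|\le |C|$. Writing $A:=V^x_1\setminus C$ and $B:=V^x_0\cap C$, this reduces to showing $|A|\le |B|$. For this I would use the classical perturbation trick that exploits the optimality of $x$: define a new assignment $y$ by decreasing $x$ by $\epsilon$ on $A$, increasing $x$ by $\epsilon$ on $B$, and leaving other values unchanged. For any edge $(u,v)$ that is touched by the perturbation, one checks feasibility in a few subcases. If $u\in A$, then $u\notin C$ forces $v\in C$; if additionally $v\in V^x_0$ then $v\in B$ and $y(u)+y(v)=(1-\epsilon)+\epsilon=1$, otherwise $x(v)\ge \tfrac12$ and the constraint holds for $\epsilon\le \tfrac12$. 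The symmetric case $u\in B$ is handled analogously, using that $x(u)=0$ forces $x(v)=1$ and hence $y(v)=x(v)=1$ unless $v\in A$, which returns to the previous case. Thus $y$ is LP-feasible, and
\[
w(y)-w(x) \;=\; \epsilon\bigl(|B|-|A|\bigr) \;\ge\; 0
\]
by the optimality of $x$, giving $|A|\le |B|$ as required, which yields $|C^\star|=|C|-|B|+|A|\le |C|$.

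The only delicate point I anticipate is making the feasibility check of $y$ truly exhaustive: one must rule out the possibility of an edge inside $A$ or inside $B$ that could violate a constraint. An edge inside $A$ is ruled out because $A$ vertices are not in $C$ and $C$ covers all edges; an edge inside $B$ is ruled out because $B\subseteq V^x_0$, and $x$-feasibility then demands the other endpoint have $x$-value $\ge 1$, contradicting it being in $V^x_0$. Once these structural observations are recorded, the perturbation argument goes through cleanly and the three claimed properties of $C^\star$ follow immediately from its definition.
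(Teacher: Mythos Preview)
The paper does not supply its own proof of this lemma; it is quoted from the literature (Nemhauser and Trotter) without argument. Your proof is correct and is precisely the classical exchange/perturbation argument for this result: form $C^\star=(C\setminus V^x_0)\cup V^x_1$, verify it is a cover, and use an $\epsilon$-perturbation of $x$ towards the characteristic vector of $C$ to deduce $|V^x_1\setminus C|\le |V^x_0\cap C|$ from LP optimality. Your case analysis for the feasibility of $y$ is complete; note only that the step ``otherwise $x(v)\ge\tfrac12$'' relies on the half-integrality convention the paper explicitly adopts, so you may want to flag that assumption (or, equivalently, take $\epsilon$ smaller than the least positive $x$-value to cover the general case).
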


\begin{reduction}\label{red:NT_reduction} Apply Lemma~\ref{lem:compute good set} to compute an optimal solution $x$ to LPVC($G$) such that all 
$\frac{1}{2}$ is the unique optimum solution to LPVC($G[V^x_{1/2}]$). Delete the vertices in $V^x_0\cup V^x_1$ from the graph after including $V^x_1$ in the vertex cover we develop, and reduce $k$ by $\vert V^x_1\vert$.
\end{reduction}
\noindent
In the discussions in the rest of the paper, we say that preprocessing rule \ref{red:NT_reduction} applies if all $\frac{1}{2}$ is not the unique solution to LPVC($G$) and that it doesn't apply if all $\frac{1}{2}$ is the unique solution to LPVC($G$).

%\paragraph{Soundness.} 
\noindent
The soundness/correctness of Preprocessing Rule~\ref{red:NT_reduction} follows from Lemma~\ref{lem:classicalNT}. After the application of preprocessing rule \ref{red:NT_reduction}, we know that $x\equiv \frac{1}{2}$ is the unique optimal solution to LPVC() of the resulting graph 
and the graph has a surplus of at least $1$. 
%This brings us to the next lemma which allows us to compute an independent set of a minimum surplus. 

% \begin{lemma}\label{lem:compute surplus} {\rm (see Theorem 6.1.4 in \cite{lovasz}, see also \cite{ChlebikC08} and \cite{NT2})}
% Given a graph $G$, the surplus of $G$, i.e. an independent set in $G$ of minimum surplus, can be computed in polynomial time.
% \end{lemma}

\begin{lemma}
\label{lem:allornonebranch}
{\sc \cite{ChenKX10,NT2}}
Let $G(V,E)$ be a graph, and let $S \subseteq V$ be an independent subset such that ${\bf surplus}(Y) \geq {\bf surplus}(S$) for every set $Y \subseteq S$.
Then there exists a minimum vertex cover for $G$, that contains all of $S$ or none of $S$. In particular, if $S$ is an independent set with the minimum 
surplus, then there exists a minimum vertex cover for $G$, that contains all of $S$ or none of $S$. 
\end{lemma}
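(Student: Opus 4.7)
The plan is to fix an arbitrary minimum vertex cover $C$ of $G$ and modify it, if necessary, into a minimum vertex cover that either contains all of $S$ or is disjoint from $S$. I will partition the relevant vertices as $A = C \cap S$, $B = S \setminus C$, $P = C \cap N(S)$, and $Q = N(S) \setminus C$. Two elementary observations drive the rest of the argument: since every neighbour of a vertex in $B$ must be covered by $C$ we have $N(B) \subseteq P$, and since every $S$-neighbour of a vertex in $Q$ must also lie in $C$ we have $N(Q) \cap S \subseteq A$.

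The key step is to introduce the auxiliary set $Y := S \setminus N(Q)$, consisting of the vertices of $S$ having no neighbour in $Q$. Case 1: if $Y = \emptyset$, then every $s \in S$ has some neighbour $q \in Q \subseteq V \setminus C$; had $s$ been in $B$, then the edge $sq$ would have both endpoints outside $C$, contradicting that $C$ is a vertex cover. Hence $B = \emptyset$ and $C$ itself already contains all of $S$. Case 2: if $Y \neq \emptyset$, then $Y$ is a nonempty independent subset of $S$ with $N(Y) \subseteq N(S) \setminus Q = P$, so the surplus hypothesis applied to $Y$ gives $|P| \geq |N(Y)| \geq |Y| + \mathrm{surplus}(S) = |Y| + |N(S)| - |S|$. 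Substituting $|Y| = |S| - |N(Q) \cap S|$ and $|P| = |N(S)| - |Q|$ rearranges this to $|N(Q) \cap S| \geq |Q|$, and combined with $N(Q) \cap S \subseteq A$ this yields $|A| \geq |Q|$.

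To finish Case 2 I set $C^* := (C \setminus A) \cup Q$: every edge incident to $S$ has its non-$S$ endpoint in $N(S) = P \cup Q \subseteq C^*$, while every edge inside $V \setminus S$ is covered in $C$ by some vertex of $C \setminus A \subseteq C^*$, so $C^*$ is a vertex cover; the count $|C^*| = |C| - |A| + |Q| \leq |C|$ combined with the minimality of $C$ shows $C^*$ is a minimum vertex cover, and by construction $C^* \cap S = \emptyset$. The ``in particular'' clause is then immediate, since for an independent set $S$ of minimum surplus over all independent sets of $G$ every nonempty independent subset $Y$ of $S$ satisfies $\mathrm{surplus}(Y) \geq \mathrm{surplus}(S)$ by definition. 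The main obstacle I anticipate is simply identifying the correct auxiliary set $Y$; once $Y = S \setminus N(Q)$ is chosen the surplus inequality is used exactly once, and the construction of $C^*$ together with the count $|A| \geq |Q|$ is essentially forced.
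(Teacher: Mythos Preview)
Your proof is correct. The paper does not actually prove this lemma---it is stated with citations to \cite{ChenKX10,NT2} and no proof is given---so there is nothing to compare against, but your argument stands on its own: the choice $Y = S \setminus N(Q)$ is exactly the right auxiliary set, the surplus hypothesis yields $|A| \ge |Q|$ as you derive, and the exchange $C^* = (C \setminus A) \cup Q$ then produces a minimum vertex cover disjoint from $S$.
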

\noindent
The following lemma, which handles without branching, the case when the minimum surplus of the graph is $1$, follows from the above lemma.

\begin{lemma}\label{lem:struction}{\sc \cite{ChenKX10,NT2}}
Let $G$ be a graph, and let $Z \subseteq V(G)$ be an independent set such that  ${\bf surplus}(Z)=1$ and for every $Y\subseteq Z$, ${\bf surplus}(Y) \geq {\bf surplus}(Z)$. Then,
\begin{enumerate}
\item If the graph induced by $N(Z)$ is not an independent
set, then there exists a minimum vertex cover in $G$ that
includes all of $N(Z)$ and excludes all of $Z$. 
\item If the graph induced by $N(Z)$ is an independent set, let
$G'$ be the graph obtained from $G$ by removing $Z \cup N (Z)$
and adding a vertex $z$, followed by making $z$ adjacent to every
vertex $v \in G \setminus (Z\cup N(Z))$ which was adjacent to a vertex in $N(Z)$ (also called \emph{identifying} the vertices of $N(Z)$).Then, $G$ has a vertex cover of size at most $k$ if and only if $G'$ has a vertex cover of size at most $k-|Z|$.  
\end{enumerate}
\end{lemma}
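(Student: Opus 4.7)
The plan is to ground both parts in Lemma~\ref{lem:allornonebranch} applied to $Z$: it supplies a minimum vertex cover $C$ of $G$ with either $Z \subseteq C$ or $Z \cap C = \emptyset$. From that starting point, each part reduces to a short swap argument exploiting $|N(Z)| = |Z|+1$.

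For part~1, assume $N(Z)$ contains an edge $(u,v)$. In the branch $Z \cap C = \emptyset$, every edge from $Z$ to $N(Z)$ forces $N(Z) \subseteq C$, so $C$ itself witnesses the statement. In the branch $Z \subseteq C$, I would form $C'' = (C \setminus Z) \cup N(Z)$; this is a vertex cover (edges inside $V \setminus (Z \cup N(Z))$ are covered by $C \setminus Z$, and every edge touching $Z \cup N(Z)$ is covered by $N(Z)$), and $|C''| = |C| + 1 - |C \cap N(Z)|$. The hypothesis $\{u,v\} \subseteq N(Z)$ guarantees $|C \cap N(Z)| \geq 1$, so $|C''| \leq |C|$, making $C''$ a minimum vertex cover with $Z \cap C'' = \emptyset$ and $N(Z) \subseteq C''$.

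For part~2, the plan is to establish $vc(G) = vc(G') + |Z|$. For $vc(G) \leq vc(G') + |Z|$, I would lift a vertex cover $C'$ of $G'$ by setting $C = (C' \setminus \{z\}) \cup N(Z)$ when $z \in C'$, and $C = C' \cup Z$ otherwise (the second being safe because $z \notin C'$ forces $N_{G'}(z) \subseteq C'$, so every outside-neighbor of $N(Z)$ already lies in $C'$); in both cases $|C| = |C'| + |Z|$ and $C$ covers every edge of $G$. For the reverse inequality, start from a minimum vertex cover $C$ of $G$ chosen via Lemma~\ref{lem:allornonebranch}. The easy case is $Z \cap C = \emptyset$: then $N(Z) \subseteq C$ and $C' := (C \setminus N(Z)) \cup \{z\}$ is a vertex cover of $G'$ of size $|C| - |Z|$.

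The main obstacle is the remaining case $Z \subseteq C$ under the independence of $N(Z)$. If $|C \cap N(Z)| \geq 1$, the swap from part~1 again yields a minimum vertex cover with $Z \cap C'' = \emptyset$, reducing to the easy case. The genuinely delicate subcase is $Z \subseteq C$ together with $C \cap N(Z) = \emptyset$: now replacing $Z$ by $N(Z)$ would strictly enlarge $C$, so the swap fails. I would resolve it by noting that, since $C$ misses all of $N(Z)$, every outside-neighbor of $N(Z)$ (i.e., every vertex of $N_{G'}(z)$) must already lie in $C$. Hence $C \setminus Z$, which is a subset of $V(G')$ because $C \cap N(Z) = \emptyset$, is itself a vertex cover of $G'$: edges avoiding $z$ are covered exactly as in $G$, while edges incident to $z$ are covered by $N_{G'}(z) \subseteq C \setminus Z$. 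Its size is $|C| - |Z|$, finishing the argument.
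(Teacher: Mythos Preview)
Your proposal is correct and follows precisely the route the paper indicates: the paper does not spell out a proof of this lemma but cites it and states that it ``follows from the above lemma'' (Lemma~\ref{lem:allornonebranch}), and your argument is exactly a clean derivation from that lemma via the swap $(C\setminus Z)\cup N(Z)$ together with $|N(Z)|=|Z|+1$. Your handling of the delicate subcase $Z\subseteq C$, $C\cap N(Z)=\emptyset$ in part~2 is the right observation and completes the details the paper omits.
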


%The Lemma~\ref{lem:struction} in the above form was stated in~\cite{}. However
\noindent
We now give two preprocessing rules to handle the case when the surplus of the graph is $1$. 
% First, using Lemma \ref{lem:compute surplus}, find a minimum surplus independent set $Z$ in $G$. 
\begin{reduction}
\label{red:edgeinneighbor} 
If there is a set $Z$ such that ${\bf surplus}(Z)=1$ and $N(Z)$ is not an independent set, then we apply Lemma~\ref{lem:struction} to reduce the instance as follows. Include $N(Z)$ in the vertex 
cover, delete $Z\cup N(Z)$ from the graph, and decrease $k$ by $\vert N(Z)\vert$. 
\end{reduction}

\begin{reduction}
\label{red:struction} 
If there is a set $Z$ such that ${\bf surplus}(Z)=1$ and the graph induced by $N(Z)$ is an independent set, then apply Lemma~\ref{lem:struction} to reduce the instance as follows.  
 Remove $Z$ from the graph, identify the vertices of $N(Z)$, and decrease $k$ by $\vert Z\vert$. 
\end{reduction}
\noindent
 The correctness of Preprocessing Rules~\ref{red:edgeinneighbor} and \ref{red:struction}  follows from Lemma~\ref{lem:struction}. The entire preprocessing phase of the algorithm is summarized in Figure~\ref{fig:reductionrules}. Recall that each preprocessing rule can be applied only when none of the preceding rules are applicable, and that Preprocessing rule~\ref{red:NT_reduction} is applicable if and only if there is a solution to LPVC($G$) which does not assign $\frac 1 2$ to every vertex. Hence, when Preprocessing Rule~\ref{red:NT_reduction} does not apply all $\frac 1 2$ is the unique solution for LPVC($G$). We now show that we can test whether Preprocessing Rules ~\ref{red:edgeinneighbor} and ~\ref{red:struction} are applicable on the current instance in polynomial time.
 
 \begin{lemma}\label{lem:test r2}
 Given an instance $(G,k)$ of {\vclp} on which Preprocessing Rule~\ref{red:NT_reduction} does not apply, we can test if Preprocessing Rule~\ref{red:edgeinneighbor} applies on this instance in polynomial time. 
 \end{lemma}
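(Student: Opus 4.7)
The plan is to reduce the applicability test for Preprocessing Rule~\ref{red:edgeinneighbor} to polynomially many calls of a polynomial-time subroutine that computes constrained minimum-surplus independent sets. The key reformulation is that Rule~\ref{red:edgeinneighbor} applies if and only if there exist an edge $ab \in E(G)$ and an independent set $Z \subseteq V(G) \setminus \{a, b\}$ with $Z \cap N_G(a) \neq \emptyset$, $Z \cap N_G(b) \neq \emptyset$, and ${\bf surplus}(Z) = 1$; these conditions are equivalent to saying that $ab$ is an edge in the subgraph induced by $N_G(Z)$ for some independent set $Z$ of surplus exactly one.

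First, since Rule~\ref{red:NT_reduction} is inapplicable, Lemma~\ref{lem:compute good set} guarantees ${\bf surplus}(G) \geq 1$. I would begin by computing ${\bf surplus}(G)$ via the classical reduction to maximum bipartite matching (equivalently, minimum $s$-$t$ cut) in the bipartite double cover $H$ of $G$: this auxiliary graph has two copies $V_L, V_R$ of $V$ with an edge $u_L v_R$ whenever $uv \in E(G)$, and the minimum of $|N_H(X)| - |X|$ over non-empty $X \subseteq V_L$ realizes the minimum surplus of $G$. If this value exceeds one, no independent set has surplus exactly one and the algorithm returns FALSE.

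Otherwise ${\bf surplus}(G) = 1$, and I would enumerate each edge $ab \in E(G)$ together with each pair $(u, v)$ with $u \in N_G(a) \setminus \{b\}$, $v \in N_G(b) \setminus \{a\}$ such that $\{u, v\}$ is independent in $G$ (allowing $u = v$). For every such tuple, a single call to the constrained subroutine tests whether $G$ admits an independent set $Z$ satisfying $\{u, v\} \subseteq Z$, $\{a, b\} \cap Z = \emptyset$, and ${\bf surplus}(Z) = 1$: this is implemented by modifying the bipartite flow network so as to force $u_L, v_L$ into the candidate independent-set side and to forbid $a_L, b_L$, then evaluating the resulting minimum Hall-deficiency. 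Preprocessing Rule~\ref{red:edgeinneighbor} applies if and only if some tuple returns YES. Since there are $O(m n^2)$ tuples and each subroutine call is polynomial, the overall test runs in polynomial time.

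The main obstacle is implementing the constrained minimum-surplus subroutine so that the three requirements---independence of $Z$, forced inclusion of $\{u, v\}$, and forced exclusion of $\{a, b\}$---are faithfully encoded by the capacities of the auxiliary flow network without introducing spurious infeasibility. This is a careful but routine extension of the Nemhauser--Trotter and Picard--Queyranne LP-to-min-cut translation invoked throughout this paper, and combined with the outer enumeration over edges and vertex pairs it yields the claimed polynomial-time test.
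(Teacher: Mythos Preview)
Your reformulation of when Rule~\ref{red:edgeinneighbor} applies is correct, and the outer plan of enumerating candidate edges and invoking a polynomial-time subroutine is sound in spirit. However, the central technical claim---that the minimum of $|N_H(X)|-|X|$ over nonempty $X\subseteq V_L$ in the bipartite double cover $H$ equals ${\bf surplus}(G)$---is false. In $H$ the minimiser $X$ need not correspond to an \emph{independent} set of $G$. For the triangle $G=K_3$ (on which Rule~\ref{red:NT_reduction} does not apply, since all~$\tfrac12$ is the unique LP optimum), taking $X=V_L$ gives $|N_H(X)|-|X|=3-3=0$, yet every nonempty independent set of $K_3$ is a singleton with surplus $1$, so ${\bf surplus}(G)=1$. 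The same defect contaminates your constrained subroutine: forcing $u_L,v_L\in X$ and $a_L,b_L\notin X$ still permits $X$ to correspond to a non-independent subset of $V(G)$, so the Hall deficiency you compute may undershoot the true constrained surplus over independent sets. You list ``independence of $Z$'' among the requirements to be encoded, but the encoding you sketch (pure neighbourhood-expansion in $H$) does not enforce it.

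The paper sidesteps this by working with LPVC itself rather than with raw bipartite surplus: the edge constraints $x(p)+x(q)\ge 1$ automatically force $V_0^{x}$ to be independent in any feasible half-integral solution. Concretely, the paper proves that a surplus-$1$ set $Z$ with $N(Z)$ non-independent exists if and only if some edge $(a,b)$ has the property that solving LPVC$(G)$ subject to $x(a)=x(b)=1$ yields optimum exactly $\tfrac12$ above the unconstrained optimum; the witnessing $Z$ is then read off as $V_0^{x'}$. This also eliminates your inner enumeration over neighbour pairs $(u,v)$: one constrained LP per edge suffices, so the test is both simpler and faster than the $O(mn^2)$-tuple scheme you propose.
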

 
 \begin{proof}
 We first prove the following claim.
 
 \begin{claim}
 The graph $G$ (in the statement of the lemma) contains a set $Z$ such that ${\bf surplus}(G)=1$ and $N(Z)$ is not independent if and only if there is an edge $(u,v)\in E$ such that solving LPVC($G$) with $x(u)=x(v)=1$ results in a solution with value exactly $\frac 1 2$ greater than the value of the original LPVC($G$).
 \end{claim}
 
 \begin{proof}
 Suppose there is an edge $(u,v)$ such that $w(x^\prime)=w(x)+{\frac 1 2}$ where $x$ is the solution to the original LPVC($G$) and $x^\prime$ is the solution to LPVC($G$) with $x^\prime(u)=x^\prime(v)=1$ and let $Z=V^{x^\prime}_0$. We claim that the set $Z$ is a set with surplus 1 and that $N(Z)$ is not independent. Since $N(Z)$ contains the vertices $u$ and $v$, $N(Z)$ is not an independent set. Now, since $x \equiv {\frac 1 2}$ (Preprocessing Rule~\ref{red:NT_reduction} does not apply), $w(x^\prime)= w(x)-{\frac 1 2}\vert Z\vert +{\frac 1 2}\vert N(Z)\vert=w(x)+{\frac 1 2}$. Hence, $\vert N(Z)\vert-\vert Z\vert={\bf surplus}(Z)=1$.
 
Conversely, suppose that there is a set $Z$ such that ${\bf surplus}(Z)=1$ and $N(Z)$ contains vertices $u$ and $v$ such that $(u,v)\in E$. Let $x^\prime$ be the assignment which assigns 0 to all vertices of $Z$, 1 to $N(Z)$ and $\frac 1 2$ to the rest of the vertices. Clearly, $x^\prime$ is a feasible assignment and $w(x^\prime)=\vert N(Z)\vert + {\frac 1 2}\vert V\setminus (Z\cup N(Z))\vert$. Since Preprocessing Rule~\ref{red:NT_reduction}  does not apply, $w(x^\prime)-w(x)=\vert N(Z)\vert -{\frac 1 2}(\vert Z\vert +\vert N(Z)\vert)={\frac 1 2}(\vert N(Z)\vert -\vert Z\vert)={\frac 1 2}$, which proves the converse part of the claim.
 
 \end{proof}

Given the above claim, we check if Preprocessing Rule~\ref{red:edgeinneighbor} applies by doing the following for every edge $(u,v$) in the graph.

Set $x(u)=x(v)=1$ and solve the resulting LP looking for a solution whose optimum value is exactly ${\frac 1 2}$ more than the optimum value of LPVC($G$).

\end{proof}

 \begin{lemma}\label{lem:test r3}
 Given an instance $(G,k)$ of {\vclp} on which Preprocessing Rules~\ref{red:NT_reduction}  and ~\ref{red:edgeinneighbor} do not apply, we can test if Preprocessing Rule~\ref{red:struction} applies on this instance in polynomial time. 
 \end{lemma}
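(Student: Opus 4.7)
The plan is to mirror the argument of Lemma~\ref{lem:test r2}. Since Preprocessing Rule~\ref{red:NT_reduction} does not apply, $x \equiv \frac{1}{2}$ is the unique optimum of LPVC($G$), so $w(x) = |V|/2$; and since Rule~\ref{red:edgeinneighbor} does not apply, any independent set $Z$ of surplus $1$ automatically has $N(Z)$ independent. I will establish the following characterization, which immediately yields a polynomial-time test by iterating over $v$: Preprocessing Rule~\ref{red:struction} applies if and only if there is a vertex $v \in V$ such that the LP obtained from LPVC($G$) by adding the constraint $x(v) = 0$ has optimum value exactly $w(x) + \frac{1}{2}$.

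For the forward direction, given an independent set $Z$ with ${\bf surplus}(Z) = 1$, I would pick any $v \in Z$ and exhibit the half-integral assignment $x'$ that is $0$ on $Z$, $1$ on $N(Z)$, and $\frac{1}{2}$ elsewhere. Independence of both $Z$ and $N(Z)$, together with the fact that every edge leaving $Z$ lands in $N(Z)$, shows $x'$ is feasible, and a direct count gives
\[
w(x') = |N(Z)| + \tfrac{1}{2}(|V| - |Z| - |N(Z)|) = w(x) + \tfrac{1}{2}(|N(Z)| - |Z|) = w(x) + \tfrac{1}{2}.
\]
Uniqueness of the all-$\frac{1}{2}$ optimum of LPVC($G$) forces any feasible solution with $x(v) = 0$ to have value strictly greater than $w(x)$, and half-integrality of the vertex cover LP pushes this excess to at least $\frac{1}{2}$; combined with the witness $x'$, the restricted optimum equals $w(x) + \frac{1}{2}$.

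For the converse, given $v$ with restricted LP optimum $w(x) + \frac{1}{2}$, I take a half-integral optimum $x'$ of the restricted LP, set $Z = \{u : x'(u) = 0\}$ and $N = \{u : x'(u) = 1\}$, and note that $Z$ is independent and contains $v$, while $N \supseteq N(Z)$ by feasibility. If some $w \in N \setminus N(Z)$ existed, then all neighbors of $w$ would satisfy $x'(u) \geq \frac{1}{2}$, so dropping $x'(w)$ to $\frac{1}{2}$ would preserve feasibility and decrease the value, contradicting optimality; hence $N = N(Z)$. Rewriting the value equation $|N(Z)| + \tfrac{1}{2}(|V| - |Z| - |N(Z)|) = w(x) + \tfrac{1}{2}$ yields $|N(Z)| - |Z| = 1$, i.e.\ ${\bf surplus}(Z) = 1$. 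Moreover, Lemma~\ref{lem:compute good set} together with the non-applicability of Rule~\ref{red:NT_reduction} gives ${\bf surplus}(G) \geq 1$, so the hypothesis ``${\bf surplus}(Y) \geq {\bf surplus}(Z)$ for every $Y \subseteq Z$'' required by Lemma~\ref{lem:struction} holds automatically, and Rule~\ref{red:struction} is applicable with witness $Z$.

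The main obstacle I anticipate is justifying the existence of a half-integral optimum for the constrained LP; this reduces cleanly to the classical Nemhauser--Trotter half-integrality by observing that the constraint $x(v) = 0$ forces $x(u) = 1$ for every $u \in N(v)$, so the restricted LP is equivalent to solving LPVC on $G - (\{v\} \cup N(v))$ with $|N(v)|$ added to the objective, where half-integrality is standard.
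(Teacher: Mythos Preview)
Your proposal is correct and follows essentially the same approach as the paper: both reduce the test to iterating over vertices $v$, solving LPVC($G$) with the extra constraint $x(v)=0$, and checking whether the optimum rises by exactly $\tfrac12$; both then recover $Z$ as the zero-set of a half-integral optimum of that restricted LP. Your write-up is in fact more careful than the paper's in a few places---you explicitly justify that $V^{x'}_1 = N(Z)$ via the optimality argument, that the restricted optimum is at least $w(x)+\tfrac12$ via uniqueness and half-integrality, that the restricted LP inherits half-integrality by reduction to LPVC on $G - (\{v\}\cup N(v))$, and that the subset-surplus hypothesis of Lemma~\ref{lem:struction} is automatic since ${\bf surplus}(G)\geq 1$---all of which the paper leaves implicit.
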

 
 \begin{proof}
 We first prove a claim analogous to that proved in the above lemma.
  \begin{claim}
 The graph $G$ (in the statement of the lemma) contains a set $Z$ such that ${\bf surplus}(G)=1$ and $N(Z)$ is independent if and only if there is a vertex $u\in V$ such that solving LPVC($G$) with $x(u)=0$ results in a solution with value exactly $\frac 1 2$ greater than the value of the original LPVC($G$).
 \end{claim}
 
 \begin{proof}
 Suppose there is a vertex $u$ such that $w(x^\prime)=w(x)+{\frac 1 2}$ where $x$ is the solution to the original LPVC($G$) and $x^\prime$ is the solution to LPVC($G$) with $x^\prime(u)=0$ and let $Z=V^{x^\prime}_0$. We claim that the set $Z$ is a set with surplus 1 and that $N(Z)$ is independent. Since $x \equiv {\frac 1 2}$ (Preprocessing Rule~\ref{red:NT_reduction} does not apply), $w(x^\prime)= w(x)-{\frac 1 2}\vert Z\vert +{\frac 1 2}\vert N(Z)\vert=w(x)+{\frac 1 2}$. Hence, $\vert N(Z)\vert-\vert Z\vert={\bf surplus}(Z)=1$. Since Preprocessing Rule~\ref{red:edgeinneighbor} does not apply, it must be the case that $N(Z)$ is independent.
 
Conversely, suppose that there is a set $Z$ such that ${\bf surplus}(Z)=1$ and $N(Z)$ is independent . Let $x^\prime$ be the assignment which assigns 0 to all vertices of $Z$ and 1 to all vertices of $N(Z)$ and $\frac 1 2$ to the rest of the vertices. Clearly, $x^\prime$ is a feasible assignment and $w(x^\prime)=\vert N(Z)\vert + {\frac 1 2}\vert V\setminus (Z\cup N(Z))\vert$. Since Preprocessing Rule~\ref{red:NT_reduction}  does not apply, $w(x^\prime)-w(x)=\vert N(Z)\vert -{\frac 1 2}(\vert Z\vert +\vert N(Z)\vert)={\frac 1 2}(\vert N(Z)\vert -\vert Z\vert)={\frac 1 2}$. This proves the converse part of the claim with $u$ being any vertex of $Z$. 
 
 \end{proof}

Given the above claim, we check if Preprocessing
Rule~\ref{red:struction} applies by doing the following for every vertex $u$ in the graph.

Set $x(u)=0$, solve the resulting LP and look for a solution whose optimum value exactly $\frac 1 2$ more than the optimum value of $LPVC(G)$.

\end{proof}

% As $Z$ is a minimum surplus set in $G$, for every $Y \subseteq Z$, {\bf surplus $(Y)$} $\geq$ {\bf surplus($Z$)}.

\begin{figure}[t]
\begin{center}
\begin{boxedminipage}{.9\textwidth}
{\small
The rules are applied in the order in which they are presented, that is, any rule is applied only when none of the earlier
rules are applicable.
\begin{description}

\item
{\bf Preprocessing rule} \ref{red:NT_reduction}: Apply Lemma~\ref{lem:compute good set} to compute an optimal solution $x$ to LPVC($G$) such that all 
$\frac{1}{2}$ is the unique optimum solution to LPVC($G[V^x_{1/2}]$). Delete the vertices in $V^x_0\cup V^x_1$ from the graph after including $V^x_1$ in the vertex cover we develop, and reduce $k$ by $\vert V^x_1\vert$.

%  Using Lemma \ref{lem:compute surplus}, find a minimum surplus independent set $Z$ in $G$. 
\item
{\bf Preprocessing rule \ref{red:edgeinneighbor}}: 
Apply Lemma~\ref{lem:test r2} to test if there is a set $Z$ such that ${\bf surplus}(Z)=1$ and $N(Z)$ is not an independent set. If such a set does exist, then we apply Lemma~\ref{lem:struction} to reduce the instance as follows. Include $N(Z)$ in the vertex 
cover, delete $Z\cup N(Z)$ from the graph, and decrease $k$ by $\vert N(Z)\vert$. 

\item
{\bf Preprocessing rule \ref{red:struction}}: 
Apply Lemma~\ref{lem:test r3} to test if there is a set $Z$ such that ${\bf surplus}(Z)=1$ and $N(Z)$ is an independent set. If there is such a set $Z$ then apply Lemma~\ref{lem:struction} to reduce the instance as follows.  
Remove $Z$ from the graph, identify the vertices of $N(Z)$, and decrease $k$ by $\vert Z\vert$. 
\end{description}

}
\end{boxedminipage}
\end{center}
\caption{\bf \em Preprocessing Steps}
\label{fig:reductionrules}
\end{figure}

\begin{define}
For a graph $G$, we denote by ${\cal R}(G)$ the graph obtained after applying 
Preprocessing Rules~\ref{red:NT_reduction},  \ref{red:edgeinneighbor} and \ref{red:struction} exhaustively in this order. 
\end{define}
\noindent
%After the exhaustive application of Preprocessing rules~\ref{red:NT_reduction},~\ref{red:edgeinneighbor} and~\ref{red:struction},  for the resulting graph, 

%%Daniel: Rewrote paragraph.

Strictly speaking ${\cal R}(G)$ is not a well defined function since the reduced graph could depend on which sets the reduction rules are applied on, and these sets, in turn, depend on the solution to the LP. To overcome this technicality we let ${\cal R}(G)$ be a function not only of the graph $G$ but also of the representation of $G$ in memory. Since our reduction rules are deterministic (and the LP solver we use as a black box is deterministic as well), running the reduction rules on (a specific representation of) $G$ will always result in the same graph, making the function ${\cal R}(G)$ well defined. Finally, observe that for any $G$ the all $\frac{1}{2}$ is the unique optimum solution to the LPVC(${\cal R}(G)$) and ${\cal R}(G)$ has a surplus of at least $2$.

%Note that, for any vertex $x$, ${\cal R}(G\setminus \{x\})$ depends on the sets of vertices on which each rule is applied. For the sake of uniformity, we fix an ordering of the vertices and the rules are applied by picking the \emph{lexicographically} smallest vertex (set) on which a rule is applicable. We also assume that we obtain the same solution when we repeatedly solve LPVC($G$). Thus, all $\frac{1}{2}$ is the unique optimum solution to the LPVC(${\cal R}(G)$) and ${\cal R}(G)$ has a surplus of at least $2$.

\subsection{Branching}
After the preprocessing rules are applied exhaustively, we pick an arbitrary vertex $u$ in the graph and branch on it. 
In other words, in one branch, we add $u$ into the vertex cover, decrease $k$ by 1, and delete $u$ from the graph, and in the other branch, we add $N(u)$ into the vertex cover, decrease $k$ by $\vert N(u)\vert$, and delete $\{u\}\cup N(u)$ from the graph. The correctness of this algorithm follows from the soundness of the preprocessing rules and the fact that the branching is exhaustive.

\subsection{Analysis}
\label{section:analysissimple}
In order to analyze the running time of our algorithm, we define a measure $\mu=\mu(G,k)=k-vc^*(G)$. We first show that our preprocessing rules do not increase this measure. Following this, we will prove a lower bound on the decrease in the measure occurring as a result of the branching, thus allowing us to bound the running time of the algorithm in terms of the measure $\mu$. For each case, we let $(G^\prime,k^\prime)$ be the instance resulting by the application of the rule or branch, and let $x^\prime$ be an optimum solution to LPVC($G^\prime$).

%Let $o^*(G|C)$ be the value of the optimim value of $LPR(G|C)$. When there is no confusion (with $G$ and $C$), we simply refer to it as $o^*$. It is clear that the size of the minimum vertex cover is at least $o^*$ and so if $o^* > k$, we are justified in Step $P1$ to return NO. Thus for the algorithm to continue $o^* \leq k$, and our measure of progress for the algorithm is $l= k - o^*$.

%In what follows, we argue that $l$ doesn't increase in the preprocessing steps and drops by at least $1$ in one branch and at least $\frac{1}{2}$ in another branch in the branching step. This results in a $(\frac{1}{2}, 1)$ branch resulting in a recurrence
%$$T(l) \leq T(l-\frac{1} {2}) + T(l-1)$$ which solves to 
%$(2.6181)^l = (2.6181)^{k-LPOPT}$. Thus we get a 
%$(2.6181)^{k-LPOPT}$ algorithm to test whether the given graph has a vertex cover of size $k$, where $LPOPT$ is the optimum solution for the LP relaxation of Vertex Cover.

\begin{enumerate}
\setlength{\itemsep}{-2pt}
\item

Consider the application of Preprocessing Rule~\ref{red:NT_reduction}.  We know that $k^\prime=k-\vert V^x_1\vert$. Since $x^\prime\equiv \frac{1}{2}$ is the unique optimum solution to LPVC($G^\prime$), and $G^\prime$ comprises precisely the vertices of $V^x_{1/2}$, the value of the optimum solution to LPVC($G^\prime$) is exactly $\vert V^x_1\vert$ less than that of $G$. Hence, $\mu(G,k)=\mu(G^\prime,k^\prime)$.

%all half is the unique optimum solution for the resulting $LP$ that is obtained after deleting variables corresponding to $Z$ and $O$. Thus the drop in $LP$ value is $|O|$ as the $LP$ optimum for the original graph had set $1$ to all vertices of $O$, $0$ to all vertices of $Z$ and $1/2$ to the remaining vertices. Thus $k-o^*$ doesn't change after step $P1$.

\item We now consider the application of Preprocessing Rule~\ref{red:edgeinneighbor}. We know that $N(Z)$ was not independent. In this case, 
$k^\prime=k-\vert N(Z)\vert$. We also know that 
$w(x^\prime)= \sum_{u \in V} x'(u) = w(x)-\frac{1}{2}(\vert Z\vert+\vert N(Z)\vert) +\frac{1}{2}(\vert V^{x^\prime}_1\vert-\vert V^{x^\prime}_0\vert)$.  
Adding and subtracting $\frac{1}{2}(\vert N(Z)\vert)$, we get $w(x^\prime)=w(x)-\vert N(Z)\vert +\frac{1}{2}(\vert N(Z)\vert-\vert Z\vert) +\frac{1}{2}(\vert V^{x^\prime}_1\vert-\vert V^{x^\prime}_0\vert)$.  But, $Z\cup V^{x^\prime}_0$ is an independent set in $G$, and $N(Z\cup V^{x^\prime}_0)=N(Z)\cup V^{x^\prime}_1$ in $G$. Since ${\bf surplus}(G)\geq 1$, $\vert N(Z\cup V^{x^\prime}_0)\vert -\vert Z\cup V^{x^\prime}_0\vert \geq 1$. Hence,
$w(x^\prime)=w(x)-\vert N(Z)\vert +\frac{1}{2}(\vert N(Z\cup V^{x^\prime}_0)\vert -\vert Z\cup V^{x^\prime}_0\vert)\geq w(x)- \vert N(Z)\vert +\frac{1}{2}$. Thus, $\mu(G^\prime,k^\prime)\leq \mu(G,k)-\frac{1}{2}$.

\item We now consider the application of Preprocessing Rule~\ref{red:struction}. We know that $N(Z)$ was independent. In this case, $k^\prime=k-\vert Z\vert$. We claim that $w(x^\prime)\geq w(x)-\vert Z\vert$. Suppose that this is not true. Then, it must be the case that $w(x^\prime)\leq w(x)-\vert Z\vert-\frac{1}{2} $. We will now consider three cases depending on the value $x^\prime(z)$ where $z$ is the vertex in $G^\prime$ resulting from the identification of $N(Z)$.

\noindent 
{\bf Case 1:} $x^\prime(z)=1$. Now consider the following function $x^{\prime\prime}:V\rightarrow \{0,\frac{1}{2},1\}$. For every vertex $v$ in $G^\prime\setminus \{z\}$, retain the value assigned by $x^\prime$, that is $x^{\prime\prime}(v)=x^\prime(v)$. For every vertex in $N(Z)$, assign 1 and for every vertex in $Z$, assign 0. Clearly this is a feasible solution. 
But now, $w(x^{\prime\prime})= w(x^\prime)-1+|N(Z)| = w(x^\prime)-1+(\vert Z\vert +1)\leq w(x)-\frac{1}{2}$. Hence, we have a feasible solution of value less than the optimum, which is a contradiction. 

\noindent 
{\bf Case 2:} $x^\prime(z)=0$. Now consider the following function $x^{\prime\prime}:V\rightarrow \{0,\frac{1}{2},1\}$. For every vertex $v$ in $G^\prime\setminus \{z\}$, retain the value assigned by $x^\prime$, that is $x^{\prime\prime}(v)=x^\prime(v)$. For every vertex in $Z$, assign 1 and for every vertex in $N(Z)$, assign 0. Clearly this is a feasible solution. 
But now, $w(x^{\prime\prime})= w(x^\prime)+\vert Z\vert\leq w(x)-\frac{1}{2}$. Hence, we have a feasible solution of value less than the optimum, which is a contradiction. 

\noindent 
{\bf Case 3:} $x^\prime(z)=\frac{1}{2}$. Now consider the following function $x^{\prime\prime}:V\rightarrow \{0,\frac{1}{2},1\}$. For every vertex $v$ in $G^\prime\setminus \{z\}$, retain the value assigned by $x^\prime$, that is $x^{\prime\prime}(v)=x^\prime(v)$. For every vertex in $Z\cup N(Z)$, assign $\frac{1}{2}$. Clearly this is a feasible solution.
But now, $w(x^{\prime\prime})= w(x^\prime)-\frac{1}{2}+\frac{1}{2}(\vert Z\vert +\vert N(Z)\vert)=w(x^\prime)-\frac{1}{2}+\frac{1}{2}(\vert Z\vert+\vert Z\vert +1)\leq w(x)-\frac{1}{2}$. Hence, we have a feasible solution of value less than the optimum, which is a contradiction.

Hence, $w(x^\prime)\geq w(x)-\vert Z\vert$, which implies that $\mu(G^\prime,k^\prime)\leq \mu(G,k)$.

%If $N(Z)$ is independent, then $k$ drops by $|Z|$. The value of the new $LP$ is at most $\frac{(|V| - |Z| - |N(Z)| + 1)}{2}$, which is at most $\frac{|V|}{2} - |Z|$ and so the $LP$ value also drops by at least $|Z|$ resulting in no increase in $k -o^*$.

%\end{enumerate}
\item 
We now consider the branching step.
\begin{enumerate}

\item Consider the case when we pick $u$ in the vertex cover. In this case, $k^\prime=k-1$. We claim that $w(x^\prime)\geq w(x)-\frac{1}{2}$. Suppose that this is not the case. Then, it must be the case that $w(x^\prime)\leq w(x)-1$. Consider the following assignment $x^{\prime\prime}:V\rightarrow \{0,\frac{1}{2},1\}$ to LPVC($G$). For every vertex $v\in V\setminus \{u\}$, set $x^{\prime\prime}(v)=x^\prime(v)$ and set $x^{\prime\prime}(u)=1$. Now, $x^{\prime\prime}$ is clearly a feasible solution and has a value at most that of $x$. But this contradicts our assumption that $x\equiv \frac{1}{2}$ is the unique optimum solution to LPVC($G$). Hence, $w(x^\prime)\geq w(x)-\frac{1}{2}$, which implies that $\mu(G^\prime,k^\prime)\leq \mu(G,k)-\frac{1}{2}$.

%For the branching step, when we set $x$ to $1$, $k$ drops by $1$, and the new $LP$ value can be at most $\frac{|V-1|}{2}$, which is at most $\frac{|V|}{2} - \frac{1}{2}$, resulting in a drop of at least $\frac{1}{2}$ in the value of $k-o^*$. 

\item Consider the case when we don't pick $u$ in the vertex cover. In this case, $k^\prime=k-\vert N(u)\vert$.
%When we set $x$ to $0$, $k$ doesn't change. Let $LP'$ be the minimum objective 
%value of $LPR(G|C)$ where $C$ forces $x$ to $0$. Let $Z$, $O$ and $H$ be the set of variables respectively set to $0$, $1$ and $\frac{1}{2}$ in the optimum solution of $LPR(G|C)$. As $x \in Z$, $Z \neq \emptyset$ and $N(Z) = O$ and $Z$ is an independent set.
%
%Then $LP' = \frac{|H|}{2} + |O| = \frac{|H|}{2} + |N(Z)| \geq \frac{(|H| + |Z| + |N(Z)|)}{2} + \frac{(|N(Z)| - |Z|)}{2} = \frac{|V|}{2} + \frac{(surplus(Z))}{2} \geq \frac{|V|}{2} + 1$ as $surplus(Z) \geq$ the minimum surplus, which is at least $2$ (if it is $1$, then step $P3$ would apply, and if it is at most $0$, then step $P1$ would apply).  Thus $k-o^*$ drops by at least $1$ in this branch, which is what we wanted to show.
We know that
$w(x^\prime)=w(x)-\frac{1}{2}(\vert \{u\}\vert+\vert N(u)\vert) +\frac{1}{2}(\vert V^{x^\prime}_1\vert-\vert V^{x^\prime}_0\vert)$.  Adding and subtracting $\frac{1}{2}(\vert N(u)\vert)$, 
we get $w(x^\prime)=w(x)-\vert N(u)\vert -\frac{1}{2}(\vert \{u\}\vert-\vert N(u)\vert) +\frac{1}{2}(\vert V^{x^\prime}_1\vert-\vert V^{x^\prime}_0\vert)$. 
But, $\{u\}\cup V^{x^\prime}_0$ is an independent set in $G$, and $N(\{u\}\cup V^{x^\prime}_0)=N(u)\cup V^{x^\prime}_1$ in $G$. Since 
${\bf surplus}(G)\geq 2$, $\vert N(\{u\}\cup V^{x^\prime}_0)\vert -\vert \{u\}\cup V^{x^\prime}_0\vert \geq 2$. 
Hence, $w(x^\prime)=w(x)-\vert N(u)\vert +\frac{1}{2}(\vert N(\{u\}\cup V^{x^\prime}_0)\vert -\vert \{u\}\cup V^{x^\prime}_0\vert)\geq w(x)- \vert N(u)\vert +1.$

Hence, $\mu(G^\prime,k^\prime)\leq \mu(G,k)-1$.
\end{enumerate}
\end{enumerate}
\noindent
We have thus shown that the preprocessing rules do not increase the measure $\mu=\mu(G,k)$ and the branching step results in a $(\frac{1}{2}, 1)$ branching vector, resulting in the recurrence $T(\mu) \leq T(\mu-\frac{1} {2}) + T(\mu-1)$  which solves to 
$(2.6181)^{\mu} = (2.6181)^{k-vc^*(G)}$. Thus, we get a 
$(2.6181)^{(k-vc^*(G))}$ algorithm for {\sc Vertex Cover above LP}. 

%to test whether the given graph has a vertex cover of size at most $k$, where $vc^*(G)$ is the value of an optimum solution to the LP relaxation of %Vertex Cover on $G$ and we have the following theorem.
%\begin{theorem}
%Given a graph $G$ and an integer $r$, we can determine whether $G$ has a vertex cover of size at most $r$ more than the value of the optimum of the $LP$ relaxation of the integer linear programming formulation of the vertex cover constraints, in $O^*({(2.6181)}^r)$ time.
%\end{theorem}

\begin{theorem}\label{thm:main theorem}
%There exists an algorithm solving 
{\sc Vertex Cover above LP} can be solved in time $O^*({(2.6181)}^{k-vc^*(G)})$.
%There is an algorithm that, given a graph $G$ and an integer $k$, determines whether $G$ has a vertex cover of size at most $k$ in time $O^*({(2.6181)}^{k-vc^*(G)})$.
\end{theorem}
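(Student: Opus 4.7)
The plan is to use the measure $\mu(G,k) = k - vc^*(G)$ and show that (i) each of the three preprocessing rules is measure non-increasing, and (ii) the final branching step yields a branching vector of $(\tfrac{1}{2}, 1)$, which gives the recurrence $T(\mu) \leq T(\mu - \tfrac{1}{2}) + T(\mu - 1)$ whose solution is $(2.6181)^\mu$. Combined with the polynomial-time testability of each rule (Lemmas~\ref{lem:test r2} and \ref{lem:test r3}), this bounds the total running time by $O^*((2.6181)^{k - vc^*(G)})$.

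For the preprocessing analysis, I would handle the three rules separately. Rule~\ref{red:NT_reduction} is easy: after removing $V_0^x \cup V_1^x$, the new LP optimum drops by exactly $|V_1^x|$, which matches the decrease in $k$, so $\mu$ is preserved. For Rule~\ref{red:edgeinneighbor}, the key idea is to compare the LP value of the new graph with the all-$\tfrac{1}{2}$ value on $V \setminus (Z \cup N(Z))$ and use the fact that $Z \cup V_0^{x'}$ forms an independent set whose surplus is $\geq 1$ by the preceding rule; this yields $w(x') \geq w(x) - |N(Z)| + \tfrac{1}{2}$, so $\mu$ drops by at least $\tfrac{1}{2}$. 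Rule~\ref{red:struction} is the subtle one: here we must show $w(x') \geq w(x) - |Z|$, and this is proved by contradiction via a case split on the value $x'(z) \in \{0, \tfrac{1}{2}, 1\}$ of the identified vertex $z$. In each of the three cases, one constructs an explicit feasible $x''$ for LPVC($G$) by lifting $x'$ back (assigning appropriate values to $Z$ and $N(Z)$) and derives $w(x'') < w(x)$, contradicting optimality of $x$.

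For the branching, we pick any vertex $u$ in the preprocessed graph (which has all-$\tfrac{1}{2}$ as unique LP optimum and surplus $\geq 2$). In the branch where $u$ enters the cover, $k$ drops by $1$; setting $x''(u) = 1$ and lifting $x'$ gives a feasible solution to LPVC($G$), and since the all-$\tfrac{1}{2}$ optimum is unique, any deviation of $w(x')$ below $w(x) - \tfrac{1}{2}$ contradicts this uniqueness; hence $\mu$ drops by at least $\tfrac{1}{2}$. In the branch where $N(u)$ enters the cover, $k$ drops by $|N(u)|$, and using that $\{u\} \cup V_0^{x'}$ is independent with $N(\{u\} \cup V_0^{x'}) = N(u) \cup V_1^{x'}$ and surplus $\geq 2$, one obtains $w(x') \geq w(x) - |N(u)| + 1$, so $\mu$ drops by at least $1$.

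The hard part, conceptually, is the branching analysis in the $N(u)$-branch: it is not obvious a priori that the measure drops at all (let alone by $1$), since we are removing many vertices from the graph and $vc^*$ could decrease correspondingly. The surplus-at-least-$2$ guarantee, which is itself the hard-won consequence of the three preprocessing rules being exhaustively applied, is exactly what makes this work. Once all these bounds are established, putting them together in the recurrence analysis is mechanical and yields the claimed running time.
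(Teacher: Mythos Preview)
Your proposal is correct and follows essentially the same approach as the paper: the same measure $\mu = k - vc^*(G)$, the same case-by-case analysis of the three preprocessing rules (including the three-way split on $x'(z)$ for Rule~\ref{red:struction}), and the same surplus-based argument for the $(\tfrac{1}{2},1)$ branching vector. Your identification of the $N(u)$-branch as the conceptually hard step, resolved precisely by the surplus-$\geq 2$ guarantee from exhaustive preprocessing, also mirrors the paper's emphasis.
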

\noindent
By applying the above theorem iteratively for increasing values of $k$, we can compute a minimum vertex cover of $G$ and hence we have the following corollary.

\begin{corollary}\label{cor:compute min vc}
There is an algorithm that, given a graph $G$, computes a minimum vertex cover of $G$ in time $O^*(2.6181^{(vc(G)-vc^*(G))})$.
\end{corollary}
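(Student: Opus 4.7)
The plan is to bootstrap Theorem~\ref{thm:main theorem}, which is a decision procedure, into a search procedure by running it on a sequence of increasing budgets, and then to argue that the total cost is dominated by the last call. First I would compute a half-integral optimum to LPVC($G$) in polynomial time (as guaranteed after Lemma~\ref{lem:classicalNT}), obtaining $vc^*(G)$. Set $k_0 = \lceil vc^*(G)\rceil$; since $vc(G)\ge vc^*(G)$ and $vc(G)$ is an integer, $k_0$ is a valid lower bound on $vc(G)$.

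Next, for $i=0,1,2,\ldots$, invoke the algorithm of Theorem~\ref{thm:main theorem} on the instance $(G,k_0+i)$. Stop at the first index $i^\star$ for which the answer is \Yes; by definition of $vc(G)$ this happens exactly when $k_0+i^\star = vc(G)$, so the procedure halts and we know the minimum vertex cover number. To recover the cover itself (and not just its size) I would instrument the branching algorithm so that, whenever it reports \Yes, it returns the accumulated set of vertices put into the cover along the successful root-to-leaf path: every preprocessing rule in Figure~\ref{fig:reductionrules} explicitly specifies which vertices are committed to the cover, and the branching step does as well, so this bookkeeping is routine and does not affect the asymptotic running time.

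For the running time, the $i$-th call runs in time $O^*(2.6181^{(k_0+i)-vc^*(G)})$ by Theorem~\ref{thm:main theorem}. Using $k_0 \le vc^*(G)+1$ and summing over $i = 0,1,\ldots,vc(G)-k_0$ gives
\[
\sum_{i=0}^{vc(G)-k_0} O^*\bigl(2.6181^{(k_0+i)-vc^*(G)}\bigr)
\;=\; O^*\bigl(2.6181^{vc(G)-vc^*(G)}\bigr),
\]
since this is a geometric progression dominated by its last term (the ratio $2.6181>1$ is a constant, so the sum is at most a constant multiple of the final summand, and constants are absorbed into the $O^*$ notation).

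I do not anticipate any serious obstacle: the only subtlety is that the decision algorithm must be adapted to return a witness, but this is immediate from the explicit nature of the preprocessing and branching rules. The analysis requires nothing beyond Theorem~\ref{thm:main theorem} and the geometric-series estimate.
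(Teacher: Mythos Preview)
Your proposal is correct and follows essentially the same approach as the paper, which simply states that one applies Theorem~\ref{thm:main theorem} iteratively for increasing values of $k$. You have filled in the standard details the paper leaves implicit (the starting point $k_0=\lceil vc^*(G)\rceil$, the geometric-series bound, and witness extraction), but the underlying argument is identical.
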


\section{Improved Algorithm for  {\sc Vertex Cover above LP}}
\label{sec:improvalgo}
In this section we give an improved algorithm for  {\sc Vertex Cover above LP} using some more branching steps based on the structure of the neighborhood of 
the vertex (set) on which we branch. The goal is to achieve branching vectors better that $({\frac 1 2},1)$.

\subsection{Some general claims to measure the drops}

%Let $\cal A$ be a set of independent sets of a graph $H$ then  
%\[ {\bf surplus}({\cal A})=\min_{I\in {\cal A}} |N(I)|-|I|.\]

% Let $S_v$ be a minimum surplus set of the graph $G$ containing the vertex $v$ and let 
% \[{\cal S}=\{ S_v~|~v\in V, {\bf surplus}(G)={\bf surplus}(S_v)\}. \]
% Let $S$ be a set of largest cardinality in $\cal S$. 
% {\bf This part of the algorithm needs to be fixed based on what we prove later ...}

First, we capture the drop in the measure in the branching steps, including when we branch on a larger sized sets. In particular, when we branch on a set $S$ of vertices, in one branch we set all vertices of $S$ to $1$, and in the other, we set all vertices of $S$ to $0$. 
Note, however that such a branching on $S$ may not be exhaustive (as the branching doesn't explore the possibility that some vertices of $S$ are set to $0$ and some are set to $1$) unless the set $S$ satisfies the premise of Lemma \ref{lem:allornonebranch}. Let $\mu=\mu(G,k)$ be the measure as defined in the previous section.

\begin{lemma}
\label{lemma:maindrop}
Let $G$ be a graph with ${\bf surplus}(G)=p$, and let $S$ be an independent 
set. Let ${\cal H}_S$ be the collection of all independent sets of $G$ that contain $S$ (including $S$).  Then, including $S$ in the 
vertex cover while branching leads to a decrease of $\min\{\frac{\vert S \vert}{2},\frac{p}{2}\}$ in $\mu$; 
and the branching excluding $S$ from the vertex cover leads to a drop of 
$\frac{{\bf surplus}({\cal H}_S)}{2}\geq \frac{p}{2}$ in $\mu$. 
\end{lemma}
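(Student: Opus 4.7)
The plan is to reformulate the drop in each branch as the increment in the LP value caused by fixing $x(s)=1$ for all $s\in S$ (inclusion) or $x(s)=0$ for all $s\in S$ (exclusion). With $G_1 = G-S$ and $G_0 = G-(S\cup N(S))$, the two drops equal $|S| - (vc^*(G) - vc^*(G_1))$ and $|N(S)| - (vc^*(G) - vc^*(G_0))$ respectively, so the task reduces to upper-bounding $vc^*(G) - vc^*(G_i)$. The case $p=0$ is trivial, so I would assume $p \geq 1$; then the crown/Nemhauser--Trotter structure inside any half-integral optimum $x^*$ of LPVC$(G)$ forces $V^{x^*}_0 = V^{x^*}_1 = \emptyset$, because $V^{x^*}_1$ matches into $V^{x^*}_0$ (by a Hall-type exchange argument, giving ${\bf surplus}(V^{x^*}_0) \leq 0$), which contradicts surplus $\geq 1$. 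Hence $x^* \equiv \tfrac{1}{2}$ and $vc^*(G) = |V(G)|/2$, the identity that drives the rest of the analysis.

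For the inclusion branch I would apply Lemma~\ref{lem:compute good set} to $G_1$ to obtain a half-integral optimum $x'$ with partition $A' = V^{x'}_0$, $B' = V^{x'}_1$, $C' = V^{x'}_{1/2}$, so that $vc^*(G_1) = (|B'| + |V(G_1)| - |A'|)/2$ after eliminating $|C'|$, and split on whether $A'$ is empty. If $A' = \emptyset$, then $vc^*(G_1) \geq |V(G_1)|/2$ directly yields drop $\geq |S|/2$. If $A' \neq \emptyset$, then $A'$ is independent in $G$ and the surplus hypothesis gives $|N_G(A')| \geq |A'| + p$; since $N_{G_1}(A') = N_G(A') \setminus S \subseteq B'$, we get $|B'| \geq |A'| + p - |S|$, which plugged into the formula for $vc^*(G_1)$ gives $vc^*(G) - vc^*(G_1) \leq |S| - p/2$, i.e., drop $\geq p/2$. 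In either case the drop is $\geq \min\{|S|/2,\, p/2\}$.

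The exclusion branch proceeds analogously, except that when $A' \neq \emptyset$ the natural independent set to work with is $S \cup A'$: it is independent in $G$ (because $A' \subseteq V \setminus (S \cup N(S))$ has no neighbor in $S$), contains $S$, and so belongs to ${\cal H}_S$ with surplus at least $q := {\bf surplus}({\cal H}_S)$. Checking that $N_G(A') \cap S = \emptyset$ lets $N_G(S \cup A')$ decompose as the disjoint union $N(S) \sqcup N_{G_0}(A')$, after which the surplus bound becomes $|B'| \geq |S| + |A'| + q - |N(S)|$; substituting into the drop formula yields drop $\geq q/2$. The subcase $A' = \emptyset$ gives $vc^*(G_0) \geq |V(G_0)|/2$, whence drop $\geq (|N(S)|-|S|)/2 = {\bf surplus}(S)/2 \geq q/2$, using that $S$ itself lies in ${\cal H}_S$. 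The main obstacle I anticipate is the neighborhood bookkeeping in the exclusion branch --- verifying that $S \cup A'$ really is an independent set in $G$ and that its $G$-neighborhood decomposes cleanly as $N(S) \sqcup N_{G_0}(A')$ --- once that is in hand, the surplus hypothesis supplies the required lower bound on $|B'|$ and the rest is algebra using $|A'| + |B'| + |C'| = |V(G_i)|$.
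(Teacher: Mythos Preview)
Your proof is correct and follows essentially the same approach as the paper's. Both arguments use that $vc^*(G)=|V|/2$ when ${\bf surplus}(G)\geq 1$, take a half-integral optimum $x'$ of the reduced graph, case-split on whether $V^{x'}_0$ is empty, and when it is nonempty apply the surplus hypothesis to $V^{x'}_0$ (inclusion branch) or to $S\cup V^{x'}_0$ (exclusion branch). Your neighbourhood bookkeeping for the exclusion branch is in fact more careful than the paper's: where the paper asserts $N(S\cup V^{x'}_0)=N(S)\cup V^{x'}_1$, only the containment $\subseteq$ is needed (and true), and you verify exactly that.
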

\begin{proof} Let $(G^\prime,k^\prime)$ be the instance resulting from the branching, and let $x^\prime$ be an optimum solution to LPVC($G^\prime$).
Consider the case when we pick $S$ in the vertex cover. In this case, $k^\prime=k-\vert S \vert$. We know that
$w(x^\prime)=w(x)-\frac{\vert S\vert}{2}+\frac{1}{2}(\vert V^{x^\prime}_1\vert-\vert V^{x^\prime}_0\vert)$. If 
$V^{x^\prime}_0=\emptyset$, then we know that $V^{x^\prime}_1=\emptyset$, and hence we have that $w(x^\prime)=w(x)-\frac{\vert S\vert}{2}$. Else, by 
adding and subtracting $\frac{1}{2}(\vert S \vert)$, we get 
$w(x^\prime)=w(x)-\vert S \vert +\frac{\vert S\vert}{2} +\frac{1}{2}(\vert V^{x^\prime}_1\vert-\vert V^{x^\prime}_0\vert)$. 
However, $N(V^{x^\prime}_0)\subseteq  S\cup V^{x^\prime}_1$ in $G$. 
Thus, $w(x^\prime) \geq w(x)-\vert S \vert  +\frac{1}{2}(\vert N(V^{x^\prime}_0)\vert-\vert V^{x^\prime}_0\vert)$. 
We also know that $V^{x^\prime}_0$ is an independent set in $G$, and thus,  
$\vert N(V^{x^\prime}_0)\vert -\vert V^{x^\prime}_0\vert \geq {\bf surplus}(G)=p$.  Hence, in the first case  
$\mu(G^\prime,k^\prime)\leq \mu(G,k)-\frac{\vert S\vert}{2}$ and in the second case $\mu(G^\prime,k^\prime)\leq \mu(G,k)-\frac{p}{2}$. 
Thus, the drop in the measure when $S$ is included in the vertex cover is at least $\min\{\frac{\vert S \vert}{2},\frac{p}{2}\}$. 

Consider the case when we don't pick $S$ in the vertex cover. In this case, $k^\prime=k-\vert N(S)\vert$. We know that
$w(x^\prime)=w(x)-\frac{1}{2}(\vert S\vert+\vert N(S)\vert) +\frac{1}{2}(\vert V^{x^\prime}_1\vert-\vert V^{x^\prime}_0\vert)$.  
Adding and subtracting $\frac{1}{2}(\vert N(S)\vert)$, 
we get $w(x^\prime)=w(x)-\vert N(S)\vert +\frac{1}{2}(\vert N(S)\vert-\vert S\vert) +\frac{1}{2}(\vert V^{x^\prime}_1\vert-\vert V^{x^\prime}_0\vert)$. 
But, $S\cup V^{x^\prime}_0$ is an independent set in $G$, and $N(S\cup V^{x^\prime}_0)=N(S)\cup V^{x^\prime}_1$ in $G$. Thus, 
$\vert N(S\cup V^{x^\prime}_0)\vert -\vert S\cup V^{x^\prime}_0\vert \geq {\bf surplus}({\cal H}_S)$.   
Hence, $w(x^\prime)=w(x)-\vert N(S)\vert +\frac{1}{2}(\vert N(S\cup V^{x^\prime}_0)\vert -\vert S\cup V^{x^\prime}_0\vert)\geq w(x)- \vert N(S)\vert +\frac{{\bf surplus}({\cal H}_S)}{2}$. Hence, $\mu(G^\prime,k^\prime)\leq \mu(G,k)-\frac{{\bf surplus}({\cal H}_S)}{2}$. 
\end{proof} 
%will drop in the recursive steps. We string together several known results around linear programming relaxation of {\sc Vertex Cover}  to obtain this algorithm for \textsc{Vertex Cover above LP}. Some of the results we use are classical Nemhauser-Trotter theorem and the properties of a ``minimum surplus set''.  
%We start with some 
%simple structural claims and then outline the algorithm in details. This is followed by several claims 
%that reflect the way the algorithm works. Finally, we use these structural insights to prove the claimed 
%running time bound on the algorithm. 
%Towards this we first introduce several new branching rules first. 
\noindent
Thus, after the preprocessing steps (when the surplus of the graph is at least $2$), suppose we manage to find (in polynomial time) a set $S \subseteq V$ such that 
\begin{itemize}
\item
{\bf surplus}$(G)$ = {\bf surplus}$(S)$ ={\bf surplus}$({\cal H}_S)$, 
\item
$|S| \geq 2$, and 
\item
that the branching that sets all of $S$ to $0$ or all of $S$ to $1$ is exhaustive. 
\end{itemize}
Then, Lemma \ref{lemma:maindrop} guarantees that branching on this set right away leads to a $(1,1)$ branching vector.
We now explore the cases in which such sets do exist . Note that the first condition above implies the third from the Lemma \ref{lem:allornonebranch}. First, we show that if there exists a set $S$ such that $|S| \geq 2$ and 
{\bf surplus} $(G)$ = {\bf surplus}$(S)$, then we can find such a set in polynomial time.

\begin{lemma}
\label{lem:largesurplus}
Let $G$ be a graph on which  Preprocessing Rule~\ref{red:NT_reduction} does not apply (i.e. all $\frac{1}{2}$ is the unique solution to LPVC(G)).
If $G$ has an independent set $S'$ such that $|S'|\geq 2$ and ${\bf surplus}(S')={\bf surplus}(G)$,  
then in polynomial time we can find an independent set $S$ such that $|S|\geq 2$ and ${\bf surplus}(S)={\bf surplus}(G)$. 
\end{lemma}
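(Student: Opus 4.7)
The plan is to adapt the LP-based tests from Lemmas~\ref{lem:test r2} and~\ref{lem:test r3} to search for a minimum-surplus independent set of size at least two. First I would compute $p:={\bf surplus}(G)$: for each vertex $v\in V(G)$, solve LPVC$(G)$ with the additional constraint $x(v)=0$, take a half-integral optimum, and observe (by essentially the same calculation as in the proofs of Lemmas~\ref{lem:test r2} and~\ref{lem:test r3}) that the value returned equals $vc^*(G)+p_v/2$, where $p_v$ is the minimum surplus over independent sets containing $v$. Since every independent set contains some vertex, $p=\min_v p_v$.

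Next, for every pair $u,v$ of non-adjacent vertices, I would solve LPVC$(G)$ subject to $x(u)=x(v)=0$ and extract a half-integral optimum $x'$. The set $Z:=V_0^{x'}$ is automatically independent (an edge between two zero-vertices would violate the LP constraint $x'(a)+x'(b)\geq 1$), contains both $u$ and $v$, and satisfies $N(Z)\subseteq V_1^{x'}$. A direct computation gives $w(x')=vc^*(G)+\tfrac{1}{2}(|V_1^{x'}|-|Z|)$, whence ${\bf surplus}(Z)=|N(Z)|-|Z|\leq |V_1^{x'}|-|Z|=2\bigl(w(x')-vc^*(G)\bigr)$. Consequently, whenever the LP optimum for the pair $(u,v)$ equals $vc^*(G)+p/2$, the set $Z$ is independent, has $|Z|\geq 2$, and attains surplus exactly $p$ (the lower bound $\geq p$ being forced by the definition of $p$); in that case I output $S:=Z$.

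To see that at least one pair must succeed under the hypothesis, take the promised $S'$ with $|S'|\geq 2$ and ${\bf surplus}(S')=p$ and pick any two vertices $u,v\in S'$, which are non-adjacent since $S'$ is independent. The assignment $x''$ that is $0$ on $S'$, $1$ on $N(S')$, and $\tfrac{1}{2}$ elsewhere is feasible for the constrained LP and has value exactly $vc^*(G)+p/2$; this matches the lower bound from the previous paragraph, so the LP optimum for this pair is precisely $vc^*(G)+p/2$ and the algorithm succeeds on it. The whole procedure performs $O(n^2)$ LP computations and thus runs in polynomial time. The main subtlety I would need to handle carefully is that the set $Z$ returned by the LP solver need not coincide with the hypothesised $S'$—it may be strictly larger or otherwise different—but the LP-value calculation above guarantees that its surplus is still equal to $p$, which is all that is required.
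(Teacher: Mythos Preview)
Your proposal is correct and follows essentially the same approach as the paper: for each non-adjacent pair $u,v$, solve LPVC$(G)$ with $x(u)=x(v)=0$, take $Z=V_0^{x'}$, and use the identity $w(x')=vc^*(G)+\tfrac12(|V_1^{x'}|-|Z|)$ together with a canonical feasible solution built from the hypothesised $S'$ to certify that some pair yields a $Z$ of minimum surplus. The only procedural difference is that you first compute $p={\bf surplus}(G)$ via single-vertex LPs and then test each pair against the target value $vc^*(G)+p/2$, whereas the paper skips this preliminary step and simply returns the $V_0^x$ of smallest surplus over all pairs; both variants are equivalent and polynomial.
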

\begin{proof}
By our assumption we know that $G$ has an independent set $S'$ such that $|S'|\geq 2$ and 
${\bf surplus}(S')={\bf surplus}(G)$. Let $u, v \in S'$.
Let ${\cal H}$ be the collection of all independent sets of $G$ containing $u$ and $v$. 
Let $x$ be an optimal solution to LPVC($G$) obtained after setting $x(u)=0$ and 
$x(v)=0$. Take $S=V_{0}^x$, clearly, we have that $\{u,v\}\subseteq V_0^x$.  
We now have  the following claim. 
\begin{claim}
\label{claim:computesuv}
{\bf surplus}$(S)$ = {\bf surplus}$(G)$.
\end{claim}
\begin{proof}
We know that the objective value of LPVC($G$) after setting $x(u)=x(v)=0$, 
$w(x) = |V|/2 + (|N(S)| - |S|)/2 = |V|/2 + {\bf surplus}(S)/2$, as
 all $\frac{1}{2}$ is the unique solution to LPVC($G$).

Another solution $x'$, for LPVC($G$) that sets $u$ and $v$ to $0$, is obtained by setting $x'(a) =0$ for every $a \in S'$, $x'(a) =1$ for every $a \in N(S')$ and by setting all other variables to $1/2$. It is easy to see that such a solution is a feasible solution of the required kind and $w(x') = |V|/2 + (|N(S')| - |S'|)/2 = |V|/2 + {\bf surplus} (S')/2$. However, as $x$ is also an optimum solution, $w(x)=w(x^\prime)$, and hence we have that ${\bf surplus (S)} \leq {\bf surplus (S')}$. But as $S'$ is a set of minimum surplus in $G$, we have that ${\bf surplus} (S) = {\bf surplus} (S') = {\bf surplus} (G)$ proving the claim.
\end{proof}
\noindent
Thus, 
% the proof of Claim~\ref{claim:computesuv} shows that 
we can find a such a set $S$ in polynomial time by solving
LPVC($G$) after setting $x(u)=0$ and $x(v)=0$ for every pair of vertices $u, v$ such that $(u, v) \notin E$ and picking 
% a set (corresponding to a pair) of vertices set to $0$ in such a LP solution, 
that set $V_0^x$
which has the minimum surplus among 
% the zero sets of the LP solutions corresponding to each pair. 
all $x's$ among all pairs $u,v$.
Since any $V_0^x$ contains at least 2 vertices, we have that $|S|\geq 2$.

\end{proof}

\subsection{(1,1) drops in the measure}
Lemma \ref{lemma:maindrop} and Lemma \ref{lem:largesurplus} together imply that, if there is a minimum surplus set of size at least $2$ in the graph, then we can find and branch on that set to get a $(1,1)$ drop in the measure. 

Suppose that there is no minimum surplus set of size more than $1$. Note that, by Lemma \ref{lemma:maindrop}, when ${\bf surplus} (G) \geq 2$, we get a drop of $({\bf surplus(G)})/2 \geq 1$ in the branch where we \emph{exclude} a vertex or a set. Hence, if we find some vertex (set) to exclude in either branch of a two way branching, we get a $(1,1)$ branching vector. We now identify another such case.

\begin{lemma}
Let $v$ be a vertex such that $G[N(v)\setminus \{u\}]$ is a clique for some neighbor $u$ of $v$. Then, there exists a minimum vertex cover 
that doesn't contain $v$ or doesn't contain $u$.
\end{lemma}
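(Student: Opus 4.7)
The plan is to argue by contradiction: assume every minimum vertex cover of $G$ contains both $u$ and $v$, fix one such cover $C$, and then produce either a strictly smaller vertex cover or an equally sized vertex cover that avoids $v$. Either outcome contradicts the standing assumption. The engine will be a local swap argument driven by the clique structure on $W:=N(v)\setminus\{u\}$.

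The first step is the simple observation that since $G[W]$ is a clique and $C$ must cover every edge inside it, $W\setminus C$ is an independent set inside a clique, and hence has size at most one. This forces one of two cases. In the first, $W\subseteq C$; together with $u\in C$ this gives $N(v)\subseteq C$, so every edge incident to $v$ is already covered by $N(v)\subseteq C\setminus\{v\}$, meaning $C\setminus\{v\}$ is a strictly smaller vertex cover --- contradicting minimality.

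In the remaining case there is a unique $w\in W$ with $w\notin C$, and I will swap $v$ for $w$: set $C':=(C\setminus\{v\})\cup\{w\}$, so $|C'|\le|C|$. To verify that $C'$ is a vertex cover, observe that every edge not incident to $v$ is already covered by $C\setminus\{v\}\subseteq C'$, while every edge incident to $v$ has its other endpoint in $N(v)=\{u\}\cup W$; here $u\in C'$ (since $u\neq v$), the vertices of $W\setminus\{w\}$ all lie in $C\setminus\{v\}\subseteq C'$ (using $v\notin W$ and that at most one vertex of $W$ lies outside $C$), and $w\in C'$ by construction. Thus $C'$ is a minimum vertex cover omitting $v$, contradicting the standing assumption. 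The argument contains no real obstacle; the only thing that needs care is the bookkeeping in the swap, which relies only on $u\neq v$, $v\notin N(v)$, and the clique property on $W$.
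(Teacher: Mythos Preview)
Your proof is correct and follows essentially the same approach as the paper: both arguments hinge on the swap $C\mapsto (C\setminus\{v\})\cup\{w\}$, where $w$ is the unique neighbor of $v$ missing from $C$, to produce a minimum vertex cover avoiding $v$. The paper factors this swap through a general claim (any minimum vertex cover contains all of $N(v)$ or at most $|N(v)|-2$ of its vertices) before invoking the clique structure, whereas you carry out the same swap directly inside a contradiction argument; the underlying mechanism is identical.
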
 
\begin{proof}
Towards the proof we first show the following well known observation. 
\begin{claim}
\label{claim:lembranch2}
 Let $G$ be a graph and $v$ be a vertex. Then there exists a a minimum vertex cover for $G$ containing 
$N(v)$ or at most $|N(v)|-2$ vertices from $N(v)$.
\end{claim}
\begin{proof}
If a minimum vertex cover of $G$, say $C$, contains exactly $|N(v)|-1$ 
vertices of $N(v)$, then we know that $C$ must contain $v$. Observe that $C^\prime=C\setminus \{v\} \cup N(v)$ is also a vertex cover of $G$ of the same size as $C$.
%can 
%construct another vertex cover of $G$ of same size as $C$ as follows: 
%$C\setminus \{v\} \cup N(v)$. 
However, in this case, we have a minimum vertex cover containing $N(v)$. 
%$N(v)$ is in 
%the set $C^\prime$. 
Thus, there exists a minimum vertex cover of $G$ containing 
$N(v)$ or at most $|N(v)|-2$ vertices from $N(v)$.
\end{proof}

Let $v$ be a vertex such that $G[N(v)\setminus \{u\}]$ is a clique. Then, branching 
on $v$ would imply that in one branch we are excluding $v$ from the vertex cover and in the other 
we are including $v$. Consider the branch where we include $v$ in the vertex cover.  Since $G[N(v)\setminus \{u\}]$ is a clique 
we have to pick at least $|N(v)|-2$ vertices from $G[N(v)\setminus \{u\}]$.  Hence, by Claim~\ref{claim:lembranch2}, we can assume that the vertex $u$ is not part of the vertex cover. 
%was part of the vertex cover, theThus by Claim~\ref{claim:lembranch2}, it implies 
%that we should only consider vertex covers that excludes $u$. 
This completes the proof. 
\end{proof}
\noindent
Next, in order to identify another case where we might obtain a $(1,1)$ branching vector, we first observe and capture the fact that 
when Preprocessing Rule~\ref{red:edgeinneighbor} is applied, the measure
$k-vc^*(G)$ actually drops by at least ${1 \over 2}$ (as proved in item 2 of the analysis of the simple algorithm in Section \ref{section:analysissimple}).
\begin{lemma}
\label{lem:prule2measuredrop}
Let $(G,k)$ be the input instance and  $(G^\prime,k^\prime)$ be the instance obtained after applying  
Preprocessing Rule~\ref{red:edgeinneighbor}. Then, $\mu(G^\prime,k^\prime)\leq \mu(G,k)-\frac{1}{2}$.
\end{lemma}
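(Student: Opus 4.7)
The plan is to reproduce (and slightly repackage) the computation from item~2 of Section~\ref{section:analysissimple}, since Lemma~\ref{lem:prule2measuredrop} is exactly the inequality proved there. Let $(G,k)$ be an instance on which Preprocessing Rule~\ref{red:NT_reduction} does not apply, so that $x \equiv \tfrac{1}{2}$ is the unique optimum of LPVC($G$) and ${\bf surplus}(G) \geq 1$. Let $Z$ be the independent set witnessing the applicability of Preprocessing Rule~\ref{red:edgeinneighbor}, so ${\bf surplus}(Z)=1$ and $N(Z)$ is not independent. The reduced instance $(G',k')$ satisfies $k'=k-|N(Z)|$ and $G'=G-(Z\cup N(Z))$. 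Let $x'$ be any optimum solution to LPVC($G'$).

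First I would express $vc^*(G')=w(x')$ in terms of $vc^*(G)=w(x)=|V|/2$. Since the removed vertices $Z\cup N(Z)$ contributed $\tfrac{1}{2}(|Z|+|N(Z)|)$ to $w(x)$, and the vertices in $G'$ contribute $w(x')$ in the new LP,
\[
w(x') \;=\; w(x) \;-\; \tfrac{1}{2}\bigl(|Z|+|N(Z)|\bigr) \;+\; \tfrac{1}{2}\bigl(|V^{x'}_1|-|V^{x'}_0|\bigr),
\]
where the last term accounts for the deviation of $x'$ from the all-$\tfrac{1}{2}$ assignment on $V(G')$. Rewriting by adding and subtracting $\tfrac{1}{2}|N(Z)|$ gives
\[
w(x') \;=\; w(x) - |N(Z)| + \tfrac{1}{2}\bigl(|N(Z)|+|V^{x'}_1|\bigr) - \tfrac{1}{2}\bigl(|Z|+|V^{x'}_0|\bigr).
\]

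The key structural step is to observe that $Z\cup V^{x'}_0$ is an independent set \emph{in $G$}: $Z$ is independent by hypothesis, $V^{x'}_0$ is independent in $G'$ (since $x'$ is a feasible LP solution), and no vertex of $V^{x'}_0$ can be adjacent in $G$ to a vertex of $Z$ because all neighbors of $Z$ lie in $N(Z)$, which has been deleted. Moreover, the neighborhood of $Z\cup V^{x'}_0$ in $G$ is exactly $N(Z)\cup V^{x'}_1$: vertices of $V^{x'}_1$ must cover every edge from $V^{x'}_0$ going into $V(G')$ (feasibility of $x'$ against any $x'(u)=0$), and the neighbors of $Z$ in $G$ are precisely $N(Z)$.

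Using ${\bf surplus}(G)\geq 1$, we get $|N(Z)\cup V^{x'}_1|-|Z\cup V^{x'}_0| \geq 1$, i.e.\ $\tfrac{1}{2}(|N(Z)|+|V^{x'}_1|) - \tfrac{1}{2}(|Z|+|V^{x'}_0|) \geq \tfrac{1}{2}$. Plugging into the displayed identity yields $w(x')\geq w(x)-|N(Z)|+\tfrac{1}{2}$, hence $vc^*(G')\geq vc^*(G)-|N(Z)|+\tfrac{1}{2}$. Combined with $k'=k-|N(Z)|$,
\[
\mu(G',k') \;=\; k'-vc^*(G') \;\leq\; \bigl(k-|N(Z)|\bigr) - \bigl(vc^*(G)-|N(Z)|+\tfrac{1}{2}\bigr) \;=\; \mu(G,k)-\tfrac{1}{2}.
\]
The only delicate point — and really the only thing that needs care — is verifying the independence and neighborhood identities for $Z\cup V^{x'}_0$ in $G$; once those are in place the arithmetic is immediate.
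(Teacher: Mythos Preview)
Your proposal is correct and follows essentially the same approach as the paper, which in fact explicitly points back to item~2 of Section~\ref{section:analysissimple} as the proof of this lemma. Your write-up is slightly more detailed in justifying why $Z\cup V^{x'}_0$ is independent in $G$ and why its $G$-neighborhood is contained in $N(Z)\cup V^{x'}_1$, but the computation and the use of ${\bf surplus}(G)\geq 1$ are identical to the paper's.
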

\noindent
Thus, after we branch on an arbitrary vertex, if we are able to apply Preprocessing Rule~\ref{red:edgeinneighbor} in the branch where we include that vertex, we get a $(1,1)$ drop. For, in the branch where we exclude the vertex, we get a drop of $1$ by Lemma~\ref{lemma:maindrop}, and in the branch where we include the vertex, we get a drop of $\frac{1}{2}$ by Lemma \ref{lemma:maindrop}, which is then followed by a drop of $\frac{1}{2}$ due to Lemma \ref{lem:prule2measuredrop}. 

Thus, after preprocessing, the algorithm performs the following steps (see Figure~\ref{fig:algoimprovedbranching}) each of which results in a $(1,1)$ drop as argued before. Note that Preprocessing Rule~\ref{red:NT_reduction} cannot apply in the graph $G\setminus \{v\}$ since the surplus of $G$ can drop by at most 1 by deleting a vertex. Hence, checking if rule {\bf B}\ref{bran3} applies is equivalent to checking if, for some vertex $v$, Preprocessing Rule~\ref{red:edgeinneighbor} applies in the graph $G\setminus \{v\}$. Recall that, by Lemma~\ref{lem:test r2} we can check this in polynomial time and hence we can check if {\bf B}\ref{bran3} applies on the graph in polynomial time.

\begin{figure}[h]
\begin{center}
\begin{boxedminipage}{.9\textwidth}
{\small

%\begin{description}
% \item[Preprocessing Step.] Apply Preprocessing Rules~\ref{red:NT_reduction}, \ref{red:edgeinneighbor} 
%and \ref{red:struction} in this order exhaustively on $G$. 
%\item[Connected Components.] 
%Apply the algorithm on connected components of $G$ separately.  Furthermore, 
%if a connected component is a Petersen  graph then solve the problem optimally in $O(1)$ time. 
%Furthermore, if any connected component 
%has size at most $11$, solve the problem optimally in $O(1)$ time. 
%\end{description}

\noindent 
{\bf Branching Rules.}\\
These branching rules are applied in this order. 

\begin{brule}
\label{bran1}
{\rm
Apply Lemma~\ref{lem:largesurplus} to test if there is a set $S$ such that \textbf{surplus}$(S)$=\textbf{surplus}$(G)$ and $|S|\geq 2$. If so, then branch on $S$. }
\end{brule}

\begin{brule}
\label{bran2}
{\rm
Let $v$ be a vertex such that $G[N(v)\setminus \{u\}]$ is a clique for some vertex $u$ in $N(v)$. Then in one branch 
add $N(v)$ into the vertex cover, decrease $k$ by $\vert N(v)\vert$, and 
delete $N[v]$ from the graph. In the other branch add $N(u)$ into the vertex cover, decrease $k$ by 
$\vert N(u)\vert$, and delete $N[u]$ from the graph.}
\end{brule}

\begin{brule}
\label{bran3}
{\rm
%and $G_v$ be the graph obtained after applying  Preprocessing Rules~\ref{red:NT_reduction}, \ref{red:edgeinneighbor} 
%and \ref{red:struction} exhaustively on $G\setminus v$. 
Apply Lemma~\ref{lem:test r2} to test if there is a vertex $v$ such that preprocessing Rule~\ref{red:edgeinneighbor} applies in $G\setminus \{v\}$. If there is such a vertex, then branch on $v$. }
\end{brule}
}
\end{boxedminipage}
\end{center}
\caption{\bf \em Outline of the branching steps yielding $(1,1)$ drop.}
\label{fig:algoimprovedbranching}
\end{figure}

\subsection{A Branching step yielding $(1/2, 3/2)$ drop}

Now, suppose none of the preprocessing and branching rules presented thus far apply.
Let $v$ be a vertex with degree at least $4$. Let $S=\{ v \}$ and recall that ${\cal H}_S$ 
is the collection of all independent sets containing $S$, and {\bf surplus} (${\cal H}_S$) 
is an independent set with minimum surplus in ${\cal H}_S$. We claim that ${\bf surplus}({\cal H}_S) \geq 3$.

As the preprocessing rules don't apply, clearly ${\bf surplus}({\cal H}_S) \geq {\bf surplus}(G) \geq 2$. If ${\bf surplus}({\cal H}_S) =2$, then the set that realizes ${\bf surplus}({\cal H}_S)$ is not $S$ (as the ${\bf surplus}(S) = degree (v) -1 =3$), but a superset of $S$, which is of cardinality at least $2$. Then, the branching rule {\bf B}\ref{bran1} would have applied which is a contradiction. This proves the claim. 
Hence by Lemma \ref{lemma:maindrop}, we get a drop of at least $3/2$ in the branch that excludes the vertex $v$ resulting in a $(1/2, 3/2)$ drop. This branching step is presented in Figure~\ref{fig:algoimprovedb4}.

\begin{figure}[h]
\begin{center}
\begin{boxedminipage}{.9\textwidth}
{\small

\begin{brule}
\label{bran4}
{\rm
If there exists a vertex $v$ of degree at least $4$ then branch on $v$. }
\end{brule}
 
}
\end{boxedminipage}
\end{center}
\caption{\bf \em The branching step yielding a $(1/2,3/2)$ drop.}
\label{fig:algoimprovedb4}
\end{figure}

\subsection{A Branching step yielding $(1, 3/2, 3/2)$ drop}

Next, we observe that when branching on a vertex, if in the branch that includes the vertex in the vertex cover (which guarantees a drop of $1/2$), any of the branching rules {\bf B}\ref{bran1} or {\bf B}\ref{bran2} or {\bf B}\ref{bran3} applies, then combining the subsequent branching with this branch of the current branching step results in a net drop of $(1, 3/2, 3/2)$ (which is $(1, 1/2+1, 1/2+1)$) (see Figure~\ref{fig:B5B6}~(a)). Thus, we add the following branching rule to the algorithm (Figure~\ref{fig:algoimprovedb5}).\\

\begin{figure}[t]
\begin{center}
\includegraphics[height=140 pt, width=300 pt]{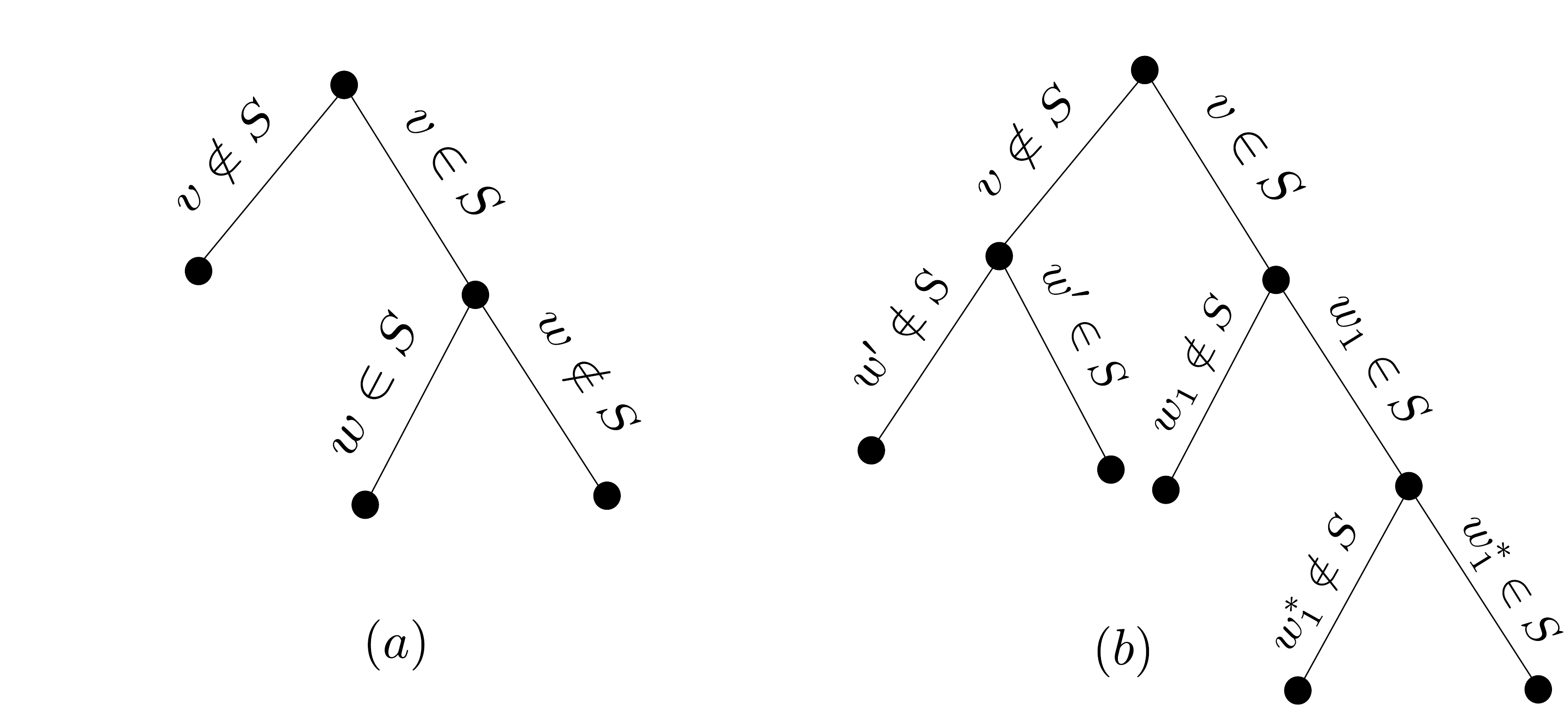}
\caption{Illustrations of the branches of rules (a) {\bf B}\ref{bran5} and (b) {\bf B}\ref{bran6} }
\label{fig:B5B6}
\end{center}
\end{figure}

%% Daniel: I force "H" instead of "h" since b6 ended up too far away.
\begin{figure}[H]
\begin{center}
\begin{boxedminipage}{.9\textwidth}
{\small

\begin{brule}
\label{bran5}
{\rm
Let $v$ be a vertex. 
%and $G_v$ be the graph obtained after applying  Preprocessing Rules~\ref{red:NT_reduction}, \ref{red:edgeinneighbor} 
%and \ref{red:struction} exhaustively on $G\setminus v$. 
If {\bf B}\ref{bran1} applies in ${\cal R}{(G\setminus\{v\})}$ or there exists a vertex
$w$ in ${\cal R}(G\setminus\{v\})$ on which either 
{\bf B}\ref{bran2} or {\bf B}\ref{bran3} applies then branch on $v$. }
\end{brule}
}
\end{boxedminipage}
\end{center}
\caption{\bf \em The branching step yielding a $(1,3/2,3/2)$ drop.}
\label{fig:algoimprovedb5}
\end{figure}

\subsection{The Final branching step}

Finally, if the preprocessing and the branching rules presented thus far do not apply, then note that we are left with a 3-regular graph. In this case, we simply pick a vertex $v$ and branch. 
However, we execute the branching step more carefully in order to simplify the analysis of the drop. 
More precisely, we execute the following step at the end.

\begin{figure}[H]
\begin{center}
\begin{boxedminipage}{.9\textwidth}
{\small
\begin{brule}
\label{bran6}
{\rm
Pick an arbitrary degree $3$ vertex $v$ in $G$ and  let $x$, $y$ and $z$ be the neighbors of $v$.  Then in one branch 
add $v$ into the vertex cover, decrease $k$ by $1$, and 
delete $v$ from the graph. The other branch that excludes $v$ from the 
vertex cover, is performed as follows. 
%First of all we know that  $x$, $y$ and $z$  must be in the vertex cover.   
Delete $x$ from the graph, decrease $k$ by $1$, and obtain ${\cal R}(G\setminus \{x\})$. 
%$G_x$ by applying  Preprocessing Rules~\ref{red:NT_reduction}, \ref{red:edgeinneighbor} 
%and \ref{red:struction} exhaustively on $G\setminus x$ in this order. 
During the process of obtaining ${\cal R}(G\setminus\{x\})$, preprocessing rule~\ref{red:struction} would have been applied on vertices $y$ and $z$ to obtain a `merged' vertex $v_{yz}$ (see proof of correctness of this rule).
Now delete $v_{yz}$ from the graph  ${\cal R}(G\setminus \{x\})$,
 and decrease $k$ by $1$.} 

\end{brule}
}
\end{boxedminipage}
\end{center}

\caption{\bf \em Outline of the last step.}
\label{fig:algoimproved}
\end{figure}

\subsection{Complete Algorithm and Correctness}
%\subsection{Description of Algorithms}
%If a vertex set $S$ is identified such that either there is a minimum vertex
%cover containing the whole of $S$ or there is a minimum vertex cover containing no
%vertex in $S$, then we can {\em branch on the set $S$}. This means that the algorithm
%constructs two instances of the problem, one by including the set $S$ in the
%partial cover and the other by excluding the set $S$ from the partial cover, and in
%the latter case, every vertex that is adjacent to a vertex in $S$ should be included
%in the partial cover. In other words, in one branch, we add $S$ into the vertex cover, 
%decrease $k$ by $|S|$, and delete $S$ from the graph, and in the other branch, 
%we add $N(S)$ into the vertex cover, decrease $k$ by $\vert N(S)\vert$, and 
%delete $S\cup N(S)$ from the graph. The algorithm then recurses on these two reduced
%instances. If the set $S$ consists of a single vertex $v$, then we simply say we 
%{\em branch on $v$}.
%
%
%

%%
%
%
%
\begin{figure}[t]
\begin{center}
\begin{boxedminipage}{.9\textwidth}
{\small

\begin{description}
 \item[Preprocessing Step.] Apply Preprocessing Rules~\ref{red:NT_reduction}, \ref{red:edgeinneighbor} 
and \ref{red:struction} in this order exhaustively on $G$. 
\item[Connected Components.] 
Apply the algorithm on connected components of $G$ separately.  Furthermore, 
if a connected component has size at most 10, then solve the problem optimally in $O(1)$ time. 
%Furthermore, if any connected component 
%has size at most $11$, solve the problem optimally in $O(1)$ time. 
\end{description}

\noindent 
{\bf Branching Rules.}\\
These branching rules are applied in this order.\\ 

{\bf B}\ref{bran1}
{\rm
If there is a set $S$ such that \textbf{surplus}$(S)$=\textbf{surplus}$(G)$ and $|S|\geq 2$, then branch on $S$. }\\
% \vspace{5 pt}

{\bf B}\ref{bran2}
{\rm
Let $v$ be a vertex such that $G[N(v)\setminus \{u\}]$ is a clique for some vertex $u$ in $N(v)$. Then in one branch 
add $N(v)$ into the vertex cover, decrease $k$ by $\vert N(v)\vert$, and 
delete $N[v]$ from the graph. In the other branch add $N(u)$ into the vertex cover, decrease $k$ by 
$\vert N(u)\vert$, and delete $N[u]$ from the graph.}\\

{\bf B}\ref{bran3}
{\rm
Let $v$ be a vertex. 
%and $G_v$ be the graph obtained after applying  Preprocessing Rules~\ref{red:NT_reduction}, \ref{red:edgeinneighbor} 
%and \ref{red:struction} exhaustively on $G\setminus v$. 
If Preprocessing Rule~\ref{red:edgeinneighbor} can be applied to obtain ${\cal R}(G\setminus \{v\})$ from $G\setminus \{v\}$, then 
branch on $v$. }\\

{\bf B}\ref{bran4}
{\rm
If there exists a vertex $v$ of degree at least $4$ then branch on $v$. }\\
 
{\bf B}\ref{bran5}
{\rm
Let $v$ be a vertex. 
%and $G_v$ be the graph obtained after applying  Preprocessing Rules~\ref{red:NT_reduction}, \ref{red:edgeinneighbor} 
%and \ref{red:struction} exhaustively on $G\setminus v$. 
If {\bf B}\ref{bran1} applies in 
${\cal R}(G\setminus\{v\})$ or if there
exists a vertex $w$ in ${\cal R}(G\setminus\{v\})$ on which
{\bf B}\ref{bran2} or {\bf B}\ref{bran3} applies then branch on $v$. }\\

{\bf B}\ref{bran6}
{\rm
Pick an arbitrary degree $3$ vertex $v$ in $G$ and  let $x$, $y$ and $z$ be the neighbors of $v$.  Then in one branch 
add $v$ into the vertex cover, decrease $k$ by $1$, and 
delete $v$ from the graph. The other branch, that excludes $v$ from the 
vertex cover, is performed as follows. 
%First of all we know that  $x$, $y$ and $z$  must be in the vertex cover.   
Delete $x$ from the graph, decrease $k$ by $1$, and obtain ${\cal R}(G\setminus \{x\})$. 
%$G_x$ by applying  Preprocessing Rules~\ref{red:NT_reduction}, \ref{red:edgeinneighbor} 
%and \ref{red:struction} exhaustively on $G\setminus x$ in this order. 
 Now, delete $v_{yz}$ from the graph  ${\cal R}(G\setminus \{x\})$, the vertex that has been created 
by the application of Preprocessing Rule~\ref{red:struction} on $v$ while obtaining ${\cal R}(G\setminus \{x\})$
 and decrease $k$ by $1$.} 

}
\end{boxedminipage}
\end{center}
\caption{\bf \em Outline of the Complete algorithm.}
\label{fig:algoimproved}
\end{figure}

%\paragraph{Algorithm and its Correctness.}  
% After the preprocessing rules are applied exhaustively in order of their appearance (that is, until neither of the 
% rules apply), we apply the branching rules in the order of preference. In other words, we apply 
% branching rule {\bf B j} only if we can not find an opportunity to apply branching rule {\bf B i}  for 
% $i<j$ in polynomial time. 
A detailed outline of the algorithm is given in Figure \ref{fig:algoimproved}. 
Note that we have already argued the correctness and analyzed the drops of all steps except the last step, {\bf B}\ref{bran6}.

% We note that except rules {\bf B}\ref{bran2} and {\bf B~\ref{bran6}}, the remaining branching rules are obviously exhaustive and hence correct. Hence, we only need to prove the correctness of these two rules.
%We start with the correctness of branching rule {\bf B~\ref{bran2}}.  

\noindent
The correctness of this branching rule will follow from the fact that  ${\cal R}(G\setminus \{x\})$ is obtained by applying  Preprocesssing Rule~\ref{red:struction} alone and that too only on the neighbors of $x$, that is, on the degree $2$ 
vertices of $G\setminus \{x\}$ (Lemma~\ref{lem:mainlemma}).  Lemma~\ref{lem:degree3branching} (to appear later) 
shows the correctness of deleting $v_{yz}$ from the graph  ${\cal R}(G\setminus \{x\})$ without branching. 
Thus, the correctness of this algorithm follows from the soundness of the preprocessing rules and the fact that the branching is exhaustive.

\begin{figure}

\begin{center}
    \begin{tabular}{ccccccc}
    \hline
    Rule & {\bf B}\ref{bran1} & {\bf B}\ref{bran2} & {\bf B}\ref{bran3} & {\bf B}\ref{bran4}  & {\bf B}\ref{bran5} & {\bf B}\ref{bran6}\\ \hline \\
    Branching Vector & (1,1) &(1,1)& (1,1) &($\frac{1}{2},\frac{3}{2}$) & ($\frac{3}{2},\frac{3}{2},1$) & ($\frac{3}{2},\frac{3}{2},\frac{5}{2},\frac{5}{2},2$)   \\  \\\hline
    Running time & $2^\mu$ & $2^\mu$ & $2^\mu$ &  $2.1479^\mu$ &  $2.3146^\mu$ & $2.3146^\mu$  \\
    \hline
    \end{tabular}
\end{center}
\caption{A table giving the decrease in the measure due to each branching rule.}
\label{fig:droptable}
\end{figure}

The running time will be dominated by the way {\bf B}\ref{bran6} and the subsequent branching apply. 
We will see that {\bf B}\ref{bran6} is our most expensive branching rule. In fact, 
this step dominates the runtime of the algorithm of $O^*(2.3146^{\mu(G,k)})$ 
due to a branching vector of $(3/2, 3/2, 5/2, 5/2, 2)$. We will argue that when we apply {\bf B}\ref{bran6} on a vertex, say $v$, then 
on either side of the branch we will be able to branch using rules {\bf B}\ref{bran1}, or {\bf B}\ref{bran2}, or {\bf B}\ref{bran3} or 
{\bf B}\ref{bran4}. More precisely, we show that in the branch where 
we include $v$ in the vertex cover, 
\begin{itemize}
\item there is a vertex of degree $4$ in ${\cal R}(G\setminus \{v\})$. Thus,  
{\bf B}\ref{bran4} will apply on the graph ${\cal R}(G\setminus \{v\})$ ( if any of the earlier branching rules applied in this graph, then rule {\bf B}\ref{bran5} would have applied on $G$). 
\item ${\cal R}(G\setminus \{v\})$ has a degree $4$ 
vertex $w$ such that there is a vertex of degree $4$ in the graph ${\cal R}({\cal R}(G\setminus \{v\})\setminus \{w\})$ and thus one of the branching rules  {\bf B}\ref{bran1}, {\bf B}\ref{bran2}, {\bf B}\ref{bran3} or 
{\bf B}\ref{bran4} applies on the graph ${\cal R}({\cal R}(G\setminus \{v\})\setminus \{w\})$. 

\end{itemize} 
Similarly, in the branch where we exclude the vertex $v$ from the solution (and add the vertices $x$ and $v_{yz} $ into the vertex cover), 
we will show that a degree $4$ vertex remains in the reduced graph. This yields the claimed branching vector (see Figure~\ref{fig:droptable}). The rest of the section is geared towards showing this.

%%{\bf NEWLY ADDED PARAGRAPH }\\
\noindent
We start with the following definition. 
\begin{define}
 We say that a graph $G$ is {\em irreducible} if Preprocessing Rules~\ref{red:NT_reduction},  \ref{red:edgeinneighbor} 
and \ref{red:struction} and the branching rules {\bf B}\ref{bran1}, {\bf B}\ref{bran2}, {\bf B}\ref{bran3}, {\bf B}\ref{bran4} 
and {\bf B}\ref{bran5} do not apply on $G$. 
%For a graph $G$ by ${\cal R}(G)$ we denote the graph obtained after applying 
%Preprocessing Rules~\ref{red:NT_reduction},  \ref{red:edgeinneighbor} and \ref{red:struction} exhaustively in this order. 
\end{define}
\noindent
Observe that when we apply {\bf B}\ref{bran6}, the current graph is $3$-regular. Thus, after we delete a vertex $v$ from the graph $G$ and apply Preprocessing Rule~\ref{red:struction} we will get a degree $4$ vertex. Our goal is to identify conditions that ensure that the degree $4$ vertices we 
obtain by applying Preprocessing Rule~\ref{red:struction} survive in the graph ${\cal R}(G\setminus \{v\})$. We prove the existence of degree $4$ 
vertices in subsequent branches after applying {\bf B}\ref{bran6} as follows. 
\begin{itemize}
\item We do a closer study of the way
%To get a handle on the 
%we need to understand the way 
Preprocessing Rules~\ref{red:NT_reduction}, \ref{red:edgeinneighbor} and \ref{red:struction} apply 
on $G\setminus \{v\}$ if Preprocessing Rules~\ref{red:NT_reduction},  \ref{red:edgeinneighbor} 
and \ref{red:struction} and the branching rules {\bf B}\ref{bran1}, {\bf B}\ref{bran2} and {\bf B}\ref{bran3} 
do not apply on $G$.  Based on our observations, we prove some structural properties of the graph ${\cal R}(G\setminus \{v\})$, 
 This is achieved by Lemma~\ref{lem:mainlemma}. 

\item Next, we show that Lemma~\ref{lem:mainlemma}, along with the fact that the graph is irreducible implies a lower bound of 7 on the length of the shortest cycle in the graph (Lemma~\ref{lem:girthlemma}). This lemma allows us to argue that when the 
preprocessing rules are applied, their effect is local. 
\item Finally, Lemmas~\ref{lem:mainlemma} and \ref{lem:girthlemma} together ensure the presence of the required number of degree $4$ vertices in the subsequent branching 
(Lemma~\ref{lem:leftside}).  
\end{itemize}

\subsubsection{Main Structural Lemmas: Lemmas~\ref{lem:mainlemma} and \ref{lem:girthlemma}}
%\noindent

We start with 
some simple well known observations that we use repeatedly in this section. These observations 
follow from results in~\cite{NT2}. We give proofs for completeness.

\begin{lemma}
%[\cite{}]
\label{lem:ntequiv}
Let $G$ be an undirected graph, then the following are equivalent.
\begin{enumerate}[(1)]
\setlength{\itemsep}{-2pt}
\item  Preprocessing Rule~\ref{red:NT_reduction} applies (i.e. All $\frac{1}{2}$ is not the unique solution to the LPVC($G$).)
\item There exists an independent set $I$ of $G$ such that ${\bf surplus}(I)\leq  0$. 
\item There exists an optimal solution $x$ to LPVC($G$) that assigns $0$ to some vertex. 
\end{enumerate}
\end{lemma}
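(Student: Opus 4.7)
The plan is to prove a cyclic chain of implications $(1) \Rightarrow (3) \Rightarrow (2) \Rightarrow (1)$, using only half-integrality of optimal LP solutions and direct feasibility arguments.

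For $(1) \Rightarrow (3)$: The hypothesis says all-$\frac{1}{2}$ is not the unique optimum of LPVC($G$). Since a half-integral optimum always exists, there is some half-integral optimum $x \neq$ all-$\frac{1}{2}$. Both optima have the same value $w(x) = |V|/2$, so writing $w(x) = |V_1^x| + \frac{1}{2}|V_{1/2}^x|$ and $|V| = |V_0^x| + |V_{1/2}^x| + |V_1^x|$ and equating, one gets $|V_1^x| = |V_0^x|$. Since $x$ differs from all-$\frac{1}{2}$, at least one of $V_0^x, V_1^x$ is nonempty, and then by the equality both are nonempty. In particular $x$ assigns $0$ to some vertex.

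For $(3) \Rightarrow (2)$: Let $x$ be an optimum of LPVC($G$) with $V_0^x \neq \emptyset$, and set $I := V_0^x$. Feasibility forces $N(I) \subseteq V_1^x$, and in particular $I$ is independent. Now construct a feasible $x'$ by changing the values on $I \cup N(I)$ to $\frac{1}{2}$ (and leaving the rest of $x$ alone); feasibility holds because all edges incident to $I$ go into $N(I)$, and the remaining edges are unchanged. Then $w(x') - w(x) = \tfrac{1}{2}|I| + \tfrac{1}{2}|N(I)| - |N(I)| = \tfrac{1}{2}(|I| - |N(I)|)$. Optimality of $x$ gives $w(x') \geq w(x)$, hence $|N(I)| \leq |I|$, i.e. $\mathbf{surplus}(I) \leq 0$.

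For $(2) \Rightarrow (1)$: Given an independent set $I$ with $\mathbf{surplus}(I) \leq 0$, define $x'$ by $x'(v) = 0$ on $I$, $x'(v) = 1$ on $N(I)$, and $x'(v) = \frac{1}{2}$ elsewhere. This is feasible since $I$ is independent and every edge touching $I$ has its other endpoint in $N(I)$. A direct calculation gives
\[
w(x') = |N(I)| + \tfrac{1}{2}\bigl(|V| - |I| - |N(I)|\bigr) = \tfrac{|V|}{2} + \tfrac{1}{2}\bigl(|N(I)| - |I|\bigr) \leq \tfrac{|V|}{2}.
\]
Since the all-$\frac{1}{2}$ assignment has value exactly $|V|/2$ and $x' \neq$ all-$\frac{1}{2}$, either $x'$ strictly beats the all-$\frac{1}{2}$ solution (when $\mathbf{surplus}(I) < 0$) or it provides a second optimum of the same value (when $\mathbf{surplus}(I) = 0$). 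In either case all-$\frac{1}{2}$ is not the unique optimum, so Preprocessing Rule~\ref{red:NT_reduction} applies.

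There is no real obstacle here; the only subtle point is the counting step in $(1) \Rightarrow (3)$, which pins down that the two half-integral optima must differ in both a $0$-valued and a $1$-valued vertex. All three arguments rely on the same template of comparing the objective of a carefully chosen modification against the all-$\frac{1}{2}$ value $|V|/2$, together with the standard fact that a half-integral optimum to LPVC($G$) always exists.
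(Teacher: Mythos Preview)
Your proof is correct and follows essentially the same cyclic chain as the paper, with identical arguments for $(3)\Rightarrow(2)$ and $(2)\Rightarrow(1)$; for $(1)\Rightarrow(3)$ you use a global counting argument where the paper instead uses a local modification (reduce a $1$ to $\tfrac12$ and contradict optimality), but both are equally direct. One small imprecision: you assert ``both optima have the same value $w(x)=|V|/2$'', which presumes all-$\tfrac12$ is itself optimal, whereas the hypothesis only says it is not the \emph{unique} optimum---however, since all-$\tfrac12$ is always feasible you have $w(x)\le |V|/2$, hence $|V_1^x|\le |V_0^x|$, and the conclusion $V_0^x\neq\emptyset$ follows just the same.
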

\begin{proof}
\noindent
$(1) \implies (3)$: As we know that the optimum solution is half-integral, there exists an optimum solution that assigns $0$ or $1$ to some vertex. Suppose no vertex is assigned 0. Then, for any vertex which is assigned 1, its value can be reduced to $\frac{1}{2}$ maintaining feasibility (as all its neighbors have been assigned value $\geq \frac{1}{2}$) which is a 
contradiction to the optimality of the given solution.

\noindent
$(3) \implies (2)$: Let $I = V^x_0$, and suppose that ${\bf surplus}(I) > 0$. Then consider the solution $x'$ that assigns $1/2$ to vertices in $I \cup N(I)$, retaining the value of $x$ for the other vertices.
Then $x'$ is a feasible solution whose
objective value $w(x')$ drops from $w(x)$ by $(|N(I)|-|I|)/2 = {\bf surplus}(I)/2 > 0$ which is a contradiction to the optimality of $x$.

\noindent
$(2) \implies (1)$: Setting all vertices in $I$ to $0$, all vertices in $N(I)$ to $1$ and setting the remaining vertices to $\frac{1}{2}$ gives a feasible solution whose objective value is at most $|V|/2$,
and hence all $\frac{1}{2}$ is not the unique solution to LPVC($G$).
\end{proof}

\begin{lemma}
%[\cite{}]
\label{lem:structionequiv}
Let $G$ be an undirected graph, then the following are equivalent.
\begin{enumerate}[(1)]
\setlength{\itemsep}{-2pt}
 \item  Preprocessing Rule~\ref{red:NT_reduction} or \ref{red:edgeinneighbor} or \ref{red:struction} applies. 
\item There exists an independent set $I$ such that ${\bf surplus}(I)\leq  1$. 
\item There exists a vertex $v$ such that an optimal solution $x$ to LPVC($G\setminus \{v\}$) assigns $0$ to some vertex.
\end{enumerate}

\end{lemma}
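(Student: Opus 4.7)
The plan is to establish the circular chain $(1) \Rightarrow (2) \Rightarrow (3) \Rightarrow (1)$, leveraging Lemma~\ref{lem:ntequiv} and Lemma~\ref{lem:struction} as black boxes. The direction $(1) \Rightarrow (2)$ is essentially immediate: if Preprocessing Rule~\ref{red:NT_reduction} applies, Lemma~\ref{lem:ntequiv} produces an independent set $I$ with ${\bf surplus}(I) \le 0 \le 1$; if Preprocessing Rule~\ref{red:edgeinneighbor} or \ref{red:struction} applies, the witnessing set $Z$ already has ${\bf surplus}(Z) = 1$ by definition of the rules.

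For $(2) \Rightarrow (3)$, given an independent set $I$ with ${\bf surplus}_G(I) \le 1$, I will pick a vertex $v$ so that deleting it from $G$ drops the surplus of $I$ below $1$, after which Lemma~\ref{lem:ntequiv} applied to $G\setminus\{v\}$ will deliver an optimal LPVC($G\setminus\{v\}$) solution that assigns $0$ somewhere. When ${\bf surplus}_G(I) = 1$, the set $I$ must be nonempty (and $N_G(I)$ along with it), so any $v \in N_G(I)$ works: $I$ is still independent in $G\setminus\{v\}$, its neighborhood loses exactly one vertex, and ${\bf surplus}_{G\setminus\{v\}}(I) = 0$. When ${\bf surplus}_G(I) \le 0$, I choose $v$ outside $I \cup N_G(I)$ to preserve the surplus of $I$; in the degenerate case $V = I \cup N_G(I)$ I pick $v \in N_G(I)$ (dropping the surplus further), with the edgeless subcase $N_G(I) = \emptyset$ handled trivially since then LPVC admits the all-zero optimum.

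The direction $(3) \Rightarrow (1)$ is the main content. From the hypothesis and Lemma~\ref{lem:ntequiv} applied to $G\setminus\{v\}$, I obtain an independent set $I \subseteq V(G)\setminus\{v\}$ with ${\bf surplus}_{G\setminus\{v\}}(I) \le 0$. Since $N_G(I) \subseteq N_{G\setminus\{v\}}(I) \cup \{v\}$, the same $I$ is independent in $G$ with ${\bf surplus}_G(I) \le 1$. The key step is to now pick a subset $Z \subseteq I$ minimising ${\bf surplus}_G(\cdot)$ over all subsets of $I$. If ${\bf surplus}_G(Z) \le 0$, Lemma~\ref{lem:ntequiv} certifies that Preprocessing Rule~\ref{red:NT_reduction} applies. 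Otherwise ${\bf surplus}_G(Z) = 1$, and by the minimality of $Z$ every $Y \subseteq Z$ satisfies ${\bf surplus}_G(Y) \ge {\bf surplus}_G(Z)$, so the premise of Lemma~\ref{lem:struction} is met, and either Preprocessing Rule~\ref{red:edgeinneighbor} (when $N_G(Z)$ is not independent) or Preprocessing Rule~\ref{red:struction} (otherwise) applies. The subtlety that I expect to be the main obstacle is precisely getting this minimality right: $I$ itself need not satisfy the premise of Lemma~\ref{lem:struction}, so one must pass to a minimum-surplus subset before invoking the structural lemma, and one must verify that shrinking to $Z$ does not destroy the surplus bound inherited from $I$.
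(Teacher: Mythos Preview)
Your proof is correct and follows essentially the same route as the paper, which proves $(1)\Leftrightarrow(2)$ directly from the definitions of the rules together with Lemma~\ref{lem:ntequiv}, and $(2)\Leftrightarrow(3)$ by picking $v\in N(I)$ in one direction and observing that surplus rises by at most one when $v$ is restored in the other.

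One remark: your anticipated ``main obstacle'' is not one. The applicability condition for Rule~\ref{red:edgeinneighbor} or Rule~\ref{red:struction} is simply the existence of some $Z$ with ${\bf surplus}(Z)=1$ (given that Rule~\ref{red:NT_reduction} does not apply); and once Rule~\ref{red:NT_reduction} does not apply, every nonempty independent set has surplus at least $1$, so \emph{any} surplus-$1$ set $I$ automatically satisfies the hypothesis of Lemma~\ref{lem:struction}. Passing to a minimum-surplus subset of $I$ is therefore unnecessary, and the paper accordingly dispenses with it.
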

\begin{proof}
The fact that $(1)$ and $(2)$ are equivalent follows from the definition of the preprocessing rules and Lemma \ref{lem:ntequiv}.

\noindent
$(3) \implies (2)$. By Lemma \ref{lem:ntequiv}, there exists an independent set $I$ in $G\setminus \{v\}$ whose surplus is at most $0$. The same set will have surplus at most $1$ in $G$.

\noindent
$(2) \implies (3)$. Let $v \in N(I)$. 
Then $I$ is an independent set in $G~\setminus\{v\}$ with surplus at most $0$, and hence by Lemma \ref{lem:ntequiv}, there exists an optimal solution to 
LPVC($G\setminus \{ v \}$) that assigns $0$ to some vertex.
\end{proof}

\noindent
We now prove an auxiliary lemma about the application of Preprocessing Rule~\ref{red:struction} which will be useful in simplifying later proofs. 

\begin{lemma}
\label{lem:char1}
Let $G$ be a graph and $G_R$ be the graph obtained from $G$ by applying Preprocessing Rule~\ref{red:struction}  
on an independent set $Z$. Let $z$ denote the newly added vertex corresponding to $Z$ in $G_R$.  
% such that $N(Z)$ is independent. 
\begin{enumerate}
\item If $G_R$ has an independent set $I$ such that ${\bf surplus}(I)= p$, then $G$ also has an independent set $I'$ such 
that ${\bf surplus}(I')= p$ and $|I'|\geq |I|$. 
\item Furthermore, if $z\in I\cup N(I)$ then $|I'|>|I|$. 
\end{enumerate}
%If Preprocessing Rule~\ref{red:NT_reduction} applies to $G_R$ then it also applies to $G$. 
\end{lemma}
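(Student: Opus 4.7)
The plan is to proceed by case analysis on how the newly created vertex $z$ sits with respect to $I$: either (A) $z \notin I \cup N_{G_R}(I)$, or (B) $z \in N_{G_R}(I)$ (and hence $z \notin I$), or (C) $z \in I$ (and hence $z \notin N_{G_R}(I)$ by independence). In each case I will exhibit an explicit $I' \subseteq V(G)$, verify independence in $G$, and compute the surplus. The key structural facts I will use throughout are: $|N(Z)| = |Z| + 1$ (since Preprocessing Rule~\ref{red:struction} requires $\mathbf{surplus}(Z)=1$); a vertex $u \in V(G_R)\setminus\{z\}$ is adjacent to $z$ in $G_R$ iff $u$ was adjacent in $G$ to some vertex of $N(Z)$; and vertices of $Z$ have all their neighbors inside $N(Z)$.

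In case (A), I take $I' = I$. Since no vertex of $I$ is adjacent to $z$ in $G_R$, no vertex of $I$ has a neighbor in $N(Z)$ in $G$, and since $I \cap (Z \cup N(Z)) = \emptyset$ and $Z$ is only adjacent to $N(Z)$, we get $N_G(I) = N_{G_R}(I)$. So $|I'| = |I|$ and the surplus is preserved.

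In case (B), I take $I' = I \cup Z$. Independence in $G$ follows because $I \subseteq V(G)\setminus(Z\cup N(Z))$, $Z$ is independent, and $Z$ has no edges leaving $N(Z)$. Expanding $z$ back to $N(Z)$ inside $N_{G_R}(I)$ and adding $N_G(Z) = N(Z)$ (which is already counted), a careful accounting gives $|N_G(I')| = |N_{G_R}(I)| - 1 + |N(Z)| = |N_{G_R}(I)| + |Z|$ and $|I'| = |I| + |Z|$, so $\mathbf{surplus}(I') = p$ and $|I'| > |I|$.

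In case (C), I take $I' = (I \setminus \{z\}) \cup N(Z)$. Independence uses that $N(Z)$ is independent (a precondition for Preprocessing Rule~\ref{red:struction}), that $I\setminus\{z\}$ is independent, and that no vertex of $I\setminus\{z\}$ has a neighbor in $N(Z)$ in $G$ (else it would be adjacent to $z$ in $G_R$, violating $z \in I$). To compute the surplus I will use $N_G(N(Z)) = Z \cup N_{G_R}(z)$ together with $N_G(I\setminus\{z\}) = N_{G_R}(I\setminus\{z\})$, and the decomposition $N_{G_R}(I) = N_{G_R}(I\setminus\{z\}) \cup N_{G_R}(z)$ (valid because $I$ is independent in $G_R$). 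This yields $|N_G(I')| = |N_{G_R}(I)| + |Z|$ and $|I'| = |I| - 1 + |N(Z)| = |I| + |Z|$, again giving surplus $p$ and $|I'| > |I|$.

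The main obstacle is bookkeeping the neighborhood computations: one must track which vertices lie in which of the disjoint parts $Z$, $N(Z)$, $V(G)\setminus(Z\cup N(Z))$ to correctly equate the contraction/expansion of $z$ with the addition of $Z$ or $N(Z)$. Once the three cases are verified, part (1) follows since every case produces $|I'| \geq |I|$, and part (2) follows because cases (B) and (C) together cover exactly the situation $z \in I \cup N(I)$ and produce $|I'| = |I|+|Z| > |I|$.
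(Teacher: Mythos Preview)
Your proof is correct and follows essentially the same approach as the paper: the same three-way case split on whether $z$ lies outside $I\cup N(I)$, in $N(I)$, or in $I$, with the identical choices $I'=I$, $I'=I\cup Z$, and $I'=(I\setminus\{z\})\cup N(Z)$ respectively. Your write-up is somewhat more explicit about the neighborhood bookkeeping (e.g.\ justifying $N_G(I')=N_{G_R}(I)\cup Z$ in case~(C) via the decomposition $N_{G_R}(I)=N_{G_R}(I\setminus\{z\})\cup N_{G_R}(z)$), whereas the paper simply asserts the resulting cardinality equalities.
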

\begin{proof}
Let $Z$ denote the minimum surplus independent set on  which 
  Preprocessing Rule~\ref{red:struction}  has been applied and $z$ denote 
the newly added vertex.  Observe that since Preprocessing Rule~\ref{red:struction} applies on $Z$, 
we have that $Z$ and $N(Z)$ are independent sets, $|N(Z)|=|Z|+1$ and $|N(Z)|\geq 2$.  

Let $I$ be an independent set of $G_R$ such that  ${\bf surplus}(I)= p$. 
\begin{itemize}
\item If both $I$ and $N(I)$ do 
not contain $z$ then we have that $G$ has an independent set $I$ such that  ${\bf surplus}(I) =p$. 
\item Suppose $z\in I$. Then consider the following set:  $I':= I\setminus \{z\}\cup N(Z)$.  Notice that $z$ represents $N(Z)$ 
and thus $I$ do not have any neighbors of $N(Z)$. This implies that $I'$ is an independent set in $G$. 
Now we will show that ${\bf surplus}(I') =p$. We know that $|N(Z)|=|Z|+1$ and 
$N(I')=N(I)\cup Z$.  Thus,
\begin{eqnarray*}
 |N(I')| - |I'| & = & (|N(I)|+|Z|)- |I'| \\
        &=& (|N(I)|+|Z|)-(|I|-1+|N(Z)|)\\
        &=& (|N(I)|+|Z|)-(|I|+|Z|)\\
        &=& |N(I)|-|I| = {\bf surplus}(I) =p.
  \end{eqnarray*}

\item Suppose $z \in N(I)$. Then consider the following set:  $I':= I \cup Z$.  Notice that $z$ represents $N(Z)$ and 
 since $z\notin I$ we have that $I$ do not have any neighbors of $Z$. 
This implies that $I'$ is an independent set in $G$.  
We  show that ${\bf surplus}(I') =p$. We know that $|N(Z)|=|Z|+1$. Thus,
% and $N(I')=N(I)\cup N(Z)$.  
\begin{eqnarray*}
 |N(I')| - |I'| & = & (|N(I)|-1+|N(Z)|)- |I'| \\
        &=& (|N(I)|-1+|N(Z)|)-(|I|+|Z|)\\
        &=& (|N(I)|+|Z|)-(|I|+|Z|)\\
        &=& |N(I)|-|I| = {\bf surplus}(I) =p.
  \end{eqnarray*}
\end{itemize}
From the construction of $I'$, it is clear that $|I'|\geq |I|$ and if $z\in (I\cup N(I))$ then $|I'|>|I|$.  
This completes the proof.
\end{proof}

\noindent
We now give some definitions that will be useful in formulating the statement of the main structural lemma. 
\begin{define}
 Let $G$ be a graph and ${\cal P}=P_1,P_2, \cdots , P_\ell$ be a sequence of exhaustive applications of 
Preprocessing Rules~\ref{red:NT_reduction},  \ref{red:edgeinneighbor} and \ref{red:struction} applied in this order 
on $G$ to obtain $G'$.  Let ${\cal P}_3=P_a,P_b, \cdots , P_t$ be the subsequence of $\cal P$ restricted to Preprocessing Rule~\ref{red:struction}. 
Furthermore let $Z_j$, $j\in \{a,\ldots,t\}$ denote the minimum surplus independent set corresponding to $P_t$ on which 
the  Preprocessing Rule~\ref{red:struction}  has been applied and $z_j$ denote the newly added vertex (See Lemma~\ref{lem:struction}). 
Let $Z^*=\{z_j~|~j\in \{a,\ldots,t\}\}$ be the set of these newly added vertices. 
\begin{itemize}
 \item We say that an applications of Preprocessing Rule~\ref{red:struction} is trivial if 
% for every $P_j$ in the sequence ${\cal P}_3$, 
the minimum surplus independent set $Z_j$ on which $P_j$ is applied has size $1$, that is, $|Z_j|=1$. 
\item We say that all applications of  Preprocessing Rule~\ref{red:struction} are independent if for all $j\in \{a,\ldots,t\}$, 
$N[Z_j]\cap Z^*=\emptyset$. 
\end{itemize}
\end{define}
\noindent 
Essentially, independent applications of Preprocessing Rule~\ref{red:struction} mean that the set on which the rule is applied, and all its 
neighbors are vertices in the original graph.

\noindent
Next, we state and prove one of the main structural lemmas of this section.

\begin{lemma}
\label{lem:mainlemma}
 Let $G=(V,E)$ be a graph on which Preprocessing Rules~\ref{red:NT_reduction},  \ref{red:edgeinneighbor} 
and \ref{red:struction} and the branching rules {\bf B}\ref{bran1}, {\bf B}\ref{bran2} and {\bf B}\ref{bran3} 
do not apply.  
%For a vertex $v\in V$ let $G_v$ be the graph obtained from applying Preprocessing 
%Rules~\ref{red:NT_reduction},  \ref{red:edgeinneighbor} and \ref{red:struction} exhaustively on 
%$G\setminus \{v\}$. 
Then for any vertex $v\in V$, 
\begin{enumerate}
\setlength{\itemsep}{-2pt}
 \item  preprocessing Rules~\ref{red:NT_reduction} and  \ref{red:edgeinneighbor} have not been applied while obtaining ${\cal R}(G\setminus \{v\})$ from $G\setminus \{v\}$; 
\item  and all applications of the Preprocessing Rule~\ref{red:struction} while obtaining ${\cal R}(G\setminus \{v\})$ from $G\setminus \{v\}$ are independent and trivial. 
\end{enumerate}
\end{lemma}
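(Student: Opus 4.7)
The plan is to prove both parts simultaneously by induction on the number of preprocessing rule applications performed while computing ${\cal R}(G\setminus\{v\})$. The setup rests on two structural observations about $G$. Since Preprocessing Rules~\ref{red:NT_reduction}--\ref{red:struction} do not apply on $G$, Lemma~\ref{lem:structionequiv} yields ${\bf surplus}(G)\geq 2$, and since {\bf B}\ref{bran1} does not apply, every minimum surplus set in $G$ has size $1$. Because deleting one vertex drops the surplus of any independent set by at most $1$, every independent set in $G_0:=G\setminus\{v\}$ has surplus $\geq 1$; moreover, any surplus-$1$ set in $G_0$ must have surplus exactly $2$ in $G$ (with $v$ in its neighborhood), and hence by {\bf B}\ref{bran1} not applying it must be a singleton $\{u\}$ with $\deg_G(u)=3$ and $v\in N_G(u)$. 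This pins down the full inventory of surplus-$1$ sets in $G_0$.

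For the base case, Lemma~\ref{lem:ntequiv} gives that Preprocessing Rule~\ref{red:NT_reduction} does not apply on $G_0$ (its minimum surplus is $\geq 1$), and the non-applicability of {\bf B}\ref{bran3} on $G$ (via Lemma~\ref{lem:test r2}) rules out Preprocessing Rule~\ref{red:edgeinneighbor} on $G_0$. Thus if any rule fires, it is Preprocessing Rule~\ref{red:struction} on a singleton $\{u\}$ of the form above --- a trivial and independent application.

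For the inductive step, let $G_i$ denote the graph after $i$ applications of Preprocessing Rule~\ref{red:struction}, and assume all prior applications were trivial and independent (and that no other preprocessing rule has been invoked). Iterated application of Lemma~\ref{lem:char1} yields the key tool: any surplus-$p$ independent set $I$ of $G_i$ lifts to a surplus-$p$ set $I_0$ in $G_0$ with $|I_0|\geq |I|$, strictly so whenever $I\cup N_{G_i}(I)$ contains a newly added vertex $z_j$. Since $G_0$ has no surplus-$0$ set, neither does $G_i$, ruling out Preprocessing Rule~\ref{red:NT_reduction}. Any surplus-$1$ set $Z$ in $G_i$ lifts to a surplus-$1$ singleton of $G_0$, which forces $|Z|=1$ and $Z\cup N_{G_i}(Z)\subseteq V(G_0)$, establishing the trivial-and-independent property for the $(i+1)$-th step should Preprocessing Rule~\ref{red:struction} fire next.

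The main obstacle, and the final piece, is ruling out Preprocessing Rule~\ref{red:edgeinneighbor} on $G_i$. Suppose for contradiction that a surplus-$1$ singleton $Z=\{u\}$ in $G_i$ has $N_{G_i}(u)$ non-independent. By the lift, $u$ is an original vertex of $G$ and $N_{G_i}(u)\subseteq V(G_0)$. If some neighbor $w$ of $u$ in $G_0$ had been identified at some step $j$ (being part of $N(Z_j)$), then the corresponding $z_j$ would sit in $N_{G_i}(u)$ as a non-original neighbor, contradicting our premise; and if $w$ had been removed outright (i.e., $w=Z_j$ at some step), then $u\in N(Z_j)$ would have been identified itself, contradicting $u\in V(G_i)$. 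Since Preprocessing Rule~\ref{red:struction} introduces no new edges between surviving original vertices, $N_{G_i}(u)=N_{G_0}(u)$ and this set is already non-independent in $G_0$. But then Preprocessing Rule~\ref{red:edgeinneighbor} applies on $G\setminus\{v\}$, so by Lemma~\ref{lem:test r2} the rule {\bf B}\ref{bran3} applies on $G$, contradicting our hypothesis. This completes the induction.
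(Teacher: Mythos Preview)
Your proof is correct and follows essentially the same inductive strategy as the paper: both arguments use Lemma~\ref{lem:char1} iteratively to lift low-surplus independent sets from intermediate graphs back toward the start, then invoke the non-applicability of {\bf B}\ref{bran1} to force these lifted sets to be singletons. The paper packages this lift as a standalone claim (surplus-$1$ sets in $G_i$ yield surplus-$2$ sets in $G$ of at least the same size, strictly larger when a new vertex $z_j$ is involved), whereas you lift only to $G_0$ and instead pre-characterize the surplus-$1$ sets of $G_0$ as singleton neighbors of $v$ --- the two are equivalent reorganizations of the same idea.

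One point worth noting: the paper dispatches Preprocessing Rule~\ref{red:edgeinneighbor} at all intermediate steps in a single sentence by appealing to the formulation of {\bf B}\ref{bran3} in Figure~\ref{fig:algoimproved} (``if Preprocessing Rule~\ref{red:edgeinneighbor} can be applied to obtain ${\cal R}(G\setminus\{v\})$ from $G\setminus\{v\}$''), which refers to the entire reduction sequence rather than just $G_0$. You instead work from the narrower formulation (Rule~\ref{red:edgeinneighbor} applies directly on $G\setminus\{v\}$) and supply an explicit argument that any surplus-$1$ singleton with non-independent neighborhood in $G_i$ must already exhibit the same structure in $G_0$. Your treatment is therefore a bit more self-contained on this point, but both versions arrive at the same conclusion.
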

\begin{proof} Fix a vertex $v$.  Let  $G_0=G\setminus \{v\}, G_1,\ldots ,G_t={\cal R}(G\setminus \{v\})$ be a sequence of 
graphs obtained  by applying  Preprocessing Rules~\ref{red:NT_reduction},  \ref{red:edgeinneighbor} 
and \ref{red:struction} in this order to obtain the reduced graph ${\cal R}(G\setminus \{v\})$. 

We first observe that  Preprocessing Rule  \ref{red:edgeinneighbor} never 
applies in obtaining ${\cal R}(G\setminus \{v\})$ from $G\setminus \{v\}$ since otherwise,  {\bf B}\ref{bran3} 
would have applied on $G$.  Next, we show that Preprocessing Rule~\ref{red:NT_reduction} 
does not apply.  Let $q$ be the least integer such that  Preprocessing Rule~\ref{red:NT_reduction}  applies on 
$G_q$ and it does not apply to any graph $G_{q'}$, $q'<q$.  Suppose that $q\geq 1$. 
Then, only Preprocessing Rule~\ref{red:struction} has been applied on $G_0,\ldots,G_{q-1}$. 
This implies that $G_q$ has an independent set $I_q$ such that ${\bf surplus}(I_q) \leq 0$. Then, by
Lemma~\ref{lem:char1}, $G_{q-1}$ also has an independent set $I_q'$ such that ${\bf surplus}(I_q') \leq 0$ 
and thus by Lemma~\ref{lem:ntequiv} Preprocessing Rule~\ref{red:NT_reduction} applies to $G_{q-1}$.  This contradicts  
the assumption that on $G_{q-1}$ Preprocessing Rule~\ref{red:NT_reduction} does not apply.  Thus, we conclude that 
$q$ must be zero. So, $G\setminus \{v\}$ has an independent set $I_0$ such that ${\bf surplus}(I_0) \leq 0$ in $G\setminus \{v\}$ 
and thus $I_0$ is an independent set in $G$ such that ${\bf surplus}(I_0) \leq 1$ in $G$. By Lemma~\ref{lem:structionequiv} this implies that either of 
Preprocessing Rules~\ref{red:NT_reduction},  \ref{red:edgeinneighbor}  or  \ref{red:struction} is applicable on $G$, a contradiction to 
the given assumption.

 Now we show the second part of the lemma.  By the first part we know that the $G_i$'s have been obtained by applications of 
Preprocessing Rule~\ref{red:struction} alone.  Let $Z_i$, $0\leq i \leq t-1$ be the sets  in $G_i$ on which 
Preprocessing Rule~\ref{red:struction} has been applied. Let  the newly added vertex corresponding to $N(Z_i)$ in this process be  
$z_i'$. We now make the following claim.
%Now we make a claim that will be useful in showing both that the applications of Preprocessing Rule~\ref{red:struction} 
%are trivial and independent. 
%Towards this we have the following claim. 
\begin{claim}
\label{imp:claim}
%Towards this we prove by induction that 
% For any $i\geq 0$, if $G_i$  has an independent set $I_i$ 
% %such that $|I_i|\geq 2$ with ${\bf surplus}(I_i) =1$ then $G$ has an independent set $I$ such that 
% $|I|\geq |I'| $ and ${\bf surplus}(I) =2={\bf surplus}(G)$.  Furthermore, if $(I_i\cup N(I_i)) \cap \{z_1',\ldots, z_{i-1}' \}\neq \emptyset$ 
% then $|I|>|I'|$. 
For any $i\geq 0$, if $G_i$ has an independent set $I_i$ such that ${\bf surplus}(I_i)= 1$, then $G$ has an independent set $I$ such that $|I| \geq |I_i|$ and ${\bf surplus}(I) = 2$. Furthermore, if $(I_i \cup N(I_i)) \cap \{z_1,\dots,z_{i-1}\} \neq \phi$, then $|I| >|I_i|$.
\end{claim}
\begin{proof}
We prove the claim by induction on the length of the sequence of graphs. 
For the base case consider $q=0$. Since Preprocessing Rules~\ref{red:NT_reduction},~\ref{red:edgeinneighbor}, and ~\ref{red:struction} 
do not apply on $G$, we have that ${\bf surplus}(G)\geq 2$.  Since $I_0$ is an independent set in $G\setminus \{v\}$ we have that 
$I_0$ is an independent set in $G$ also. Furthermore since ${\bf surplus}(I_0)=1$ in $G\setminus \{v\}$, we have that 
${\bf surplus}(I_0)=2$ in $G$, as ${\bf surplus}(G)\geq 2$.  This implies that $G$ has an independent set  $I_0$  with 
${\bf surplus}(I_0)=2={\bf surplus}(G)$. Furthermore, since $G_0$ does not have any newly introduced vertices, 
the last assertion is vacuously true. Now let $q\geq 1$. 
Suppose that $G_q$ has  a set $|I_q|$ and ${\bf surplus}(I_q) =1$. Thus, by Lemma~\ref{lem:char1}, $G_{q-1}$ also has an independent set $I_q'$ such that  $|I_q'|\geq |I_q|$ and ${\bf surplus}(I_q')=1$ . Now by the induction hypothesis, $G$  has an independent set $I$ such that $|I|\geq |I_q'|\geq |I_q|$ and ${\bf surplus}(I) =2={\bf surplus}(G)$.  

Next we consider the case when $(I_q\cup N(I_q)) \cap \{z_1',\ldots, z_{q-1}' \}\neq \emptyset$. If $z_{q-1}'\notin I_q\cup N(I_q)$ then we have 
that $I_{q}$ is an independent set in $G_{q-1}$ such that $(I_q\cup N(I_q)) \cap \{z_1',\ldots, z_{q-2}' \}\neq \emptyset$. Thus, by induction we have 
that  $G$ has  an independent set $I$ such that $|I|>|I_q|$ and ${\bf surplus}(I) =2={\bf surplus}(G)$.  
  On the other hand, if  
$z_{q-1}'\in I_q\cup N(I_q)$ then by Lemma~\ref{lem:char1},  we know 
that $G_{q-1}$ has an independent set $I_q'$ such that  $|I_q'|>|I_q|$ and ${\bf surplus}(I_q')=1$ . Now by induction hypothesis we know 
that $G$ has  an independent set $I$ such that $|I|\geq |I_q'|>|I_q|$ and ${\bf surplus}(I) =2={\bf surplus}(G)$.  
This concludes the proof of the claim.  
\end{proof}

We first show that all the applications of  Preprocessing Rule~\ref{red:struction} are trivial.   Claim~\ref{imp:claim} implies that if we have a non-trivial 
application of Preprocessing Rule~\ref{red:struction}  then it implies that $G$ has an independent set $I$ such that 
$|I|\geq 2$ and ${\bf surplus}(I) =2={\bf surplus}(G)$.  Then,  {\bf B}\ref{bran1}  would apply on $G$, a contradiction. 

Finally, we show that all the applications of Preprocessing Rule~\ref{red:struction} are  independent. Let $q$ be the least integer such that  
the application of Preprocessing Rule~\ref{red:struction} on $G_q$ is not independent. That is, the application of Preprocessing 
Rule~\ref{red:struction} on $G_{q'}$, $q'<q$, is trivial and independent. Observe that $q\geq 1$. We already know that every application of Preprocessing 
Rule~\ref{red:struction}  is trivial. This implies that the set $Z_q$ contains a single vertex. Let $Z_q=\{z_q\}$. Since the application of 
Preprocessing Rule~\ref{red:struction} on $Z_q$ is not independent we have that 
$(Z_q \cup N(Z_q))\cap \{z_1', \cdots, z_{q-1}'\}\neq \emptyset$.  We also know that ${\bf surplus}(Z_q) =1$ and thus by Claim~\ref{imp:claim} we have that $G$ has an independent set $I$ such that $|I|\geq 2> |Z_q| $ and 
${\bf surplus}(I) =2={\bf surplus}(G)$. This implies that  {\bf B}{\ref{bran1}}  would apply on $G$, a contradiction. 
Hence, we conclude that all the applications of  Preprocessing Rule~\ref{red:struction} are independent. This proves the lemma.  
\end{proof}
\noindent
%Let $g(G)$ denote the girth of the graph, that is, the length of the smallest cycle in $G$. 
%%Next we prove a lemma that shows that graph on which certain branching rules do not apply has 
%%large girth. 
%We will now prove a lower bound on the girth of graphs which do not admit certain branching rules.
%More precisely we show the following. 

%Now we will need to show the kind of branching rules that will apply -- that is branching of degree 4 vertices ...
\noindent
Let $g(G)$ denote the girth of the graph, that is, the length of the smallest cycle in $G$. 
Our next goal of this section is to obtain a lower bound on the girth of an irreducible graph. 
Towards this, we first introduce the notion of an {\em untouched} vertex. 
\begin{define}
We say that a vertex $v$ is \emph{untouched} by an application of Preprocessing Rule \ref{red:edgeinneighbor} 
or Preprocessing Rule \ref{red:struction}, if $\{v\}\cap (Z\cup N(Z))=\phi$, where $Z$ is the set on which the rule is applied.
\end{define}
\noindent

%Next we prove a lemma that shows that graph on which certain branching rules do not apply has 
%large girth.
\noindent 
We now prove an auxiliary lemma regarding the application of the preprocessing rules on 
graphs of a certain girth and following that, we will prove a lower bound on the girth of irreducible graphs.
\begin{lemma}
\label{lem:helpme} 
Let $G$ be a graph on which Preprocessing Rules~\ref{red:NT_reduction},  \ref{red:edgeinneighbor} 
and \ref{red:struction} and the branching rules {\bf B}\ref{bran1}, {\bf B}\ref{bran2}, {\bf B}\ref{bran3} do not apply and suppose that $g(G)\geq 5$. Then for any vertex $v\in V$, any vertex $x \notin N_2[v]$ is untouched by the preprocessing rules applied to obtain the graph ${\cal R}(G\setminus \{v\})$ from $G\setminus\{v\}$ and has the same degree as it does in $G$. 
\end{lemma}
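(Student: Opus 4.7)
The plan is to invoke Lemma~\ref{lem:mainlemma} to reduce the analysis to a sequence of trivial, independent applications of Preprocessing Rule~\ref{red:struction}, locate where each such application can occur relative to $v$, and then use the girth bound $g(G)\geq 5$ to forbid degree changes for vertices far from $v$.

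Write the reduction as $G_0=G\setminus\{v\}, G_1, \ldots, G_t={\cal R}(G\setminus\{v\})$, where $G_i$ is obtained from $G_{i-1}$ by applying Rule~\ref{red:struction} on a singleton $Z_i=\{z_i\}$ whose two-vertex neighborhood $N_{G_{i-1}}(z_i)=\{u_i^1,u_i^2\}$ is independent. The key structural claim is that each $z_i$ lies in $N_G(v)$, which I would prove by strong induction on $i$ after first establishing the identity $N_{G_{i-1}}(z_i)=N_G(z_i)\setminus\{v\}$. Independence guarantees that no merged vertex $z_j'$ ($j<i$) is a neighbor of $z_i$ in $G_{i-1}$, so if some original neighbor of $z_i$ had been absorbed into a previous merger, or if $z_i$ itself were one of $\{u_j^1,u_j^2\}$, then either $z_i$ would have been deleted or $z_j'$ would be adjacent to $z_i$ in $G_{i-1}$ (no later step can sever this adjacency, since independence keeps $z_j'$ and its neighbors untouched), each contradicting the setup. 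Given this identity, $\deg_{G_{i-1}}(z_i)=2$ forces $\deg_G(z_i)\in\{2,3\}$; if $z_i\notin N(v)$ then $\deg_G(z_i)=2$ with $N_G(z_i)$ still independent, making Preprocessing Rule~\ref{red:struction} applicable to $G$ itself, contradicting the hypothesis on $G$. Hence $z_i\in N(v)$, and consequently $\{z_i\}\cup N_{G_{i-1}}(z_i)\subseteq N_2[v]$; any $x\notin N_2[v]$ is therefore untouched and in particular survives in $V({\cal R}(G\setminus\{v\}))$.

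For the degree claim, I would analyse a single struction step. A vertex $x\notin\{z_i,u_i^1,u_i^2\}$ is not adjacent to $z_i$ (since $\deg_{G_{i-1}}(z_i)=2$); it loses its at most two edges into $\{u_i^1,u_i^2\}$ but gains a single edge to the merged vertex whenever it had at least one such edge. So $x$'s degree drops only when $x$ is adjacent in $G_{i-1}$ to both $u_i^1$ and $u_i^2$. Since $x$, $u_i^1$, $u_i^2$ and $z_i$ are all original vertices of $G$ and struction preserves edges among original vertices, such a configuration would yield a $4$-cycle $x-u_i^1-z_i-u_i^2-x$ already present in $G$, contradicting $g(G)\geq 5$. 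Thus $x$'s degree is preserved through every step, giving $\deg_{{\cal R}(G\setminus\{v\})}(x)=\deg_{G\setminus\{v\}}(x)=\deg_G(x)$, the last equality because $x\notin N_G(v)$.

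The main obstacle is the inductive identification $N_{G_{i-1}}(z_i)=N_G(z_i)\setminus\{v\}$: it must carefully combine independence of the struction sequence with the fact that struction creates no new edges among original vertices and with the continued presence of $z_i$ in $G_{i-1}$. Once this is in place, forcing $z_i\in N(v)$ and invoking girth to exclude the stray $4$-cycle are both immediate.
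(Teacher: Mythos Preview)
Your proposal is correct and follows essentially the same route as the paper: invoke Lemma~\ref{lem:mainlemma} to reduce to a sequence of trivial, independent applications of Preprocessing Rule~\ref{red:struction}, show that each such application must be centered on a neighbor of $v$, and then use $g(G)\geq 5$ via the $4$-cycle argument to exclude degree drops for vertices outside $N_2[v]$.

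The only organizational difference is the inductive invariant you carry. The paper runs a direct induction on the sequence $P_1,\ldots,P_l$, simultaneously tracking that (i) each $P_i$ hits some degree-$3$ neighbor of $v$, (ii) far-away vertices stay untouched, and (iii) their degrees are preserved. You instead isolate the identity $N_{G_{i-1}}(z_i)=N_G(z_i)\setminus\{v\}$ as the linchpin and derive everything else from it; this is arguably cleaner, since once the identity is in hand the conclusion $z_i\in N_G(v)$ is immediate from $\deg_G\geq 3$, and ``untouched'' plus ``degree-preserved'' then follow by the same $4$-cycle contradiction. One small wording issue: ``struction preserves edges among original vertices'' should be read as ``struction neither creates nor deletes edges between \emph{surviving} original vertices,'' which is exactly what you need to lift the $G_{i-1}$ adjacencies $x\text{--}u_i^1$, $x\text{--}u_i^2$, $u_i^1\text{--}z_i$, $u_i^2\text{--}z_i$ back to $G$.
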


\begin{proof}
Since the preprocessing rules do not apply in $G$, the minimum degree of $G$ is at least $3$ and since the graph $G$ does not have cycles of length 3 or 4, for any vertex $v$,  the neighbors of $v$ are independent and there are no edges between vertices in the first and second neighborhood of $v$.

 We know by Lemma~\ref{lem:mainlemma} that only Preprocessing Rule~\ref{red:struction} applies on the graph $G\setminus \{v\}$ and it applies only in a trivial and independent way. Let $U=\{u_1,\dots,u_t\}$ be the degree 3 neighbors of $v$ in $G$ and let $D$ represent the set of the remaining (high degree) neighbors of $v$. Let $P_1,\dots, P_l$ be the sequence of applications of Preprocessing Rule \ref{red:struction} on the graph $G\setminus \{v\}$, let $Z_i$ be the minimum surplus set corresponding to the application of $P_i$ and let $z_i$ be the new vertex created during the application of $P_i$.

We prove by induction on $i$, that \begin{itemize} \item the application $P_i$ corresponds to a vertex $u_j\in U$, \item any vertex $x\notin N_2[v]\setminus D$ is untouched by this application, and \item after the application of $P_i$,  the degree of $x\notin N_2[v]$ in the resulting graph is the same as that in $G$.\end{itemize}

In the base case, $i=1$. Clearly, the only vertices of degree 2 in the graph $G\setminus \{v\}$ are the degree 3 neighbors of $v$. Hence, the application $P_1$ corresponds to some $u_j\in U$. Since the graph $G$ has girth at least 5, no vertex in $D$ can lie in the set $\{u_j\}\cup N(u_j)$ and hence must be untouched by the application of $P_1$. Since $u_j$ is a neighbor of $v$, it is clear that the application of $P_1$ leaves any vertex disjoint from $N_2[v]$ untouched. Now, suppose that after the application of $P_1$, a vertex $w$ disjoint from  $N_2[v]\setminus D$ has lost a degree. Then, it must be the case that the application of $P_1$ identified two of $w$'s neighbors, say $w_1$ and $w_2$ as the vertex $z_1$. But since $P_1$ is applied on the vertex $u_j$, this implies the existence of a 4 cycle $u_j, w_1,w,w_2$ in $G$, which is a contradiction.

We assume as induction hypothesis that the claim holds for all $i^\prime$ such that $1\leq i^\prime<i$ for some $i>1$. Now, consider the application of $P_i$. By Lemma~\ref{lem:mainlemma}, this application cannot be on any of the vertices created by the application of $P_l$ ($l<i$), and by the induction hypothesis, after the application of $P_{i-1}$, any vertex disjoint from $N_2[v]\setminus D$ remains untouched and retains the degree (which is $\geq 3$) it had in the original graph. Hence, the application of $P_i$ must occur on some vertex $u_j\in U$. Now, suppose that a vertex $w$ disjoint from $N_2[v]\setminus D$ has lost a degree. Then, it must be the case that $P_i$ identified two of $w$'s neighbors say $w_1$ and $w_2$ as the vertex $z_i$. Since $P_i$ is applied on the vertex $u_j$, this implies the existence of a 4 cycle $u_j, w_1,w,w_2$ in $G$, which is a contradiction. Finally, after the application of $P_i$, since no vertex outside $N_2[v]\setminus D$ has ever lost degree and they all had degree at least 3 to begin with, we cannot apply Preprocessing Rule~\ref{red:struction} any further. This completes the proof of the claim.

Hence, after applying Preprocessing Rule~\ref{red:struction} exhaustively on $G\setminus \{v\}$, any vertex disjoint from $N_2[v]$ is untouched and has the same degree as in the graph $G$. This completes the proof of the lemma.
\end{proof}

\noindent
Recall that the graph is irreducible if none of the preprocessing rules or branching rules {\bf B}\ref{bran1} through {\bf B}\ref{bran5} apply, i.e: the algorithm has reached {\bf B}\ref{bran6}.

\begin{figure}[t]
\begin{minipage}[b]{0.5\linewidth}
\centering
\includegraphics[height=110 pt, width=140 pt]{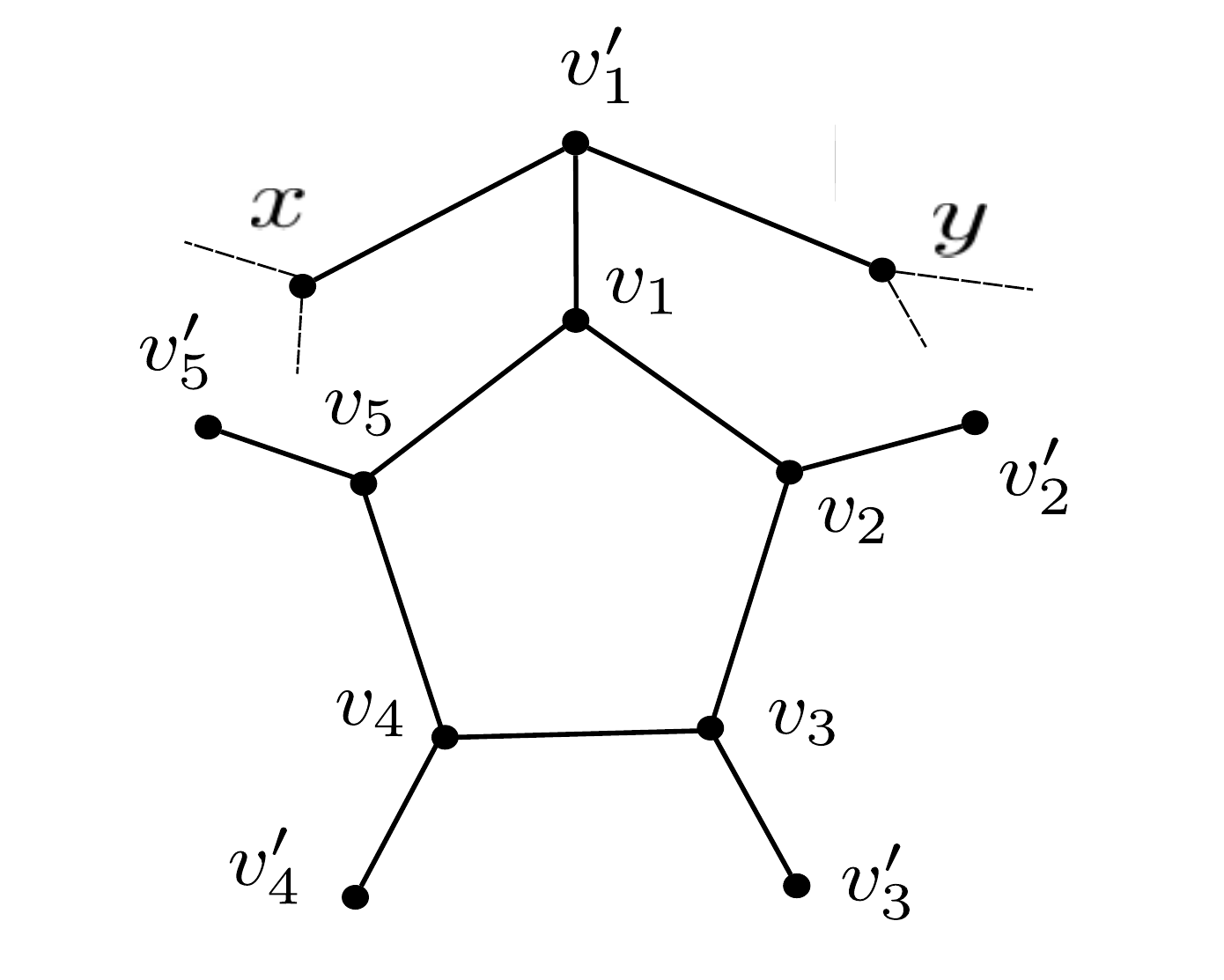}

\includegraphics[height=110 pt, width=140 pt]{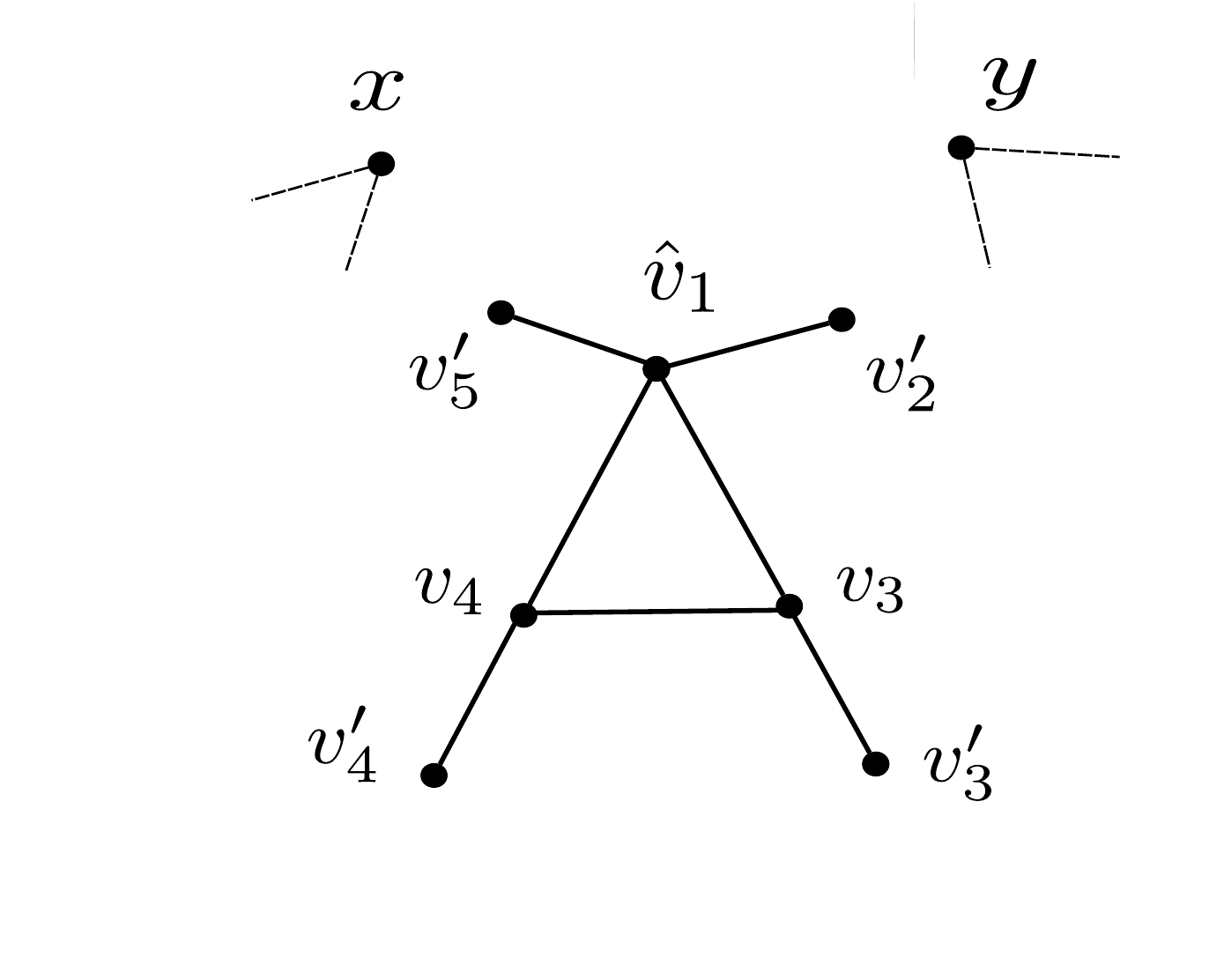}
\end{minipage}
\hspace{0.5cm}
\begin{minipage}[b]{0.5\linewidth}
\centering
\includegraphics[height=100 pt, width=130 pt]{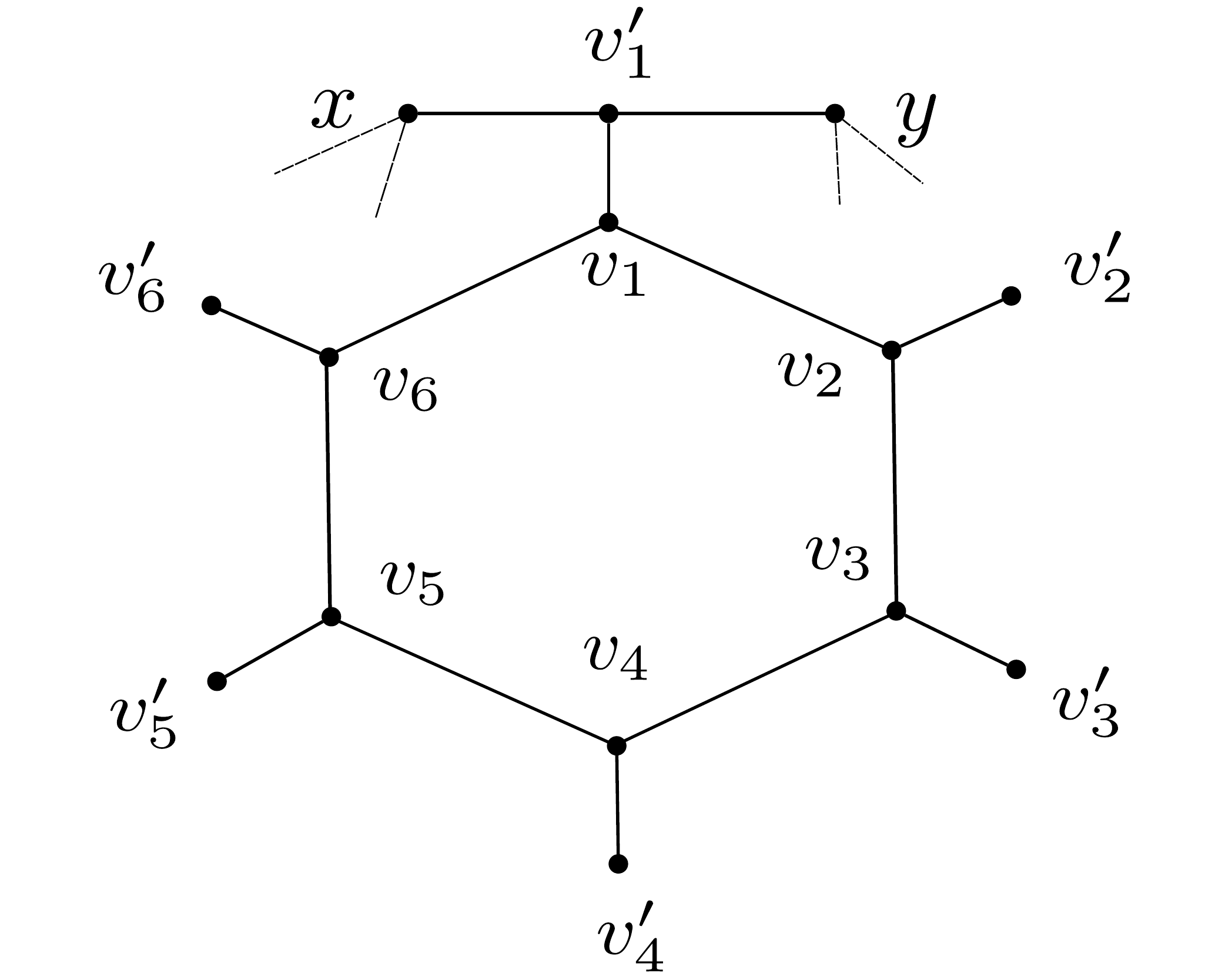}

\includegraphics[height=100 pt, width=130 pt]{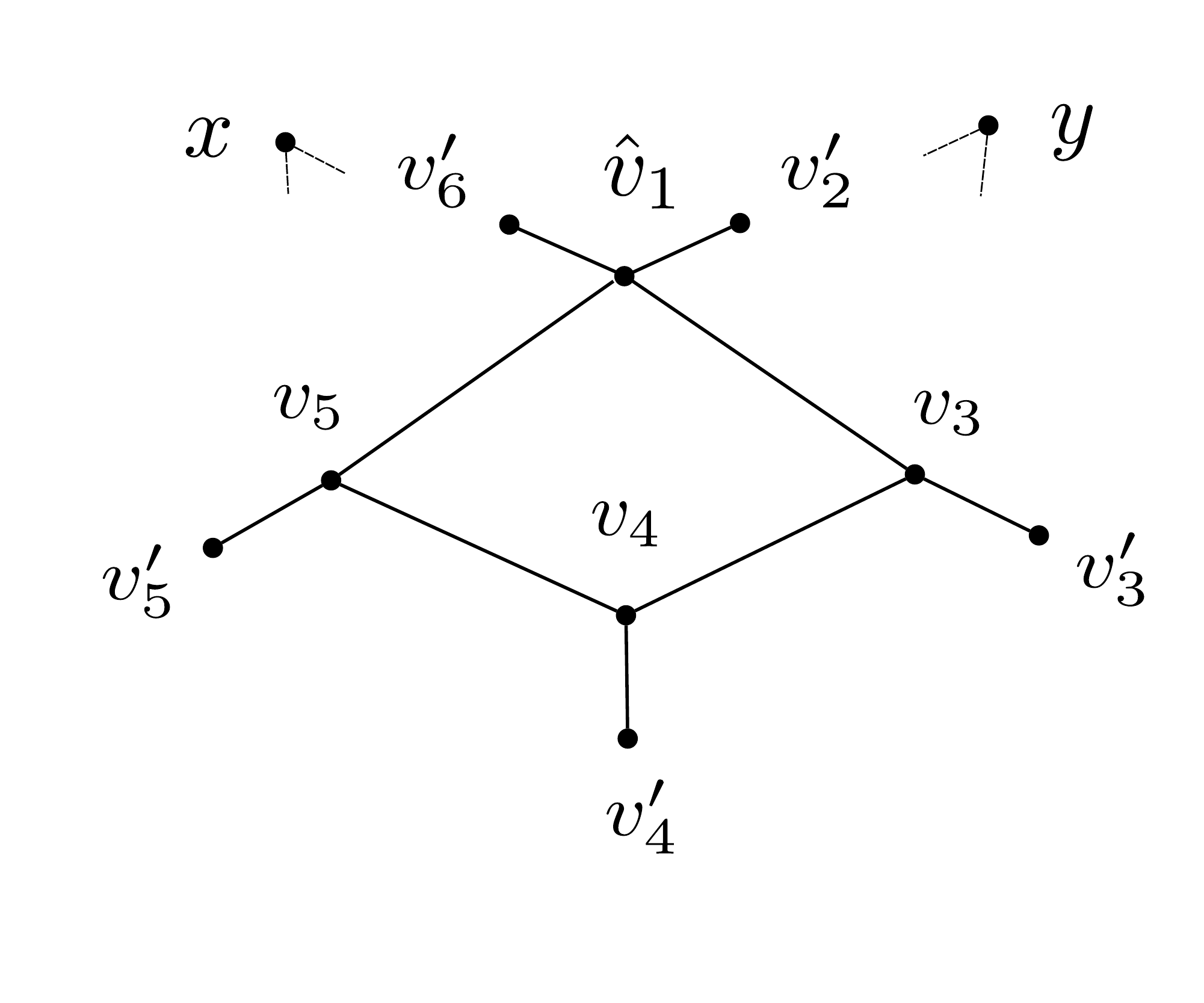}
\end{minipage}

\caption{Cases of Lemma~\ref{lem:girthlemma} when there is a 5 cycle or a 6 cycle in the graph}
\label{fig:c5c6}
\end{figure}

\begin{lemma}
\label{lem:girthlemma}
% Let $G=(V,E)$ be a graph on which Preprocessing Rules~\ref{red:NT_reduction},  \ref{red:edgeinneighbor} 
% and \ref{red:struction} and the branching rules {\bf B~\ref{bran1}}, {\bf B~\ref{bran2}} and {\bf B~\ref{bran3}} 
% do not apply.  
Let $G$ be a connected $3$-regular irreducible graph with at least $11$ vertices. Then, $g(G)\geq 7$. 
\end{lemma}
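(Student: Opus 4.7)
The plan is to argue by contradiction, supposing $g(G) \leq 6$, and to rule out each of $g(G) \in \{3, 4, 5, 6\}$ in turn using the $3$-regularity, irreducibility (which gives $\textbf{surplus}(G) \geq 2$ by Lemma~\ref{lem:structionequiv}), and the structural lemmas \ref{lem:mainlemma} and \ref{lem:helpme} already established. The girth $3$ case is immediate: if $u, v, w$ form a triangle and $v'$ is the third neighbor of $v$, then $G[N(v) \setminus \{v'\}] = G[\{u, w\}]$ is a single edge, so branching rule {\bf B}\ref{bran2} applies on $v$. For girth $4$, given a $4$-cycle $v_1 v_2 v_3 v_4$, I would consider the independent pair $S = \{v_1, v_3\}$ and their external third neighbors $v_1', v_3'$. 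If $v_1' = v_3'$ then $\textbf{surplus}(S) = 1$ and Preprocessing Rule~\ref{red:edgeinneighbor} or~\ref{red:struction} applies; if $v_1' \neq v_3'$ then $\textbf{surplus}(S) = 2 = \textbf{surplus}(G)$, so {\bf B}\ref{bran1} applies via Lemma~\ref{lem:largesurplus}. Both alternatives contradict irreducibility.

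For $g(G) \in \{5, 6\}$, I would let $C$ be a shortest cycle, fix $v \in C$, and analyze ${\cal R}(G \setminus \{v\})$. Because $G$ is $3$-regular, the only degree-$2$ vertices in $G \setminus \{v\}$ are the three neighbors of $v$, and by Lemma~\ref{lem:mainlemma} the reduced graph is obtained by applying Preprocessing Rule~\ref{red:struction} trivially and independently on (a subset of) these singletons, each time identifying the two remaining neighbors of a neighbor $u$ of $v$ into a new vertex $z_u$ (the set $N(u) \setminus \{v\}$ is independent because $g(G) \geq 5$ from the previous cases). I would then track the adjacencies of the $z_u$'s and of the untouched portion of the graph, which by Lemma~\ref{lem:helpme} retains degree $3$ outside $N_2[v]$. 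For $g(G) = 5$ with $v = v_1$ on the cycle $v_1 \cdots v_5$, the two cycle-neighbors of $v_1$ yield vertices $z_{v_2}, z_{v_5}$ connected by the image of the cycle edge $v_3 v_4$; a case analysis of how the external neighbors $v_i'$ can coincide or be adjacent (subject to $g(G) = 5$) shows that ${\cal R}(G \setminus \{v\})$ must contain a minimum-surplus independent set of size at least $2$, a vertex with a cliquish neighborhood as required by {\bf B}\ref{bran2}, or a vertex whose removal triggers Preprocessing Rule~\ref{red:edgeinneighbor}. Any of these causes {\bf B}\ref{bran5} to apply on $v$, contradicting irreducibility. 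The $g(G) = 6$ case is handled by an analogous but longer enumeration of external-neighbor configurations on a $6$-cycle. The hypothesis $|V(G)| \geq 11$ enters by excluding the sporadic extremal graphs---most notably the Petersen graph on $10$ vertices for $g = 5$---whose rigid local structure is the only obstruction to this argument.

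The main obstacle is the case analysis in the girth-$5$ and girth-$6$ steps: one must enumerate all possible configurations of external neighbors around the short cycle, determine the resulting adjacencies of the merged vertices in ${\cal R}(G \setminus \{v\})$, and verify that in every configuration some branching rule among {\bf B}\ref{bran1}, {\bf B}\ref{bran2}, {\bf B}\ref{bran3} is triggered. The $|V(G)| \geq 11$ assumption is used precisely to supply vertices outside $N_2[v]$ that, by Lemma~\ref{lem:helpme}, retain the $3$-regular structure needed to exhibit the independent sets or clique-like neighborhoods witnessing the relevant branching rule.
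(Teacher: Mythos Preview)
Your treatment of girths $3$ and $4$ matches the paper exactly. The real divergence is at girths $5$ and $6$: you propose to delete a vertex $v$ \emph{on} the shortest cycle $C$ and analyze ${\cal R}(G\setminus\{v\})$, whereas the paper deletes the \emph{external} neighbour $v_1'$ of a cycle vertex $v_1$. This choice is the crux of the argument, and your version makes the analysis substantially harder.

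The reason the paper's choice works cleanly is that, once $g(G)\geq 5$ is established, deleting $v_1'$ leaves the cycle vertex $v_3$ (and also $v_5$ when $g=6$) outside $N_2[v_1']$; Lemma~\ref{lem:helpme} then guarantees $v_3$ is untouched and still has degree~$3$ in ${\cal R}(G\setminus\{v_1'\})$. The single struction on $v_1$ merges its two cycle-neighbours ($v_2,v_5$ for $g=5$; $v_2,v_6$ for $g=6$) into one vertex $\hat v_1$. For $g=5$, $\hat v_1$ is adjacent to $v_4$ via the old edge $v_5v_4$, so $N(v_3)\setminus\{\hat v_3'\}=\{\hat v_1,\hat v_4\}$ is an edge and {\bf B}\ref{bran2} fires on $v_3$. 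For $g=6$, $\hat v_1$ is adjacent to both $v_3$ and $v_5$, so the independent pair $\{v_3,v_5\}$ shares the two neighbours $\hat v_1,\hat v_4$, has surplus exactly~$2$, and {\bf B}\ref{bran1} fires. In either case {\bf B}\ref{bran5} applies on $v_1'$ in $G$, contradicting irreducibility. No enumeration of external-neighbour configurations is needed beyond one dichotomy in the $g=5$ case (see below).

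If instead you delete a cycle vertex $v_1$, then $v_3$ and $v_4$ (for $g=5$) or $v_3,v_5$ (for $g=6$) lie in $N_2[v_1]$ and get absorbed into the merged vertices $z_{v_2},z_{v_5}$ (resp.\ $z_{v_2},z_{v_6}$). These are degree-$4$ vertices, so there is no surviving degree-$3$ cycle vertex to anchor a direct {\bf B}\ref{bran2} or {\bf B}\ref{bran1} argument. For $g=6$ one can check that $z_{v_2}$ and $z_{v_6}$ are \emph{not} adjacent, so neither {\bf B}\ref{bran2} on the sole untouched cycle vertex $v_4$ nor an obvious surplus-$2$ pair presents itself. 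Your deferred ``longer enumeration'' would have to manufacture the trigger from the degree-$4$ merged vertices and the unknown neighbours of $v_1'$, which is at best much messier and is not carried out in your proposal.

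Finally, your two explanations of where $|V(G)|\geq 11$ is used are not both correct. The paper uses it in exactly one place: in the $g=5$ case, if \emph{every} non-consecutive pair $v_i',v_j'$ is adjacent then $G$ is forced to have $10$ vertices, which is excluded. The complementary case --- some non-consecutive $v_i',v_j'$ independent, say $v_1',v_3'$ --- is precisely what guarantees $v_3\notin N_2[v_1']$ and lets the argument above go through. The hypothesis is not used to ``supply vertices outside $N_2[v]$'' in the way you suggest.
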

\begin{proof}
\begin{enumerate}
 \item Suppose $G$ contains a triangle $v_1,v_2,v_3$. Let $v_4$ be the remaining neighbor of $v_1$. Now, 
% the neighborhood of $v_1$ is covered by a 2 clique $v_2,v_3$ and a vertex $v_4$, 
$G[N(v_1)\setminus \{v_4\}]$ is a clique,
 which implies that branching rule {\bf B}\ref{bran2} applies and hence contradicts the irreducibilty of $G$. Hence, $g(G)\geq 4$.

\item Suppose $G$ contains a cycle $v_1,v_2,v_3,v_4$ of length 4. Since $G$ does not contain triangles, it must be the case that $v_1$ and $v_3$ are independent. Recall that $G$ has minimum surplus 2, and hence surplus of the set $\{v_1,v_3\}$ is at least $2$. Since $v_2$ and $v_4$ account for two neighbors of both $v_1$ and $v_3$, the neighborhood of $v_1\cup v_3$ can contain at most $2$ more vertices ($G$ is 3 regular). Since the minimum surplus of $G$ is 2, $\vert N(\{v_1,v_2\})\vert=4$ and hence $\{v_1,v_3\}$ is a minimum surplus set of size 2, which implies that branching rule {\bf B}\ref{bran1} applies and hence contradicts the irreduciblity of $G$. Hence, $g(G)\geq 5$.

\item Suppose that $G$ contains a 5 cycle $v_1,\dots,v_5$. Since $g(G)\geq 5$, this cycle does not contain chords. Let $v_i^\prime$ denote the remaining neighbor of the vertex $v_i$ in the graph $G$. Since there are no triangles or 4 cycles, $v_i^\prime\neq v_j^\prime$ for any $i\neq j$, and for any $i$ and $j$ such that $\vert i-j\vert=1$, $v^\prime_i$ and $v_j^\prime$ are independent. Now, we consider the following 2 cases.

\textbf{Case 1:} Suppose that for every $i,j$ such that $\vert i-j\vert\neq 1$, $v_i^\prime$ and $v_j^\prime$ are adjacent. Then, since $G$ is a connected 3-regular graph, $G$ has size 10, which is a contradiction.\\
\textbf{Case 2:} Suppose that for some $i,j$ such that $\vert i-j\vert\neq 1$, $v_i^\prime$ and $v_j^\prime$ are independent (see Figure ~\ref{fig:c5c6}). Assume without loss of generality that $i=1$ and $j=3$. Consider the vertex $v_1^\prime$ and let $x$ and $y$ be the remaining 2 neighbors of $v_1^\prime$ (the first neighbor being $v_1$). 
Note that $x$ or $y$ cannot be incident to $v_3$, since otherwise $x$ or $y$ will coincide with $v_3^\prime$.  Hence, $v_3$ is disjoint from $N_2[v_1^\prime]$.
By Lemma~\ref{lem:mainlemma} and Lemma~\ref{lem:helpme}, only Preprocessing Rule~\ref{red:struction} applies and the applications are only on the vertices $v_1$, $x$ and $y$ and leaves $v_3$ untouched and the degree of vertex $v_3$ unchanged. Now, let $\hat v_1$ be the vertex which is created as a result of applying Preprocessing Rule~\ref{red:struction}\ on $v_1$. Let $\hat v_4$ be the vertex created when $v_4$ is identified with another vertex during some application of Preprocessing Rule~\ref{red:struction}. If $v_4$ is untouched, then we let $\hat v_4=v_4$. Similarly, let  $\hat v_3^\prime$ be the vertex created when  $v_3^\prime$ is identified with another vertex during some application of Preprocessing Rule~\ref{red:struction} . If $v_3^\prime$ is untouched, then we let $\hat v_3^\prime=v_3^\prime$. Since $v_3$ is untouched and its degree remains 3 in the graph ${\cal R}(G\setminus \{v\})$, the neighborhood of $v_3$ in this graph can be covered by a 2 clique $\hat v_1,\hat v_4$ and a vertex $\hat v_3^\prime$, which implies that branching rule {\bf B}\ref{bran2} applies in this graph, implying that branching rule {\bf B}\ref{bran5} applies in the graph $G$, contradicting the irreduciblity of $G$. Hence, $g(G)\geq 6$.

\item Suppose that $G$ contains a 6 cycle $v_1,\dots,v_6$. Since $g(G)\geq 6$, this cycle does not contain chords. Let $v_i^\prime$ denote the remaining neighbor of each vertex $v_i$ in the graph $G$. Let $x$ and $y$ denote the remaining neighbors of $v_1^\prime$ (see Figure~\ref{fig:c5c6}). Note that both $v_3$ and $v_5$ are disjoint from $N_2[v_1^\prime]$ (if this were not the case, then we would have cycles of length $\leq 5$). Hence, by Lemma~\ref{lem:mainlemma} and Lemma~\ref{lem:helpme}, we know that only Preprocessing Rule~\ref{red:struction} applies and the applications are only on the vertices  $v_1$, $x$ and $y$, vertices $v_3$ and $v_5$ are untouched, and the degree of $v_3$ and $v_5$ in the graph ${\cal R}(G\setminus\{ v_1^\prime\})$ is 3. Let $\hat v_1$ be the vertex which is created as a result of applying $P_3$ on $v_1$. Let $\hat v_4$ be the vertex created when $v_4$ is identified with another vertex during some application of $P_3$. If $v_4$ is untouched, then we let $\hat v_4=v_4$. Now, in the graph ${\cal R}(G\setminus \{v_1^\prime\})$, the vertices $v_3$ and $v_5$ are independent and share two neighbors $\hat v_1$ and $\hat v_4$. The fact that they have degree 3 each and the surplus of graph ${\cal R}(G\setminus \{v_1^\prime\} )$ is at least 2 (Lemma~\ref{lem:mainlemma}, Lemma~\ref{lem:structionequiv}) implies that $\{v_3,v_5\}$ is a minimum surplus set of size at least 2 in the graph ${\cal R}(G\setminus\{ v_1^\prime\})$, which implies that branching rule {\bf B}\ref{bran2} applies in this graph, implying that branching rule {\bf B}\ref{bran5} applies in the graph $G$, contradicting the irreduciblity of $G$. Hence, $g(G)\geq 7$.
\end{enumerate}
\noindent
This completes the proof of the lemma. \end{proof}

\subsubsection{Correctness and Analysis of the last step}
\noindent
In this section we combine all the results proved above and show the existence of 
degree $4$ vertices in subsequent branchings after {\bf B}\ref{bran6}. 
%Now, we get to the crux of the analysis of the branching rule ${\bf B}$\ref{bran6}.
Towards this we prove the following lemma. 
\begin{lemma}
\label{lem:leftside}
Let $G$ be a connected $3$ regular irreducible graph on at least 11 vertices. Then, for any vertex $v\in V$,  
\begin{enumerate}
\item ${\cal R}(G\setminus \{v\})$ contains three degree $4$ vertices, say $w_1, w_2, w_3$; and 
\item for any $w_i$, $i\in \{1,2,3\}$, ${\cal R}({\cal R}(G\setminus \{v\})\setminus \{w_i\})$ contains $w_j$, $i\neq j$ as a degree $4$ vertex. 
\end{enumerate}
\end{lemma}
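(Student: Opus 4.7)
The plan is to exhibit the three degree-$4$ vertices explicitly as the vertices created by applications of Preprocessing Rule~\ref{red:struction}, and then show that removing any one of them does not destroy the degree-$4$ status of the other two. Let $a,b,c$ be the three neighbors of $v$ in $G$. Since $G$ is $3$-regular, $a,b,c$ each have degree $2$ in $G\setminus\{v\}$, and these are the only vertices of degree less than $3$. By Lemma~\ref{lem:mainlemma} applied to the irreducible graph $G$, the preprocessing that transforms $G\setminus\{v\}$ into ${\cal R}(G\setminus\{v\})$ consists only of trivial, independent applications of Preprocessing Rule~\ref{red:struction}. Thus the only candidate singletons on which the rule is invoked are $\{a\},\{b\},\{c\}$, producing three new vertices $\hat a,\hat b,\hat c$. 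Since $g(G)\ge 7$ by Lemma~\ref{lem:girthlemma}, each identified pair (e.g.\ the two non-$v$ neighbors of $a$) is non-adjacent and shares no common neighbor apart from $a$, so $\hat a$ has degree exactly $2+2=4$ in ${\cal R}(G\setminus\{v\})$, and analogously for $\hat b,\hat c$. This proves part~$1$.

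For part~$2$, without loss of generality assume $w_1=\hat a$ is deleted, and write $H={\cal R}(G\setminus\{v\})\setminus\{\hat a\}$. Let $X=\{x_1,x_2,x_3,x_4\}$ denote the neighbors of $\hat a$ in ${\cal R}(G\setminus\{v\})$; these are precisely the four vertices of $G$ at distance exactly $3$ from $v$ along a path through $a$. Using $g(G)\ge 7$, a short case check establishes: $X$ is independent; no two $x_i,x_j$ share a neighbor in $H$; and no $x_i$ is adjacent to $\hat b$ or $\hat c$ in ${\cal R}(G\setminus\{v\})$ (each such coincidence produces a cycle of length at most $6$ through $v$). Consequently each $x_i$ has degree exactly $2$ in $H$ with two non-adjacent neighbors (a triangle would violate the girth bound), while $\hat b$ and $\hat c$ retain their original four neighbors.

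I would then argue, mimicking the inductive surplus-tracking argument of Lemma~\ref{lem:mainlemma} together with Lemma~\ref{lem:char1}, that the preprocessing of $H$ consists solely of trivial independent applications of Preprocessing Rule~\ref{red:struction} on the singletons $\{x_1\},\{x_2\},\{x_3\},\{x_4\}$. For each such application, identifying the two neighbors $p,q$ of $x_i$ can decrease $\hat b$'s degree only if both $p$ and $q$ are neighbors of $\hat b$; tracing this back to $G$ yields a $6$-cycle $b_j-p-x_i-q-b_\ell-b-b_j$, forbidden by $g(G)\ge 7$. The identical argument applies to $\hat c$. Thus $\hat b$ and $\hat c$ keep degree $4$ throughout the reduction, and in particular both remain degree-$4$ vertices of ${\cal R}(H)$. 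The cases $w_i=\hat b$ and $w_i=\hat c$ are entirely symmetric.

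I expect the main obstacle to be the careful verification in the third paragraph: namely, the claim that the preprocessing of $H$ triggers no rule beyond the four applications of Preprocessing Rule~\ref{red:struction} on $\{x_1\},\ldots,\{x_4\}$, and that the resulting identifications do not cascade to merge $\hat b$ or $\hat c$ with other vertices. Establishing this requires replaying the surplus-lifting induction of Lemma~\ref{lem:mainlemma} inside $H$, using the irreducibility of $G$ and the bound $g(G)\ge 7$ to rule out any independent set of surplus at most $1$ in an intermediate graph arising during the reduction of $H$.
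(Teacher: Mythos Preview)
Your Part~1 matches the paper's. For Part~2 your outline is correct but works harder than necessary. The paper observes that ${\cal R}(G\setminus\{v\})$ itself satisfies the hypotheses of Lemma~\ref{lem:helpme}: the preprocessing rules do not apply (it is already reduced), rules {\bf B}\ref{bran1}--{\bf B}\ref{bran3} do not apply (otherwise {\bf B}\ref{bran5} would apply to $G$, contradicting irreducibility), and its girth is at least~$5$ (this follows from $g(G)\ge 7$). Applying Lemma~\ref{lem:helpme} to ${\cal R}(G\setminus\{v\})$ with deleted vertex $w_i$ then says immediately that every vertex outside $N_2[w_i]$ is untouched and retains its degree in ${\cal R}({\cal R}(G\setminus\{v\})\setminus\{w_i\})$. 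Part~2 therefore collapses to the single distance check $w_j\notin N_2[w_i]$ for $j\ne i$, which is a two-line cycle-length argument against $g(G)\ge 7$. The ``main obstacle'' you flag---that preprocessing on $H$ triggers only the four applications on $x_1,\dots,x_4$ and never cascades to touch $\hat b,\hat c$---is precisely the content of Lemma~\ref{lem:helpme}; you are replaying its proof in situ rather than invoking it as a black box. Your route would go through, but the paper's is a few lines.
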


\begin{proof}
\begin{enumerate}
 \item Let $v_1,v_2,v_3$ be the neighbors of $v$. Since $G$ was irreducible, {\bf B}\ref{bran1}, {\bf B}\ref{bran2}, {\bf B}\ref{bran3} do not apply on ${\cal R}(G\setminus \{v\})$ (else {\bf B}\ref{bran5} would have applied on $G$). By Lemma~\ref{lem:mainlemma} and Lemma~\ref{lem:helpme}, we know that only Preprocessing Rule~\ref{red:struction} would have applied and the applications are only on these three vertices . Let $w_1,w_2$ and $w_3$ be the three vertices which are created as a result of applying Preprocessing Rule~\ref{red:struction} on these three vertices respectively. We claim that the degree of each $w_i$ in the resulting graph is 4. Suppose that the degree of $w_j$ is at most 3 for some $j$. But this can happen only if there was an edge between two vertices which are at a distance of 2 from $v$, that is, a path of length 3 between $w_i$ and $w_j$ for some $i\neq j$. This implies the existence of a cycle of length 5 in $G$, which contradicts Lemma~\ref{lem:girthlemma}.
% Then, it must be the case that there was an edge between 2 vertices in the 2nd level, which is a contradiction to the girth being at least 7.
\item Note that, by Lemma~\ref{lem:helpme}, it is sufficient to show that $w_i$ is disjoint from $N_2[w_j]$ for any $i\neq j$. Suppose that this is not the case and let $w_i$ lie in $N_2[w_j]$. First, suppose that $w_i$ lies in $N_2[w_j]\setminus N_1[w_j]$ and there is no $w_k$ in $N_1[w_i]$. Let $x$ be a common neighbor of $w_i$ and $w_j$. This implies that, in $G$, $x$ has paths of  length 3 to $v$ via $w_i$ and via $w_j$, which implies the existence of a cycle of length at most 6, a contradiction. Now, suppose that $w_i$ lies in $N_1[w_j]$. But this can happen only if there was an edge between two vertices which are at a distance of 2 from $v$. This implies the existence of a cycle of length 5 in $G$, contradicting Lemma~\ref{lem:girthlemma}.\end{enumerate}\end{proof}
%{\bf Some pictures representing branching rules will help...}
\noindent
% We will now show that in the branch where we exclude the vertex $v$ from the solution (and add the vertices $x$ and $v_{yz} $ into the vertex cover), a degree $4$ vertex remains in the reduced graph. 
%Ideally, we would have liked to show that ${\cal R}(G\setminus N[v])$ has a degree $4$ vertex.  
%We are unable to prove this directly, 
%however we circumvent  this as follows. 
%Let $x$, $y$ and $z$ be the neighbors of $v$. Then on the side we exclude $v$ from the 
%vertex cover we know that $x$, $y$ and $z$  must be in the vertex cover. Thus, we view this as first deleting $x$, obtaining 
%${\cal R}(G\setminus \{x\})$ and then deleting $v_{yz}$ from the graph ${\cal R}(G\setminus \{x\})$, the vertex that has been created 
%by the application of Preprocessing Rule~\ref{red:struction} on $y$ and $z$ in $G\setminus \{x\}$. This is facilitated by  
%Lemma~\ref{lem:helpme} that shows that ${\cal R}(G\setminus \{x\})$  has been obtained by precisely $3$  
%applications of Preprocessing Rule~\ref{red:struction} and they all are independent. 
The next lemma shows the correctness of deleting 
$v_{yz}$ from the graph  ${\cal R}(G\setminus \{x\})$ without branching.

%Towards our final analysis  
%we prove the following lemmas. 

\begin{lemma}
\label{lem:degree3branching}Let $G$ be a connected irreducible graph on at least 11 vertices, $v$ be a vertex of degree 3, and $x,y,z$ be the set of its neighbors. Then, $G\setminus\{x\}$ contains a vertex cover of size at most $k$ which excludes $v$ if and only if ${\cal R}(G\setminus \{x\})$ contains a vertex cover of size at most $k-3$ which contains $v_{yz}$, where $v_{yz}$ is the vertex created in the graph $G\setminus \{x\}$ by the application of Preprocessing Rule~\ref{red:struction} on the vertex $v$.
\end{lemma}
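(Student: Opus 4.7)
My plan is to combine three ingredients: the girth bound $g(G)\ge 7$ from Lemma~\ref{lem:girthlemma} (which applies because an irreducible graph is automatically $3$-regular---Rules~\ref{red:NT_reduction} and~\ref{red:struction} force minimum degree at least $3$, and {\bf B}\ref{bran4} rules out degree $\ge 4$), the structural control on the reductions applied to $G\setminus\{x\}$ given by Lemma~\ref{lem:mainlemma}, and the correctness of Preprocessing Rule~\ref{red:struction} from Lemma~\ref{lem:struction}.

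First I will pin down exactly how ${\cal R}(G\setminus\{x\})$ is built. Let $v_1,v_2$ denote the two neighbors of $x$ other than $v$, and for $i\in\{1,2\}$ let $y_i,z_i$ denote the neighbors of $v_i$ other than $x$. Lemma~\ref{lem:mainlemma} (applied with $x$ in place of the generic vertex) tells me that only Preprocessing Rule~\ref{red:struction} is fired, and every firing is trivial and independent. In $G\setminus\{x\}$ the only degree-$2$ vertices are $v,v_1,v_2$, and $g(G)\ge 7$ guarantees that for each $v_i$ the two neighbors are non-adjacent, that the triples $\{v,y,z\},\{v_1,y_1,z_1\},\{v_2,y_2,z_2\}$ are pairwise vertex-disjoint, and that no two vertices from different triples are adjacent or share a neighbor. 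In particular $y$ and $z$ share no neighbor other than $v$, so merging them into $v_{yz}$ does not change the degree of any other vertex, and the analogous statement holds for the merges at $v_1$ and $v_2$. Hence no fresh degree-$2$ vertex ever appears, exactly three firings occur (one on each $v_i$), and $k$ decreases by exactly $3$, which explains the shift of $3$ in the lemma.

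For the forward direction, let $C$ be a vertex cover of $G\setminus\{x\}$ of size at most $k$ with $v\notin C$. The edges $vy,vz$ force $\{y,z\}\subseteq C$, so $C^{(1)}:=(C\setminus\{y,z\})\cup\{v_{yz}\}$ is a vertex cover of the graph after the first firing, of size $|C|-1$ and containing $v_{yz}$. For each remaining firing on $v_i$ I will define $C^{(i+1)}:=(C^{(i)}\setminus\{v_i,y_i,z_i\})\cup\{v_{y_iz_i}\}$ and verify via a case analysis on $|C^{(i)}\cap\{v_i,y_i,z_i\}|\in\{1,2,3\}$ (the value $0$ is impossible since $v_iy_i,v_iz_i$ must be covered) that $C^{(i+1)}$ is a valid vertex cover of the next graph with $|C^{(i+1)}|\le|C^{(i)}|-1$. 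Independence ensures $v_{yz}\notin\{v_i,y_i,z_i,v_{y_iz_i}\}$, so $v_{yz}$ survives every step. Iterating yields a vertex cover of ${\cal R}(G\setminus\{x\})$ of size at most $k-3$ containing $v_{yz}$.

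Conversely, given a vertex cover $C^*$ of ${\cal R}(G\setminus\{x\})$ of size at most $k-3$ containing $v_{yz}$, I reverse the three firings using the reverse map from Lemma~\ref{lem:struction}: each $v_{y_iz_i}\in C^*$ is replaced by $\{y_i,z_i\}$, and when $v_{y_iz_i}\notin C^*$ the vertex $v_i$ is added instead. Each reversal grows the cover by exactly one vertex, so after three reversals I obtain $C$ with $|C|=|C^*|+3\le k$. The reversals at $v_1,v_2$ leave $v_{yz}$ untouched by independence, and the reversal at $v$ uses $v_{yz}\in C^*$ to produce the substitution $v_{yz}\mapsto\{y,z\}$, which places $y,z$ in $C$ and leaves $v$ outside, as required. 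The main delicate point is the forward case analysis when $v_i\in C^{(i)}$ but only one of $\{y_i,z_i\}$ lies in $C^{(i)}$: here I will appeal to independence together with $g(G)\ge 7$ to argue that every neighbor of $y_i$ or $z_i$ outside $\{v_i\}$ that contributes a new edge incident to $v_{y_iz_i}$ was already in $C^{(i)}$ for covering reasons, so that $(C^{(i)}\setminus\{v_i,y_i,z_i\})\cup\{v_{y_iz_i}\}$ does cover every edge of the reduced graph.
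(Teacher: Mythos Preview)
Your overall strategy---use Lemmas~\ref{lem:mainlemma} and~\ref{lem:girthlemma} to pin down the three firings, then push a vertex cover through them one step at a time---is sound and is essentially what the paper does. However, the explicit transformation you propose in the forward direction has a genuine counting gap.

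The problem is the case $|C^{(i)}\cap\{v_i,y_i,z_i\}|=1$. Since the edges $v_iy_i$ and $v_iz_i$ must both be covered, this case is precisely $v_i\in C^{(i)}$ and $y_i,z_i\notin C^{(i)}$. With your uniform formula
\[
C^{(i+1)}:=\bigl(C^{(i)}\setminus\{v_i,y_i,z_i\}\bigr)\cup\{v_{y_iz_i}\}=\bigl(C^{(i)}\setminus\{v_i\}\bigr)\cup\{v_{y_iz_i}\},
\]
you get $|C^{(i+1)}|=|C^{(i)}|$, not $|C^{(i+1)}|\le|C^{(i)}|-1$ as you claim. So the promised total drop of $3$ does not follow. (The fix is easy: in this case take $C^{(i+1)}:=C^{(i)}\setminus\{v_i\}$ instead; since $y_i,z_i\notin C^{(i)}$ all neighbours of $y_i$ and $z_i$ other than $v_i$ already lie in $C^{(i)}$, hence every edge at $v_{y_iz_i}$ is covered at the other endpoint.) Relatedly, the ``delicate point'' you single out at the end---$v_i\in C^{(i)}$ together with exactly one of $y_i,z_i$---is not delicate at all: $v_{y_iz_i}\in C^{(i+1)}$ already covers every edge incident to it, so no appeal to girth or independence is needed there. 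The case that actually needs care is the one above.

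For comparison, the paper avoids this entire case analysis by ordering the firings so that the struction on $v$ comes \emph{last}. The first two firings (on the other two neighbours of $x$) are then handled by invoking the correctness of Preprocessing Rule~\ref{red:struction} (Lemma~\ref{lem:struction}) as a black box; independence guarantees that $v$ is untouched, so ``excludes $v$'' is preserved automatically. Only the final step, where $v$ itself is structed, is argued explicitly, and there the situation is forced: $v\notin C$ implies $y,z\in C$, which collapses directly to $v_{yz}$ in the cover. Reordering so that $v$ is last rather than first buys you a much shorter argument with no case split.
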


\begin{proof}
We know by Lemma~\ref{lem:helpme} that there will be exactly 3 applications of Preprocessing Rule~\ref{red:struction} in the graph $G\setminus \{x\}$, and they will be on the three neighbors of $x$. Let $G_1$, $G_2$, $G_3$ be the graphs which result after each such application, in that order. We assume without loss of generality that the third application of Preprocessing Rule~\ref{red:struction} is on the vertex $v$.

By the correctness of Preprocessing Rule~\ref{red:struction}, if $G\setminus \{x\}$ has a vertex cover of size at most $k$ which excludes $v$, then $G_2$ has a vertex cover of size at most $k-2$ which excludes $v$. Since this vertex cover must then contain $y$ and $z$, it is easy to see that $G_3$ contains a vertex cover of size at most $k-3$ containing $v_{yz}$.

Conversely, if $G_3$ has a vertex cover of size at most $k-3$ containing $v_{yz}$, then replacing $v_{yz}$ with the vertices $y$ and $z$ results in a vertex cover for $G_2$ of size at most $k-2$ containing $y$ and $z$ (by the correctness of Preprocessing Rule~\ref{red:struction}). Again, by the correctness of Preprocessing Rule~\ref{red:struction}, it follows that $G\setminus \{x\}$ contains a vertex cover of size at most $k$ containing $y$ and $z$. Since $v$ is adjacent to only $y$ and $z$ in $G\setminus \{x\}$, we may assume that this vertex cover excludes $v$.
\end{proof}

\noindent
Thus, when branching rule {\bf B}\ref{bran6} applies on the graph $G$, we know the following about the graph. 
\begin{itemize}
\item $G$ is a $3$ regular graph. This follows from the fact that Preprocessing Rules~\ref{red:NT_reduction},~\ref{red:edgeinneighbor} 
and \ref{red:struction} and the branching rule {\bf B}\ref{bran4}  do not apply. 
\item $g(G)\geq 7$. This follows from Lemma~\ref{lem:girthlemma}. 
\end{itemize}
\noindent
Let $v$ be an arbitrary vertex and $x$, $y$ and $z$ be the neighbors of $v$.  Since $G$ is irreducible,  Lemma~\ref{lem:leftside} 
implies that ${\cal R}(G\setminus \{x\})$ contains 3 degree $4$ vertices, $w_1$, $w_2$ and $w_3$. We let $v_{yz}$ be $w_1$. Lemma~\ref{lem:leftside} also implies that for any $i$, the graph ${\cal R}({\cal R}(G\setminus \{x\})\setminus \{w_i\})$ contains 2 degree $4$ vertices. Since the vertex $v_{yz}$ is one of the three degree 4 vertices, in the graph ${\cal R}({\cal R}(G\setminus \{x\})\setminus v_{yz})$, the vertices $w_2$ and $w_3$ have degree 4 and  one of the branching rules 
{\bf B}\ref{bran1}, or  {\bf B}\ref{bran2}, or {\bf B}\ref{bran3} or {\bf B}\ref{bran4}  will apply in this graph. 
%So in the worst 
%combining these steps we get the following. 
Hence, we combine the execution of the rule {\bf B}\ref{bran6} along with the subsequent execution of one of the rules {\bf B}\ref{bran1}, {\bf B}\ref{bran2}, {\bf B}\ref{bran3} or {\bf B}\ref{bran4} (see Fig.~\ref{fig:B5B6}). To analyze the drops in the measure for the combined application of these rules, we consider each root to leaf path in the tree of Fig.~\ref{fig:B5B6}~(b) and argue the drops in each path.

\begin{itemize}
\item Consider the subtree in which $v$ is not picked in the vertex cover from $G$, that is, $x$ is picked in the vertex cover, following which we branch on some vertex
$w$ during the subsequent branching, from the graph ${\cal R}({\cal R}(G\setminus \{x\})\setminus v_{yz})$.  

Let the instances (corresponding to the nodes of the subtree) be $(G,k)$, $(G_1,k_1)$, $(G_2,k_2)$ and $(G_2^\prime,k_2^\prime)$. That is, $G_1={\cal R}({\cal R}(G\setminus \{x\})\setminus \{v_{yz}\})$, $G_2^\prime={\cal R}(G_1\setminus \{w\})$ and $G_2={\cal R}(G_1\setminus N[w])$ .
 
By~Lemma~\ref{lemma:maindrop}, we know that $\mu(G\setminus \{x\},k-1)\leq \mu(G,k)-\frac 1 2$. 
This implies that $\mu({\cal R}(G\setminus \{x\}),k^{\prime})\leq \mu(G,k)-\frac 1 2$ where $({\cal R}(G\setminus \{x\}),k^{\prime})$ is the instance obtained by applying the preprocessing rules on $G\setminus \{x\}$.

 By Lemma~\ref{lemma:maindrop}, we also know that including $v_{yz}$ into the vertex cover will give a further drop of $\frac 1 2$. Hence, $\mu({\cal R}(G\setminus \{x\})\setminus \{v_{yz}\},k^{\prime}-1)\leq \mu(G,k)-1$. Applying further preprocessing will not increase the measure. Hence $\mu(G_1,k_1)\leq \mu(G,k)-1$.

Now, when we branch on the vertex $w$ in the next step, we know that we use one of the rules {\bf B}\ref{bran1}, {\bf B}\ref{bran2}, {\bf B}\ref{bran3} or {\bf B}\ref{bran4}. Hence, $\mu(G_2,k_2)\leq \mu(G_1,k_1)-{\frac 3 2}$ and $\mu(G_2^\prime,k_2^\prime)\leq \mu(G_1,k_1)-{\frac 1 2}$ (since {\bf B}\ref{bran4} gives the worst branching vector). But this implies that $\mu(G_2,k_2)\leq \mu(G,k)-{\frac 5 2}$ and $\mu(G_2^\prime,k_2^\prime)\leq \mu(G,k)-{\frac 3 2}$.

This completes the analysis of the branch of rule {\bf B}\ref{bran6} where $v$ is not included in the vertex cover.

\noindent
\item
Consider the subtree in which $v$ is included in the vertex cover, by Lemma~\ref{lem:leftside} we have that 
${\cal R}(G\setminus \{v\})$ has exactly  three degree $4$ vertices, say $w_1, w_2, w_3$ and furthermore  
for any $w_i$, $i\in \{1,2,3\}$, ${\cal R}({\cal R}(G\setminus \{v\})\setminus \{w_i\})$ contains 2 degree $4$ vertices.
%Since  {\bf B}\ref{bran5}  does not apply on $G$, 
Since $G$ is irreducible,
we have that for any vertex $v$ in $G$, the branching rules
 {\bf B}\ref{bran1},  {\bf B}\ref{bran2}  and {\bf B}\ref{bran3}  do not apply 
on the graph ${\cal R}(G\setminus \{v\})$. Thus, we know that in the branch where we include $v$ in the vertex cover, 
the first branching rule that applies on the graph ${\cal R}(G\setminus \{v\})$ is {\bf B}\ref{bran4}.  Without loss of generality, we assume that  
{\bf B}\ref{bran4}  is applied on the vertex $w_1$.  Thus, in the branch where we include $w_1$ in the vertex 
cover, we know that ${\cal R}({\cal R}(G\setminus \{v\})\setminus \{w_1\})$ contains $w_2$ and $w_3$ as degree $4$ vertices,
This implies that in the graph ${\cal R}({\cal R}(G\setminus \{v\})\setminus \{w_1\})$ one of the branching rules 
{\bf B}\ref{bran1},  {\bf B}\ref{bran2}, {\bf B}\ref{bran3} or {\bf B}\ref{bran4} apply on a vertex $w_1^*$.  Hence, we combine the execution of the rule {\bf B}\ref{bran6} along with the subsequent executions of {\bf B}\ref{bran4} and one of the rules 
{\bf B}\ref{bran1}, {\bf B}\ref{bran2}, {\bf B}\ref{bran3} or {\bf B}\ref{bran4} (see Fig.~\ref{fig:B5B6}).

% \begin{figure}[t]
% \begin{center}
% \includegraphics[height=140 pt, width=300 pt]{vclpB6.pdf}
% \caption{Illustrations of the branches of rules (a) {\bf B}\ref{bran5} and (b) {\bf B}\ref{bran6} }
% \label{fig:B5B6}
% \end{center}
% \end{figure}

We let the instances corresponding to the nodes of this subtree be $(G,k)$, $(G_1,k_1)$, $(G_2,k_2)$, $(G_2^\prime,k_2^\prime)$, $(G_3,k_3)$ and $(G_3^\prime,k_3^\prime)$, where $G_1={\cal R}(G\setminus \{v\})$, $G_2={\cal R}(G_1\setminus N[w_1])$, $G_2^\prime={\cal R}(G_1\setminus \{w_1\})$,  $G_3={\cal R}(G_2^\prime\setminus N[w_1^*])$ and  $G_3^\prime={\cal R}(G_2^\prime \setminus \{w_1^*\})$.

Lemma~\ref{lemma:maindrop}, and the fact that preprocessing rules do not increase the measure implies that $\mu(G_1,k_1)\leq \mu(G,k)$.

Now, since {\bf B}\ref{bran4} has been applied to branch on $w_1$, the analysis of the drop of measure due to {\bf B}\ref{bran4} shows that $\mu(G_2,k_2)\leq \mu(G_1,k_1)-{\frac 3 2}$ and $\mu(G_2,k_2)\leq \mu(G_1,k_1)-{\frac 1 2}$. Similarly, since, in the graph $G_2^\prime$, we branch on vertex $w_1^*$ using one of the rules {\bf B}\ref{bran1},  {\bf B}\ref{bran2}, {\bf B}\ref{bran3} or {\bf B}\ref{bran4}, we have that $\mu(G_3,k_3)\leq \mu(G_2^\prime,k_2^\prime)-{\frac 3 2}$ and $\mu(G_3^\prime,k_3^\prime)\leq \mu(G_2^\prime,k_2^\prime)-{\frac 1 2}$. 

Combining these, we get that $\mu(G_3,k_3)\leq \mu(G,k)-{\frac 5 2}$ and $\mu(G_3^\prime,k_3^\prime)\leq \mu(G,k)-{\frac 3 2}$.  This completes the analysis of  rule {\bf B}\ref{bran6} where $v$ is included in the vertex cover. Combining the analysis for both the cases results in a branching vector of $({\frac 3 2 }, {\frac 5 2}, {\frac 5 2 }, {\frac 3 2}, 2)$ for the rule {\bf B}\ref{bran6}.

\end{itemize}

\noindent
Finally, we combine all the above results to obtain the following theorem. 

\begin{theorem}\label{thm:maintheoremsecond}
{\sc Vertex Cover above LP} can be solved in time $O^*({(2.3146)}^{k-vc^*(G)})$.
\end{theorem}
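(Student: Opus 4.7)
The plan is to verify the correctness of the algorithm outlined in Figure~\ref{fig:algoimproved} and then argue that its worst-case running time is governed by the branching vector $(3/2,3/2,5/2,5/2,2)$, which solves to $O^*(2.3146^{\mu})$ for $\mu=k-vc^*(G)$. Correctness is routine: each preprocessing rule was shown sound via Lemmas~\ref{lem:classicalNT} and \ref{lem:struction}, the branching rules {\bf B}\ref{bran1}--{\bf B}\ref{bran5} explore exhaustive ``in/out of the vertex cover'' alternatives for a vertex or for a minimum-surplus set (justified by Lemma~\ref{lem:allornonebranch}), and the non-trivial last rule {\bf B}\ref{bran6} is justified by Lemma~\ref{lem:degree3branching}, which shows that after deleting a neighbor $x$ of a degree-three vertex $v$ and reducing, the merged vertex $v_{yz}$ can be forced into any hypothetical cover avoiding $v$ without further branching. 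Since the algorithm recursively handles connected components of size $\le 10$ in constant time, termination is clear.

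For the running time I would analyze the drop in $\mu$ rule by rule, relying on the two ``drop per side'' statements of Lemma~\ref{lemma:maindrop} together with the no-increase property of the preprocessing rules verified in Section~\ref{section:analysissimple}. For {\bf B}\ref{bran1}, Lemma~\ref{lem:largesurplus} ensures the set $S$ we branch on has $|S|\ge 2$ and realizes ${\bf surplus}(G)\ge 2$, so Lemma~\ref{lemma:maindrop} yields a $(1,1)$ drop. For {\bf B}\ref{bran2}, including $v$ gains $1/2$ directly while the structural fact about cliques forces $u$ out of the cover in the other branch, again giving $(1,1)$. For {\bf B}\ref{bran3}, combining Lemma~\ref{lem:prule2measuredrop} with the $1/2$ drop from including $v$ compounds to a full unit on that side, while the excluding branch drops by $1$ from Lemma~\ref{lemma:maindrop}. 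For {\bf B}\ref{bran4}, applied to a vertex of degree $\ge 4$, the inclusion side drops $1/2$ and the exclusion side, using ${\bf surplus}({\cal H}_{\{v\}})\ge 3$ (no smaller surplus set of size $\ge 2$ exists or {\bf B}\ref{bran1} would have fired), drops $3/2$. For {\bf B}\ref{bran5}, the $1/2$ inclusion drop is followed by a guaranteed further unit drop coming from the triggered application of {\bf B}\ref{bran1}--{\bf B}\ref{bran3}, producing a $(1,3/2,3/2)$ vector.

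The main obstacle is bounding the drop for {\bf B}\ref{bran6}, the last and most expensive step. Here I would use the structural pipeline assembled in the paper: Lemma~\ref{lem:mainlemma} tells us that in an irreducible 3-regular graph, after deleting a single vertex the only preprocessing rule that can fire is Rule~\ref{red:struction}, and all its applications are trivial and independent; Lemma~\ref{lem:girthlemma} then gives $g(G)\ge 7$; and Lemma~\ref{lem:leftside} concludes that on both sides of a branch on a degree-$3$ vertex $v$ the reduced graph contains three degree-$4$ vertices, with the property that after removing any one of them at least two degree-$4$ vertices survive further reduction. On the ``include $v$'' side this forces {\bf B}\ref{bran4} to apply to some $w_1$, and then on the subbranch that excludes $w_1$, another application of one of {\bf B}\ref{bran1}--{\bf B}\ref{bran4} is available on a surviving degree-$4$ vertex; accumulating drops of $0+3/2$, $0+1/2+3/2$, $0+1/2+1/2$ gives contributions of $3/2$, $2$ and $5/2$ on this side. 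On the ``exclude $v$'' side, Lemma~\ref{lem:degree3branching} converts the branch into a forced inclusion of $x$ and of $v_{yz}$, which by Lemma~\ref{lemma:maindrop} yields a drop of $1$ before branching; the surviving degree-$4$ vertices then allow an immediate application of one of {\bf B}\ref{bran1}--{\bf B}\ref{bran4}, contributing $5/2$ and $3/2$ on its two sides.

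Combining the two sides produces the branching vector $(3/2,3/2,5/2,5/2,2)$. Solving the recurrence $T(\mu)\le 2T(\mu-3/2)+2T(\mu-5/2)+T(\mu-2)$ gives $T(\mu)=O(2.3146^{\mu})$, which dominates all other branching vectors in Figure~\ref{fig:droptable}. Multiplying by the polynomial cost of solving LPVC and applying each reduction/branching rule (these are all polynomial by Lemmas~\ref{lem:test r2} and~\ref{lem:test r3}) yields the claimed $O^*((2.3146)^{k-vc^*(G)})$ bound.
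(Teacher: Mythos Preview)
Your approach is the same as the paper's, and the final branching vector $(3/2,3/2,5/2,5/2,2)$ and the resulting $O^*(2.3146^\mu)$ bound are correct. However, your bookkeeping on the ``include $v$'' side of {\bf B}\ref{bran6} is garbled in two places. First, including $v$ in the cover contributes a drop of $1/2$, not $0$ (Lemma~\ref{lemma:maindrop} with $|S|=1$ and ${\bf surplus}(G)\ge 2$). Second, the additional application of {\bf B}\ref{bran1}--{\bf B}\ref{bran4} is needed on the subbranch that \emph{includes} $w_1$ (drop $1/2$), not the one that excludes it (which already drops by $3/2$ and terminates). With these corrections the three leaves on that side are $1/2+3/2=2$, $1/2+1/2+3/2=5/2$, and $1/2+1/2+1/2=3/2$; as you wrote it, ``$0+1/2+1/2$'' equals $1$, not $5/2$, and a drop of only $1$ on that leaf would not yield the $2.3146$ base.

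Your one-line account of {\bf B}\ref{bran2} is also slightly off: that rule never includes $v$. Both branches are exclusion branches (exclude $v$, respectively exclude $u$), and the $(1,1)$ drop comes directly from the exclusion half of Lemma~\ref{lemma:maindrop} applied to each singleton.
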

\begin{proof}
Let us fix $\mu=\mu(G,k)=k-vc^*(G)$.  We have thus shown that the preprocessing rules do not increase the measure. Branching rules 
{\bf B}\ref{bran1} or {\bf B}\ref{bran2} or {\bf B}\ref{bran3} results in a $(1, 1)$ decrease in $\mu(G,k)=\mu$, resulting in the recurrence 
$T(\mu) \leq T(\mu-1) + T(\mu-1)$  which solves to 
$2^{\mu} = 2^{k-vc^*(G)}$. 

Branching rule {\bf B}\ref{bran4} results in a $(\frac{1}{2},\frac{3}{2})$ decrease in $\mu(G,k)=\mu$, resulting in the recurrence 
$T(\mu) \leq T(\mu- \frac{1}{2}) + T(\mu-\frac{3}{2})$  which solves to 
$2.1479^{\mu} = 2.1479^{k-vc^*(G)}$.

 Branching rule {\bf B}\ref{bran5} combined with the next step in the algorithm results in a $(1,\frac{3}{2},\frac{3}{2})$ 
% decrease in 
% $\mu(G,k)=\mu$,
 branching vector,
resulting in the recurrence $T(\mu) \leq T(\mu- 1) + 2 T(\mu-\frac{3}{2})$  which solves to 
$2.3146^{\mu} = 2.3146^{k-vc^*(G)}$. 

We analyzed the way algorithm works after an application of branching rule {\bf B}\ref{bran6} before Theorem~\ref{thm:maintheoremsecond}. 
An overview of drop in measure is given in Figure~\ref{fig:droptable}. 

%\begin{figure}
%\label{fig:droptable}
%\begin{center}
%    \begin{tabular}{ccccccc}
%    \hline
%    Rule & {\bf B}\ref{bran1} & {\bf B}\ref{bran2} & {\bf B}\ref{bran3} & {\bf B}\ref{bran4}  & {\bf B}\ref{bran5} & {\bf B}\ref{bran6}\\ \hline \\
%    Branching Vector & (1,1) &(1,1)& (1,1) &($\frac{1}{2},\frac{3}{2}$) & ($\frac{3}{2},\frac{3}{2},1$) & ($\frac{3}{2},\frac{3}{2},\frac{5}{2},\frac{5}{2},2$)   \\  \\\hline
%    Running time & $2^\mu$ & $2^\mu$ & $2^\mu$ &  $2.1479^\mu$ &  $2.3146^\mu$ & $2.3146^\mu$  \\
%    \hline
%    \end{tabular}
%\end{center}
%\caption{A table giving the decrease in the measure due to each branching rule.}
%\end{figure}

This leads to a $(\frac{3}{2},\frac{5}{2},2,\frac{3}{2},\frac{5}{2})$ 
% decrease in 
% $\mu(G,k)=\mu$, 
branching vector,
resulting in the recurrence $T(\mu) \leq T(\mu- 1) + 2 T(\mu-\frac{3}{2})$  which solves to 
$2.3146^{\mu} = 2.3146^{k-vc^*(G)}$. 

Thus, we get an 
\runtime{(k-vc^*(G)} algorithm for {\sc Vertex Cover above LP}. 
% We show the upper bound on the running time of the algorithm. 
\end{proof}

\section{Applications}
In this section we give several applications of the algorithm developed for {\sc Vertex Cover above LP}. 

\subsection{An algorithm for \agvcfull\ }
Since the value of the LP relaxation is at least the size of the maximum matching, our algorithm also runs in time $O^*(2.3146^{k-m})$ where $k$ is the size of the minimum vertex cover and $m$ is the size of the maximum matching.

\begin{theorem}\label{thm:agvc}
{\sc Above Guarantee Vertex Cover} can be solved in time $O^*(2.3146^{\ell})$ time, where $\ell$ is the excess of the minimum vertex cover size above the size of the maximum matching.
\end{theorem}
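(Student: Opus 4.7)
The plan is to reduce this to Theorem~\ref{thm:maintheoremsecond} by establishing the standard inequality $vc^*(G) \geq m$, where $m$ is the size of a maximum matching. This is LP-duality folklore: for any feasible fractional cover $x$ and any matching $M$, summing the edge constraints $x(u)+x(v) \geq 1$ over the edges $(u,v) \in M$ gives $|M| \leq \sum_{(u,v)\in M}(x(u)+x(v)) \leq \sum_{v \in V} x(v) = w(x)$, where the second inequality uses that the endpoints of matching edges are distinct. Taking $x$ to be an optimum LP solution yields $m \leq vc^*(G)$.

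Given this, if $(G,M,k)$ is an instance of \agvcfull\ with parameter $\ell = k - |M|$, then since $M$ is a maximum matching, $|M| = m$, and so $k - vc^*(G) \leq k - m = \ell$. I would then simply invoke Theorem~\ref{thm:maintheoremsecond} on $(G,k)$ to obtain a running time of
\[
O^*\bigl(2.3146^{k - vc^*(G)}\bigr) \;\leq\; O^*\bigl(2.3146^{k - m}\bigr) \;=\; O^*(2.3146^{\ell}),
\]
which is the claimed bound. Note that the input matching $M$ is not actually used by the algorithm; it only certifies the value of the parameter.

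There is no real obstacle here: the argument is essentially a one-line observation that the LP-based parameter dominates the matching-based parameter, combined with the algorithmic result already established. The only thing to be careful about is making the direction of the inequality explicit (smaller exponent yields smaller bound), so that the substitution $k - vc^*(G) \leq \ell$ legitimately gives a running time of $O^*(2.3146^{\ell})$ rather than the other way around.
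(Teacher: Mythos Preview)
Your proposal is correct and follows essentially the same approach as the paper: the paper's entire argument is the single observation that $vc^*(G)\geq m$, so the $O^*(2.3146^{k-vc^*(G)})$ algorithm of Theorem~\ref{thm:maintheoremsecond} runs within $O^*(2.3146^{k-m})$. You have simply spelled out the standard LP-duality reason for $m \leq vc^*(G)$, which the paper takes for granted.
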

\noindent
Now by the known reductions in~\cite{GottlobS08,MarxR09,RRS11} (see also Figure~\ref{fig:zoo}) we get the following corollary to Theorem~\ref{thm:agvc}. 
\begin{corollary}
\label{cor:manyproblems}
 {\sc Almost $2$-SAT},  {\sc Almost $2$-SAT($v$)}, {\sc RHorn-Backdoor Detection Set} can be solved in time $O^*(2.3146^{k})$, and 
 {\sc KVD$_{pm}$} can be solved in time $O^*(2.3146^{\frac{k}{2}})=O^*(1.5214^k)$.
%$O^*(2.6181^{k/2})=O^*(1.6181^k)$.
\end{corollary}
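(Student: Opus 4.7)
The plan is to derive each bound by composing Theorem~\ref{thm:agvc} with a parameterized reduction from the relevant problem into \agvc, as summarised in Figure~\ref{fig:zoo}. Concretely, I would invoke the reductions as black boxes, observe the parameter translation on the arrow labels of the zoo, and feed the resulting \agvc{} instance into Theorem~\ref{thm:agvc}.

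First I would handle \textsc{Almost $2$-SAT} and \textsc{Almost $2$-SAT($v$)}. For each, there is a polynomial-time, parameter-preserving reduction to \agvc{} due to the work cited (the reductions of Mishra--Raman--Sikdar--Sivadasan and the refinements in \cite{RRS11,MarxR09}): an instance with budget $k$ produces an equivalent \agvc{} instance whose excess $\ell=k-|M|$ above the maximum matching is at most $k$. Applying Theorem~\ref{thm:agvc} to this instance yields the bound $O^*(2.3146^k)$. For \textsc{RHorn-Backdoor Detection Set}, I would first apply the reduction of \cite{GottlobS08} from this problem to \textsc{Almost $2$-SAT} (with the parameter unchanged) and then compose with the previous step; the running time bound is inherited verbatim.

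For \textsc{KVD$_{pm}$} the parameter translates with a factor of two, which is precisely where the improvement to $1.5214^k$ comes from. The reduction I would cite sends an instance $(G,k)$, where $G$ admits a perfect matching, to an \agvc{} instance whose gap above the maximum matching is at most $k/2$; the factor of two arises because every König deletion set of size $2s$ in a graph with perfect matching corresponds, after the natural structural argument on König graphs (where $vc=mm$), to a vertex cover of excess exactly $s$ above the matching. Substituting $\ell = k/2$ into Theorem~\ref{thm:agvc} gives $O^*(2.3146^{k/2}) = O^*(1.5214^k)$, as claimed.

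The routine steps are the plugging-in; the only non-trivial issue is making sure the factor-of-two parameter translation in the \textsc{KVD$_{pm}$} reduction is used correctly, and that the reductions from the SAT-side problems indeed preserve (rather than merely bound polynomially) the parameter. Both of these are established in the cited prior work, so the corollary follows by direct substitution into Theorem~\ref{thm:agvc}.
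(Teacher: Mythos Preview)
Your proposal is correct and follows essentially the same route as the paper: the corollary is stated immediately after Theorem~\ref{thm:agvc} with only the sentence ``by the known reductions in~\cite{GottlobS08,MarxR09,RRS11} (see also Figure~\ref{fig:zoo})'' as justification, so you have in fact supplied more detail than the authors do. The one place to be careful is the attribution of the specific reductions (the paper points to \cite{RRS11,MarxR09} for the SAT-side reductions and to \cite{GottlobS08} for {\sc RHorn}), but the logical structure---compose the zoo reduction with Theorem~\ref{thm:agvc} and read off the parameter change from the arrow label---is exactly what the paper intends.
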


\subsection{Algorithms for {\sc Odd Cycle Transversal} and {\sc Split Vertex Deletion}}
%\subsection{Deleting vertices to get either two independent sets, or two cliques or an independent set and a clique}
We describe a generic algorithm for both {\sc Odd Cycle Transversal} and {\sc Split Vertex Deletion}. 
Let $X,Y\in \{\mbox{Clique, Independent Set}\}$. A graph $G$ is called an {\em $(X,Y)$-graph} if its vertices can be partitioned 
into $X$ and $Y$. Observe that when $X$ and $Y$ are both \emph{independent set}, this corresponds to a {\em bipartite graph} 
and when $X$ is \emph{clique} and $Y$ is \emph{independent set}, this corresponds to a {\em split graph}. In this section we outline an algorithm that runs in time $O^*(2.3146^k)$ and solves the following problem.

\begin{center}
\begin{boxedminipage}{.9\textwidth}

\textsc{(X,Y)-Transversal Set}\vspace{10 pt}

\begin{tabular}{ r l }

\textit{~~~~Instance:} &  An undirected graph $G$ and a positive integer $k$.\\
\textit{Parameter:} & $k$.\\
\textit{Problem:} & Does $G$ have a vertex subset $S$ of size at most $k$ such that \\ 
& its deletion leaves a $(X,Y)$-graph?
 %, \\
%~~~~~~~~~~~~~~~~~~ & such that $G \setminus S$ does not contain \jc{} as a minor?
 \\
\end{tabular}
\end{boxedminipage}
\end{center}

%\parnamedefn{(X,Y)-Transversal Set}{$(G=(V,E),k)$, where $G$ is an undirected graph, $k$ a positive integer}{$k$}
%{Does $G$ have a subset $S$ of size at most $k$ such that its deletion leaves a $(X,Y)$-graph?} 

We solve the {\sc (X,Y)-Transversal Set} problem by using a reduction to \agvc\ that takes $k$ to $k$ \cite{saketthesis08}.  
We give the reduction here for the sake of the completeness.  Let $X,Y\in \{\mbox{Clique, Independent Set}\}$
%We give the reduction here for the sake of the completeness. 

\begin{quote}
{\bf Construction :} Given a graph $G=(V,E)$ and $(X,Y)$, we construct a graph $H(X,Y)=(V(X,Y),E(X,Y))$ as follows. We take two
 copies of $V$ as the vertex set of $H(X,Y)$, that is, $V(X,Y)=V_1\cup V_2$ where $V_i=\{u_i ~|~u \in V\}$ for $1\leq i \leq
  2$. The set $V_1$ corresponds to $X$ and the set $V_2$ corresponds to $Y$. The edge set of $H(X,Y)$ depends on $X$ or $Y$ being \emph{clique} or  \emph{independent set}. 
 If $X$ is \emph{independent set}, then the graph induced on $V_1$ is made isomorphic to $G$, that is, 
for every edge $(u,v)\in E$ we include the edge $(u_1,v_1)$  in $E(X,Y)$.  If $X$ is \emph{clique}, then the graph induced on $V_1$ is isomorphic to the complement of 
$G$, that is,  for every non-edge $(u,v)\notin E$ we include an edge $(u_1,v_1)$ in  $E(X,Y)$. Hence, if $X$(respectively $Y$) is \emph{independent set}, then we make the corresponding $H(X,Y)[V_i]$ isomorphic to the graph $G$ and otherwise, we make $H(X,Y)[V_i]$ isomorphic to the complement of $G$. 
%that is $H^{kl}[V_i]$ is a 
%copy of $G$ for  $1\leq i \leq k$ and $H^{kl}[V_i]$ is a 
%copy of $\overline{G} $ for  $k+1 \leq i \leq l$. 
Finally, we add the perfect matching $P=\{(u_1,u_2)~|~u \in V\}$ to $E(X,Y)$. This completes the construction of $H(X,Y)$. 
\end{quote}

We first prove the following lemma, which relates the existence of an $(X,Y)$-induced  
subgraph in the graph $G$ to the existence of an independent set in the associated auxiliary 
graph $H(X,Y)$.  We use this lemma to relate  {\sc (X,Y)-Transversal Set} 
to \agvc. 
\begin{lemma}
\label{maxkcol}
Let $X,Y\in \{\mbox{Clique, Independent Set}\}$ and $G=(V,E)$ be a graph on $n$ vertices.
% and $$
%$k$ and $l$ be positive
%integers such that $k+l\leq n$.  
Then, $G$ has an $(X,Y)$-induced subgraph of size $t$ 
if and only if $H(X,Y)$ has an independent set of size $t$.
\end{lemma}
\begin{proof}
Let $S \subseteq V$ be such that $G[S]$ is an $(X,Y)$-induced 
subgraph of size $t$. Let $S$ be partitioned as 
$S_1$ and $S_2$ such that $S_1$ is $X$ and $S_2$ is $Y$.  
%forms an independent 
%set for  $1\leq i \leq k$ and a clique when $k+1 \leq i \leq l$ in 
%$G[S]$. 
%Notice that each $S_i$'s partition the vertex set $S$. 
We also know that $H(X,Y)=(V(X,Y),E(X,Y))$. 
Consider the image of $S_1$ in $V_1$ and $S_2$ in $V_2$. Let the images be $S_1^H$ and $S_2^H$ respectively. 
%
%that is let $S_i'$ be the set of vertices 
%corresponding to the vertices in $S_i$ in the vertex set $V_i$. 
We claim that $S_1^H\cup S_2^H$  
is an independent set of size $t$ in the graph $H(X,Y)$. To see that this is indeed the case, it is enough to observe that $S_i$'s partition $S$, $H[V_i]$ is a 
copy of $G$ or a copy of the complement of $G$ based on the nature of 
$X$ and $Y$.  Furthermore, the only edges between any pair of 
copies of $G$ or $\overline{G}$ in $H(X,Y)$, are of the form $(u_1,u_2), u \in V$, that is, 
the matching edges.

Conversely, let $K$ be an independent set in $H(X,Y)$ of size $t$ and let $K$ be decomposed 
as $K_i$, $1 \leq i \leq 2$, where $K_i=K \cap V_i$.  
Let $B_i$ be the set of vertices of $G$ which correspond to the vertices of $K_i$, that is, $B _i=\{u ~|~u \in V,~u_i\in K_i \}$ for $1 \leq i \leq 2$.  
Observe that, for any $u \in V$, $K$ contains at most one of the two copies of $u$, that is, 
$|K\cap \{u_1,u_2 \}|\leq 1$ for any $u\in V$. Hence, $\vert B_1\vert+\vert B_2\vert=t$.
%For any pair of indices $1\leq i < j \leq k+l$,  
%$B_i' \cap B_j'=\emptyset$ . Thus $|\cup_{i=1}^{2} B_i |=\sum_{i=1}^{2} 
%|B_i'|=t$. 
Recall that, if $X$ is \emph{independent set}, then $H(X,Y)[V_1]$ is a copy of 
$G$ and hence $B_1$ is an independent set in $G$ and if $X$ is \emph{clique}, 
then $H(X,Y)[V_1]$ is a copy of the complement of $G$ and hence $K_1$ is an independent set in $\overline{G}$, 
and thus $B_1$ induces a clique in $G$. The same can be argued for the two cases for $Y$. 
Hence, the graph $G[B_1\cup B_2]$ is indeed an $(X,Y)$-graph of size $t$. This completes the proof of the lemma.
\end{proof}
\noindent
Using Lemma~\ref{maxkcol}, we obtain the following result. 
\begin{lemma}
Let $X,Y\in \{\mbox{Clique, Independent Set}\}$ and $G$ be a graph on $n$ vertices.
% and $$
%$k$ and $l$ be positive
%integers such that $k+l\leq n$.  
Then $G$ has a set of vertices of size at most $k$  
%such that its
whose deletion leaves an $(X,Y)$-graph 
if and only if  $H(X,Y)$ has a vertex cover of size at most $n+k$, where $n$ is the size of the perfect 
matching of $H(X,Y)$.
\end{lemma}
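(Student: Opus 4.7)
The plan is to reduce the lemma to a direct application of Lemma~\ref{maxkcol} combined with the standard complementation between vertex covers and independent sets. The key observation is that ``deleting at most $k$ vertices to leave an $(X,Y)$-graph'' is equivalent to ``keeping a set of at least $n-k$ vertices that induces an $(X,Y)$-graph''. Since $H(X,Y)$ has exactly $2n$ vertices (its vertex set is $V_1 \cup V_2$ with $|V_1|=|V_2|=n$, as witnessed by the perfect matching $P$), independent sets and vertex covers of $H(X,Y)$ stand in bijection via complementation, with sizes summing to $2n$.

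Concretely, I would argue both directions as follows. For the forward direction, suppose $S \subseteq V$ with $|S| \le k$ is such that $G[V\setminus S]$ is an $(X,Y)$-graph. Then $V\setminus S$ induces an $(X,Y)$-subgraph of size at least $n-k$, so Lemma~\ref{maxkcol} yields an independent set in $H(X,Y)$ of size at least $n-k$. Its complement in $V(H(X,Y))$ is a vertex cover of size at most $2n - (n-k) = n+k$, as required. For the converse, a vertex cover of $H(X,Y)$ of size at most $n+k$ has an independent-set complement of size at least $n-k$; Lemma~\ref{maxkcol} then produces an induced $(X,Y)$-subgraph of $G$ on at least $n-k$ vertices, and the omitted vertices form a transversal set of size at most $k$.

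There is no real obstacle here; the only thing to double-check is the book-keeping on the size of $V(H(X,Y))$. This is exactly where the parenthetical remark in the statement (``$n$ is the size of the perfect matching of $H(X,Y)$'') is used: the matching $P = \{(u_1,u_2) : u \in V\}$ has $n$ edges and saturates every vertex, so $|V(H(X,Y))| = 2n$, which is what makes the complement calculation $2n - (n-k) = n+k$ go through. Given this, the proof is a one-line chain of equivalences built on top of Lemma~\ref{maxkcol}.
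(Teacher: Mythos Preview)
Your proposal is correct and follows essentially the same approach as the paper: both proofs invoke Lemma~\ref{maxkcol} to translate between $(X,Y)$-induced subgraphs of $G$ and independent sets of $H(X,Y)$, and then use the complementation $|V(H(X,Y))| = 2n$ to pass from independent sets to vertex covers. Your version is slightly more explicit about the ``at least/at most'' bookkeeping, but the argument is the same.
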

\begin{proof}
By Lemma~\ref{maxkcol} we have that $G$ has an $(X,Y)$-induced subgraph of size $t$ 
if and only if $H(X,Y)$ has an independent set of size $t$. Thus, $G$ has an  $(X,Y)$-induced subgraph of size $n-k$ 
if and only if $H(X,Y)$ has an independent set of size $n-k$. But this can happen if and only if $H(X,Y)$ has a vertex cover of size at most $2n-(n-k)=n+k$.  This proves the claim. 
\end{proof}

\noindent
Combining the above lemma with Theorem~\ref{thm:agvc}, we have the following. 
\begin{theorem}
{\sc (X,Y)-Transversal Set}  can be solved in time $O^*(2.3146^k)$. 
\end{theorem}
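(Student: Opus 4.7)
The plan is to combine the previous lemma (which exhibits a parameter-preserving reduction from {\sc (X,Y)-Transversal Set} to a vertex cover question on $H(X,Y)$) with Theorem~\ref{thm:agvc} for {\sc Above Guarantee Vertex Cover}. Concretely, given an input $(G,k)$ of {\sc (X,Y)-Transversal Set} with $|V(G)|=n$, I would in polynomial time build the auxiliary graph $H(X,Y)=(V_1\cup V_2, E(X,Y))$ according to the construction preceding Lemma~\ref{maxkcol}. By the preceding lemma, the original instance is a \textsc{Yes}-instance if and only if $H(X,Y)$ admits a vertex cover of size at most $n+k$.

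The key observation that makes the parameter transfer from $n+k$ down to $k$ is that the construction explicitly inserts the perfect matching $P=\{(u_1,u_2):u\in V(G)\}$ of size exactly $n$. Since $H(X,Y)$ has $2n$ vertices, $P$ is a perfect matching and is therefore also a maximum matching of $H(X,Y)$. Consequently, the triple $(H(X,Y),\, P,\, n+k)$ is a legitimate instance of {\sc Above Guarantee Vertex Cover} whose above-guarantee parameter is $(n+k)-|P| = (n+k)-n = k$.

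I would then invoke Theorem~\ref{thm:agvc} on this instance, obtaining an algorithm running in time $O^*(2.3146^{k})$. Since the construction of $H(X,Y)$ takes only polynomial time, the overall running time remains $O^*(2.3146^{k})$, giving the claimed bound. There is no real obstacle here: the bulk of the work has already been done in the previous lemma establishing the equivalence between transversals in $G$ and vertex covers of size $n+k$ in $H(X,Y)$, and the only thing to check beyond that is the arithmetic identity $(n+k)-n=k$ between the target cover size and the size of the built-in maximum matching $P$.
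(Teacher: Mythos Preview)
Your proposal is correct and follows exactly the approach the paper takes: the paper's proof is simply the one-line remark that combining the preceding reduction lemma with Theorem~\ref{thm:agvc} yields the result, and your write-up just spells out that the built-in perfect matching $P$ of size $n$ makes the above-guarantee parameter equal to $(n+k)-n=k$.
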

\noindent
As a corollary to the above theorem we get the following new results. 
\begin{corollary}
{\sc Odd Cycle Transversal} and {\sc Split Vertex Deletion} can be solved in time $O^*(2.3146^k)$. 
\end{corollary}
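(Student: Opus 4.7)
The plan is to observe that both {\sc Odd Cycle Transversal} and {\sc Split Vertex Deletion} are in fact special cases of the generic {\sc (X,Y)-Transversal Set} problem just solved, and then simply instantiate the generic theorem.

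First I would recall the definitions. A graph is \emph{bipartite} exactly when its vertex set can be partitioned into two independent sets, and {\sc Odd Cycle Transversal} asks for a set of at most $k$ vertices whose deletion leaves a bipartite graph. Thus \textsc{Odd Cycle Transversal} is precisely {\sc (X,Y)-Transversal Set} with $X = Y = \text{Independent Set}$. Similarly, a graph is a \emph{split graph} exactly when its vertex set can be partitioned into a clique and an independent set, and {\sc Split Vertex Deletion} asks for a set of at most $k$ vertices whose deletion leaves a split graph. Hence \textsc{Split Vertex Deletion} is {\sc (X,Y)-Transversal Set} with $X = \text{Clique}$ and $Y = \text{Independent Set}$.

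Given these identifications, I would simply invoke the preceding theorem, which asserts that {\sc (X,Y)-Transversal Set} can be solved in time $O^*(2.3146^k)$ for any choice of $X,Y \in \{\text{Clique}, \text{Independent Set}\}$. Applying it to the two parameter choices above yields $O^*(2.3146^k)$ algorithms for {\sc Odd Cycle Transversal} and {\sc Split Vertex Deletion} respectively.

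There is essentially no obstacle here: once the generic {\sc (X,Y)-Transversal Set} framework is established (via the construction of $H(X,Y)$, Lemma on independent sets in $H(X,Y)$ corresponding to induced $(X,Y)$-subgraphs of $G$, and the reduction to \agvc), the corollary is a two-line consequence. The only thing to verify is that the two problems really are instances of the generic problem, which amounts to recalling the standard characterizations of bipartite and split graphs.
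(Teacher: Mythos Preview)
Your proposal is correct and follows exactly the paper's approach: the corollary is stated immediately after the theorem on {\sc (X,Y)-Transversal Set}, and the paper has already observed beforehand that taking $X=Y=\text{Independent Set}$ yields bipartite graphs while $X=\text{Clique}$, $Y=\text{Independent Set}$ yields split graphs. The paper gives no separate proof for the corollary, treating it as an immediate instantiation just as you do.
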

\noindent
Observe that the reduction from {\sc Edge Bipartization} to {\sc Odd Cycle Transversal} represented in Figure~\ref{fig:zoo}, along with the above corollary implies that {\sc Edge Bipartization} can also be solved in time $O^*(2.3146^k)$. However, we note that Guo et al.~\cite{Guo:2006} have given an algorithm for this problem running in time $O^*(2^k)$.

\subsection{An algorithm for {\sc \Konig\ Vertex Deletion}}
%\subsection{Vertex Cover parameterized by {\Konig} Vertex Deletion set}
A graph $G$ is called \Konig\ if the size of a minimum vertex cover 
equals that of a maximum matching in the graph.   Clearly bipartite graphs are \Konig\, but there are non-bipartite graphs that are \Konig\ (a triangle with an edge attached to one of its vertices, for example). Thus the {\sc \Konig\ Vertex Deletion} problem, as stated below, is closely connected to {\sc Odd Cycle Transversal}.

\begin{center}
\begin{boxedminipage}{.9\textwidth}
\textsc{\Konig\ Vertex Deletion (KVD)}\vspace{10 pt}

\begin{tabular}{ r l }

\textit{~~~~Instance:} &  An undirected graph $G$ and a positive integer $k$.\\
\textit{Parameter:} & $k$.\\
\textit{Problem:} & Does $G$ have a vertex subset $S$ of size at most $k$ such  \\
&  that $G\setminus S$ is a \Konig\ graph?
 %, \\
%~~~~~~~~~~~~~~~~~~ & such that $G \setminus S$ does not contain \jc{} as a minor?
 \\
\end{tabular}
\end{boxedminipage}
\end{center}
If the input graph $G$ to {\sc \Konig\ Vertex Deletion}  has a perfect matching then this problem is called 
{\sc KVD$_{pm}$}. By Corollary~\ref{cor:manyproblems}, we already know that {\sc KVD$_{pm}$} has an algorithm with running time 
$O^*(1.5214^k)$ by a polynomial time reduction to \agvc, that takes $k$ to $k/2$. However, there is no known reduction 
if we do not assume that the input graph has a perfect matching and it required several interesting structural theorems in~\cite{MishraRSSS10} to show that 
{\sc KVD} can be solved as fast as \agvc. Here, we outline an algorithm for {\sc KVD} that runs in $O^*(1.5214^k)$ and uses an 
interesting reduction rule. However, for our algorithm we take a detour and solve a slightly different, although equally interesting problem. Given a graph, a  set $S$ of vertices is called {\em {\Konig} vertex deletion set (kvd set)} 
if its removal leaves a {\Konig} graph. The auxiliary problem we study is following.

\begin{center}
\begin{boxedminipage}{.9\textwidth}
\textsc{Vertex Cover Param by KVD}\vspace{10 pt}

\begin{tabular}{ r l }

\textit{~~~~Instance:} &  An undirected graph $G$, a  {\Konig} vertex deletion set $S$ of size \\ 
& at most $k$   and a positive integer $\ell$.\\
\textit{Parameter:} & $k$.\\
\textit{Problem:} & Does $G$ have a vertex cover of size at most $\ell$?
 %, \\
%~~~~~~~~~~~~~~~~~~ & such that $G \setminus S$ does not contain \jc{} as a minor?
 \\
\end{tabular}
\end{boxedminipage}
\end{center}

This fits into the recent study of problems parameterized by other structural parameters. See, for example
{\sc Odd Cycle Transversal} parameterized by various structural parameters~\cite{abs-1107-3658} or {\sc  Treewidth} parameterized by 
vertex cover~\cite{BodlaenderJK11} or {\sc Vertex Cover} parameterized by feedback vertex set~\cite{JansenB11}. 
\noindent
For our proofs we will use the following characterization 
of \Konig\ graphs.

\begin{lemma}{\rm \cite[Lemma~$1$]{MishraRSSS10}}
 \label{lem:konig_charac}
A graph~$G=(V,E)$ is \Konig\ if and only if there exists a bipartition
of~$V$ into~$V_1 \uplus V_2$, with~$V_1$ a vertex cover of~$G$ such that
there exists a matching across the cut~$(V_1,V_2)$ saturating every
vertex of~$V_1$.
\end{lemma}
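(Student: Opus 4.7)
The plan is to prove both directions by exploiting the standard inequality $\nu(G) \leq vc(G)$ that holds for every graph, where $\nu(G)$ denotes the size of a maximum matching. The K\"{o}nig property is exactly the assertion that this inequality is tight, so the entire argument reduces to tracking how a minimum vertex cover and a maximum matching must align once their sizes coincide.

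For the forward direction, I would let $V_1$ be a minimum vertex cover of $G$, set $V_2 = V \setminus V_1$, and let $M$ be a maximum matching. Since $G$ is K\"{o}nig, $|M| = |V_1|$. Because $V_1$ is a vertex cover, every edge of $M$ has at least one endpoint in $V_1$; since the edges of $M$ are pairwise vertex-disjoint, this gives an injection from $M$ into $V_1$. The equality $|M| = |V_1|$ then forces this injection to be a bijection, so every vertex of $V_1$ is saturated by $M$ and each edge of $M$ has exactly one endpoint in $V_1$ and one in $V_2$. This is precisely the required bipartition and cross-matching.

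For the reverse direction, I would assume the bipartition $V = V_1 \uplus V_2$ with the stated properties. The existence of a matching across $(V_1,V_2)$ saturating $V_1$ gives $\nu(G) \geq |V_1|$, while $V_1$ being a vertex cover gives $vc(G) \leq |V_1|$. Combined with the universal inequality $\nu(G) \leq vc(G)$, we obtain $|V_1| \leq \nu(G) \leq vc(G) \leq |V_1|$, so every quantity equals $|V_1|$ and $G$ is K\"{o}nig.

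Neither direction poses a real obstacle: the whole argument is a careful bookkeeping around the sandwich $\nu(G) \leq vc(G)$ and the disjointness of matching edges. The only subtlety worth stating explicitly is that in the forward direction one must argue that the injection from $M$ to $V_1$ is forced to be surjective, which relies on the hypothesis $|M|=|V_1|$ rather than on any structural property of K\"{o}nig graphs beyond the defining equality.
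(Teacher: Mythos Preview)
Your argument is correct and is the standard proof of this characterization. Note, however, that the paper does not actually prove this lemma: it is quoted verbatim from \cite{MishraRSSS10} and used as a black box, so there is no ``paper's own proof'' to compare against. Your write-up would serve perfectly well as a self-contained justification; the only cosmetic suggestion is that in the forward direction you could make the counting fully explicit (e.g., let $a$ be the number of $M$-edges with both endpoints in $V_1$ and $b$ the number with exactly one, so $2a+b \le |V_1| = |M| = a+b$ forces $a=0$), which removes any ambiguity about why the injection must be a bijection and why every matching edge crosses the cut.
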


Note that in {\sc Vertex Cover param by KVD}, $G\setminus S$ is a \Konig\ graph. So one could branch on all subsets of $S$ to include in the output vertex cover, and for those elements not picked in $S$, we could pick its neighbors in $G\setminus S$ and delete them. However, the resulting graph need not be \Konig\ adding to the complications. Note, however, that such an algorithm would yield an $O^*(2^k)$ algorithm for {\sc Vertex Cover Param by OCT}. 
That is,  if $S$ were an odd cycle transversal then the resulting graph after deleting the neighbors of vertices not picked from $S$ will remain a bipartite graph, where an optimum vertex cover can be found in polynomial time. 
%See Appendix for details of such an algorithm.).

Given a graph~$G=(V,E)$ and two disjoint vertex subsets~$V_1, V_2$
of~$V$, we let~$(V_1,V_2)$ denote the bipartite graph with vertex 
set~$V_1 \cup V_2$ and edge set~$\{\{u,v\}: \mbox{$\{u,v\} \in E$ 
and~$u \in V_1, v \in V_2$} \}$. 
Now, we describe an algorithm based on Theorem~\ref{thm:main theorem}, that solves {\sc Vertex Cover param by KVD} in time $O^*(1.5214^{k})$. 
\begin{theorem}\label{thm:vc par by kvd}
{\sc Vertex Cover Param by KVD} can be solved in time $O^*({1.5214}^{k})$.
\end{theorem}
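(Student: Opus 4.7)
The plan is to reduce \textsc{Vertex Cover Param by KVD} to \vclp{} and invoke Theorem~\ref{thm:maintheoremsecond}. The crux is a structural inequality bounding the integrality gap of the vertex cover LP in terms of the KVD size: for any graph $G$ with a K\"{o}nig vertex deletion set of size $k$,
$$vc(G) \;\leq\; vc^{*}(G) + k/2.$$

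First I would compute $vc^{*}(G)$ in polynomial time by solving the LP relaxation. To prove the inequality I would use LP duality: $vc^{*}(G)$ equals the maximum fractional matching $\nu^{*}(G)$ of $G$. Since $G \setminus S$ is K\"{o}nig, Lemma~\ref{lem:konig_charac} yields a bipartition $V \setminus S = V_{1} \uplus V_{2}$ with $V_{1}$ a minimum vertex cover of $G \setminus S$ and a matching $M$ saturating $V_{1}$, so $|M| = vc(G \setminus S)$. I would extend $M$ to a half-integral fractional matching of $G$ by placing weight $1/2$ along odd alternating walks rooted at each $s \in S$, so that each $s$ contributes exactly $1/2$ to the matching value without overloading any vertex. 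This yields $\nu^{*}(G) \geq vc(G \setminus S) + k/2$. Combining with the trivial bound $vc(G) \leq vc(G \setminus S) + |S|$ (witnessed by the vertex cover $V_{1} \cup S$) gives the claimed inequality.

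Given the structural claim the algorithm is straightforward. If $\ell \geq vc^{*}(G) + k/2$, output \Yes{} (the claim guarantees $vc(G) \leq \ell$). If $\ell < vc^{*}(G)$, output \No{} (even the LP beats $\ell$). Otherwise $\ell - vc^{*}(G) < k/2$, and I would invoke Theorem~\ref{thm:maintheoremsecond} on the instance $(G,\ell)$ to decide \vclp{} in time
$$O^{*}\!\left(2.3146^{\,\ell - vc^{*}(G)}\right) \;<\; O^{*}\!\left(2.3146^{k/2}\right) \;=\; O^{*}(1.5214^{k}).$$

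The main obstacle is the fractional-matching augmentation step: each $s \in S$ must be assigned weight $1/2$ without causing any vertex's total load to exceed $1$, and different vertices of $S$ may compete for the same exposed $V_{2}$-neighbor. The resolution should route the augmentation through half-integral odd structures (in the spirit of the Gallai-Edmonds decomposition), exploiting the K\"{o}nig property of $G \setminus S$: any obstruction to the augmentation would yield an independent set of deficient surplus inside $G \setminus S$, contradicting its K\"{o}nig-Egerv\'{a}ry structure. A direct constructive alternative would instead build a half-integral LP witness for $G$ by rounding the integral LP optimum of $G \setminus S$ and assigning $1/2$ to each vertex of $S$, then showing by a similar surplus argument that the result is optimal within an additive $k/2$ of $vc(G)$.
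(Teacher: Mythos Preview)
Your central structural inequality $vc(G) \leq vc^{*}(G) + k/2$ is correct, but the route you propose to prove it---via the fractional-matching lower bound $\nu^{*}(G) \geq vc(G\setminus S) + |S|/2$---is wrong, not merely incomplete. Take $G\setminus S$ to be a perfect matching $\{a_ib_i : 1\le i\le n\}$ and let $S=\{s\}$ with $s$ adjacent only to $a_1,\ldots,a_n$. Then $G\setminus S$ is K\"onig with $vc(G\setminus S)=|M|=n$, yet summing the degree constraints at $a_1,\ldots,a_n$ shows every fractional matching of $G$ has value at most $n$, so $\nu^{*}(G)=n < n + 1/2$. Your augmentation idea cannot succeed here because $s$'s entire neighbourhood is already saturated with no alternating outlet; no Gallai--Edmonds style argument will manufacture the missing $1/2$. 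The ``direct constructive alternative'' you mention at the end fails for the same reason: assigning $1/2$ to $s$ and the integral optimum to $G\setminus S$ is infeasible on the edges $(s,b_i)$-side whenever $S$ has neighbours in the independent side, and is not optimal when it is feasible.

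The paper circumvents this by \emph{not} proving the inequality for $G$ directly. It first applies the Nemhauser--Trotter reduction to pass to $G'=G[V^x_{1/2}]$, where the all-$\tfrac12$ solution is the unique LP optimum and hence $vc^{*}(G')=|V(G')|/2$. On $G'$ the inequality is then a one-line count: if $O$ is a kvd set of $G'$ and $M'$ a maximum matching of the K\"onig graph $G'\setminus O$, then $vc(G')\le |M'|+|O|$ while $vc^{*}(G')=|V(G')|/2\ge |M'|+|O|/2$. The real work, and the part your sketch has no analogue of, is Lemma~\ref{lem:kvd}: showing that the preprocessing does not increase the kvd number. This is non-trivial precisely because K\"onig graphs are not hereditary, and the proof requires a careful exchange argument using the crown structure of $(V^x_0,V^x_1)$ and Hall's theorem. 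Once you have that lemma, the inequality and the algorithm follow exactly as you outline in your final paragraph.
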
 
\begin{proof}
Let $G$ be the input graph, $S$ be a kvd set of size at most $k$. We first apply Lemma~\ref{lem:compute good set} on $G=(V,E)$ and obtain 
an optimum solution to  LPVC($G$) such that all $\frac{1}{2}$ is the unique optimum solution to LPVC($G[V^x_{1/2}]$). Due to Lemma~\ref{lem:classicalNT}, this implies that there exists a minimum vertex cover of $G$ that contains all the vertices in $V^x_1$ and none of the vertices in $V^x_0$. Hence, the problem reduces to finding a vertex cover of size $\ell^\prime=\ell-|V^x_1|$ for the graph $G^\prime=G[V^x_{1/2}]$. Before we describe the rest of the algorithm, we prove the following lemma regarding kvd sets in $G$ and $G^\prime$ which shows 
that if $G$ has a kvd set of size at most $k$ then so does $G^\prime$. Even though this looks straight forward, the fact that {\Konig} graphs are not hereditary (i.e. induced subgraphs of {\Konig} graphs need not be {\Konig}) makes this a non-trivial claim to prove.

\begin{lemma}\label{lem:kvd} Let $G$ and $G^\prime$ be defined as above. Let $S$ be a kvd set of graph $G$ of size at most $k$. Then, there is a 
kvd set of graph $G^\prime$ of size at most $k$.
\end{lemma}
\begin{proof} It is known that the sets $(V^x_0,V^x_1,V^x_{1/2})$ form a \emph{crown decomposition} of the graph $G$~\cite{ChlebikC08}. In other words, $N(V^x_0)=V^x_1$ and there is a matching saturating $V^x_1$ in the bipartite graph $(V^x_1,V^x_0)$. The set $V^x_0$ is called the \emph{crown} and the set $V^x_1$ is called the \emph{head} of the decomposition. For ease of presentation, we will refer to the set $V^x_0$ as $C$, $V^x_1$ as $H$ and the set $V^x_{1/2}$ as $R$. In accordance with Lemma~\ref{lem:konig_charac}, let $A$ be the minimum vertex cover and let $I$ be the corresponding independent set of $G\setminus S$ such that there is a matching saturating $A$ across the bipartite graph $(A,I)$. First of all, note that if the set $S$ is disjoint from $C\cup H$, $H\subseteq A$, and $C\subseteq I$, we are done, since the set $S$ itself can be taken as 
a kvd set for $G^\prime$.  This last assertion follows because there exists a matching saturating $H$ into $C$. Hence, we may assume that this is not the case. However, we will argue that given a kvd set of $G$ of size at most $k$ we will always be able to modify it in a way that it is of size at most $k$, it is disjoint from $C\cup H$, $H\subseteq A$, and $C\subseteq I$. This will allow us to prove our lemma. Towards this, 
we now consider the set $H^\prime=H\cap I$ and consider the following two cases.

\begin{enumerate}
\item $H^\prime$ is empty. We now consider the set $S^\prime=S\setminus (C\cup H)$ and claim that $S^\prime$ is also a  kvd set of $G$ of size at most $k$ such that $G\setminus S^\prime$ has a vertex cover $A^\prime=(A\setminus C)\cup H$ with the corresponding independent set being $I^\prime=I\cup C$. In other words, we move all the vertices of $H$ to $A$ and the vertices of $C$ to $I$. Clearly, the size of the set $S^\prime$ is at most that of $S$. The set $I^\prime$ is independent since $I$ was intially independent, and the newly added vertices have edges only to vertices of $H$, which are not in $I^\prime$. Hence, the set $A^\prime$ is indeed a vertex cover of $G\setminus S^\prime$. Now, the vertices of $R$, which lie in $A$, (and hence $A^\prime$) were saturated by vertices not in $H$, since $H\cap I$ was empty. Hence, we may retain the matching edges saturating these vertices, and as for the vertices of $H$, we may use the matching edges given by the crown decomposition to saturate these vertices and thus there is a matching saturating every vertex in $A^\prime$ across the bipartite graph $(A^\prime,I^\prime)$. Hence, we now have a kvd set $S^\prime$ disjoint from $C\cup H,$ such that $H$ is part of the vertex cover and $C$ lies in the independent set of the {\Konig} graph $G\setminus S^\prime$.

\begin{figure}[top]
\begin{center}
\includegraphics[height=150 pt, width=220 pt]{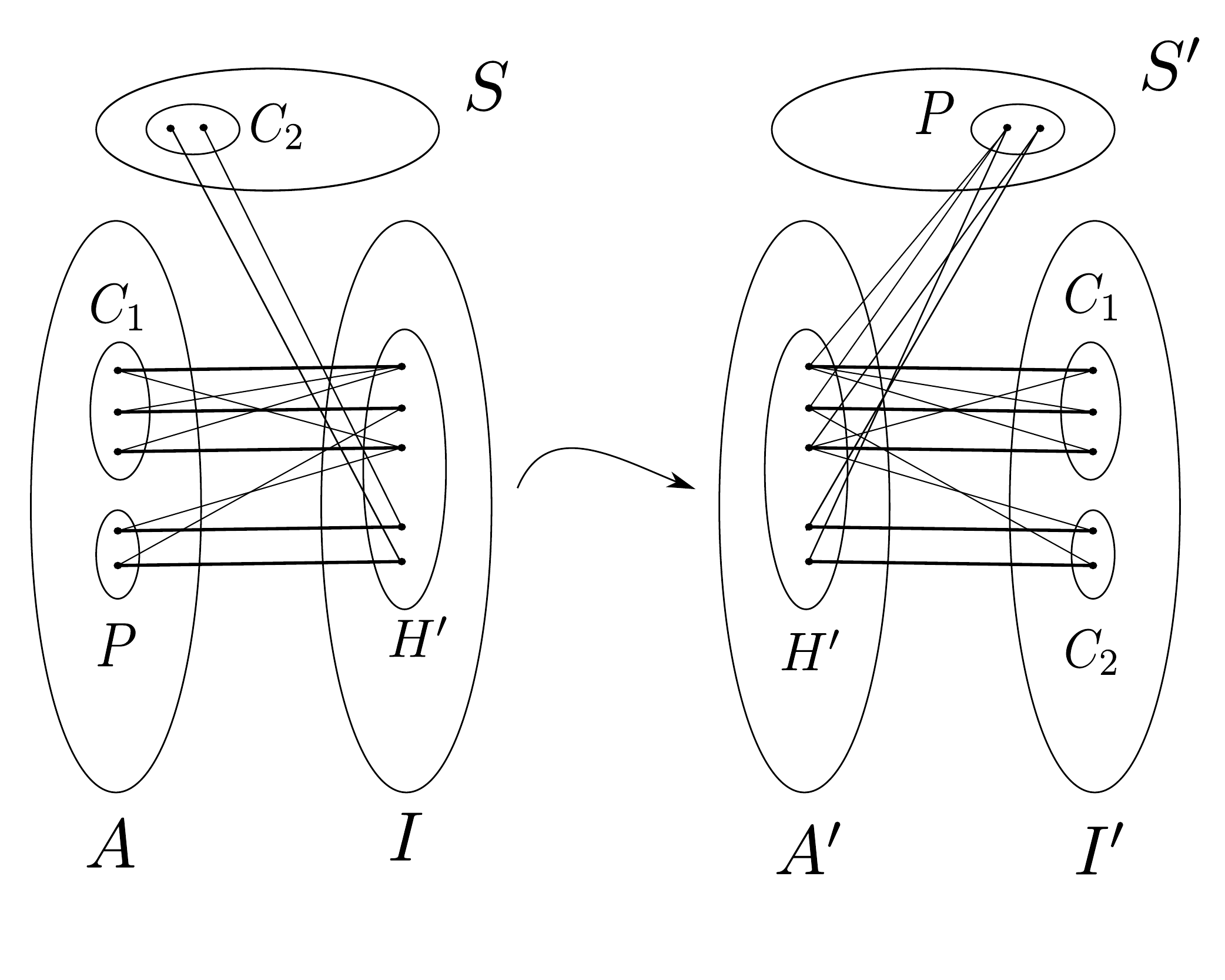}
\caption{An illustration of case 2 of Lemma~\ref{lem:kvd}}
\label{fig:lemma8}
\end{center}
\end{figure}

\item $H^\prime$ is non empty. Let $C_1$ be the set of vertices in $A\cap C$ which are adjacent to $H^\prime$ (see Fig.~\ref{lem:kvd}) , let $C_2$ be the set of vertices in $C\cap S$, which are adjacent to $H^\prime$, and let $P$ be the set of vertices of $R\cap A$ which are saturated by vertices of $H^\prime$ in the bipartite graph $(A,I)$. We now consider the set $S^\prime=(S\setminus C_2)\cup P$ and
claim that $S^\prime$ is also a kvd set of $G$ of size at most $k$ such that $G\setminus S^\prime$ has a minimum vertex cover $A^\prime=(A\setminus (C_1\cup P))\cup H^\prime$ with the corresponding independent set being $I^\prime=(I\setminus H^\prime)\cup (C_1\cup C_2)$. In other words, we move the set $H^\prime$ to $A$, the sets $C_1$ and $C_2$ to $I$ and the set $P$ to $S$. The set $I^\prime$ is independent since $I$ was independent and the vertices added to $I$ are adjacent only to vertices of $H$, which are not in $I^\prime$. Hence, $A^\prime$ is indeed a vertex cover of $G\setminus S^\prime$. To see that there is still a matching saturating $A^\prime$ into $I^\prime$, note that any vertex previously saturated by a vertex not in $H$ can still be saturated by the same vertex. As for vertices of $H^\prime$, which have been newly added to $A$, they can be saturated by the vertices in $C_1\cup C_2$. Observe that $C_1\cup C_2$  is precisely the neighborhood of $H^\prime$ in $C$ and since there is a matching saturating $H$ in the bipartite graph $(H,C)$ by Hall's Matching Theorem we have that for every subset $\hat H\subseteq H$, $\vert N(\hat H)\cap (C_1\cup C_2)\vert\geq \vert \hat H\vert$.
% in the graph $G[C\cup H]$. 
Hence, by Hall's Matching Theorem  there is a matching saturating $A^\prime$ in the bipartite graph $(A^\prime,I^\prime)$. It now remains to show that $\vert S^\prime\vert\leq k$.

Since $N(H^\prime)=C_1\cup C_2$ in the bipartite graph $(C, H)$, we know that $\vert C_1\vert+\vert C_2\vert\geq \vert H^\prime\vert$. In addition, the vertices of $C_1$ have to be saturated in the bipartite graph $(A,I)$ by vertices in $H^\prime$. Hence, we also have that $\vert C_1\vert+\vert P\vert\leq \vert H^\prime\vert$. This implies that $\vert C_2\vert\geq \vert P\vert$. Hence, $\vert S^\prime\vert\leq \vert S\vert\leq k$. This completes the proof of the claim. But now, notice that we have a kvd set of size at most $k$ such that there are no vertices of $H$ in the independent set side of the corresponding {\Konig} graph. Thus, we have fallen into Case 1, which has been handled above.
\end{enumerate}
This completes the proof of the lemma.
\end{proof}
\noindent
We now show that $\mu=vc(G^\prime)-vc^*(G^\prime)\leq \frac{k}{2}$.
%Now we show that $\mu=minvc[G[K]]-LP\leq \frac{OCT}{2}\leq k/2$.  
Let $O$ be a kvd set of $G^\prime$ and define $G^{\prime\prime}$ as the K\'{o}nig graph $G^\prime\setminus O$. It is well known that
in {\Konig} graphs, $|M|=vc(G^{\prime\prime})=vc^*(G^{\prime\prime})$, where $M$ is a maximum matching in the graph
$G^{\prime\prime}$.
This implies that $vc(G^\prime)\leq vc(G^{\prime\prime})+|O|=\vert M\vert+\vert O\vert$.
But, we also know that 
$vc^*(G^\prime)\geq \vert M\vert+\frac{1}{2}(\vert O\vert)$ and hence,
% \geq vc(G^\prime\setminus O) + \frac{|O|}{2}$. 
 $vc(G^\prime)-vc^*(G^\prime)\leq \frac{1}{2}(\vert O\vert)$. 
 By Lemma \ref{lem:kvd}, we know that there is an $O$ such that $\vert O\vert\leq k$ and hence,
$vc(G^\prime)-vc^*(G^\prime)\leq \frac{k}{2}$.
 
%Thus, $\mu=minvc[G[K]]-LP\leq \frac{OCT}{2}\leq k/2$. The last inequality follows from the fact that our result holds for any odd cycle transversal $O$ and thus on the intersection of $S$ with $K$ which is upper bounded by 
%$k$. This completes the proof. 
By Corollary~\ref{cor:compute min vc}, we can find a minimum vertex cover of $G^\prime$ in time 
$O^*(2.3146^{vc(G^\prime)-vc^*(G^\prime)})$ and hence in time $O^*(2.3146^{k/2})$. If the size of the minimum vertex cover obtained for $G^\prime$ is at most $\ell^\prime$, then we return yes else we return no. This completes the proof of the theorem.
\end{proof}
\noindent
It is known that, given a minimum vertex cover, a minimum sized kvd set can be computed in polynomial time~\cite{MishraRSSS10}. Hence, Theorem~\ref{thm:vc par by kvd} has the following corollary.

\begin{corollary}
{\sc KVD} can be solved  in time $O^*(1.5214^{k})$.
%Given a graph $G$, a minimum {\sc KVD} set can be computed in time $O^*(2.6181^{k/2})$.
\end{corollary}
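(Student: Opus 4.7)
The plan is to reduce the problem to computing a minimum vertex cover of $G$ and then invoke the polynomial-time procedure of~\cite{MishraRSSS10} that, given a minimum vertex cover of $G$, returns a minimum kvd set; we accept if and only if this kvd set has size at most $k$. Thus the only substantive task is to compute a minimum vertex cover of $G$ within $O^*(1.5214^k)$ time.

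The key is the following conditional gap bound: if $G$ admits a kvd set of size at most $k$, then $vc(G)-vc^*(G) \le k/2$. To prove this, first apply Preprocessing Rule~\ref{red:NT_reduction} to obtain $G'$ together with the forced set $V^x_1$; this step preserves the gap, i.e., $vc(G)-vc^*(G) = vc(G')-vc^*(G')$, and by Lemma~\ref{lem:kvd}, $G'$ still has a kvd set $O$ of size at most $k$. Now $G'\setminus O$ is K\"{o}nig, so $vc^*(G'\setminus O)=vc(G'\setminus O)=|M'|$ for a maximum matching $M'$ of $G'\setminus O$. Moreover, since all-$\tfrac12$ is the unique LP optimum for $G'$, restricting this optimum to $V(G')\setminus O$ yields a feasible LP solution for $G'\setminus O$ of value $vc^*(G')-|O|/2$, so $vc^*(G')\ge vc^*(G'\setminus O)+|O|/2 = |M'|+|O|/2$. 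Combined with the trivial $vc(G')\le |M'|+|O|$, this gives $vc(G')-vc^*(G')\le |O|/2 \le k/2$.

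With this bound in hand, iterate the algorithm of Theorem~\ref{thm:maintheoremsecond} exactly as in Corollary~\ref{cor:compute min vc}, aborting as soon as the candidate vertex-cover size would exceed $\lceil vc^*(G)+k/2\rceil$. If a minimum vertex cover is found within this budget, pass it to the polynomial-time routine of~\cite{MishraRSSS10} to obtain a minimum kvd set and answer according to whether that kvd set has size at most $k$; if the iteration is aborted, output NO, justified by the contrapositive of the gap bound. The dominant cost is the final recursive call of Theorem~\ref{thm:maintheoremsecond}, which runs in $O^*(2.3146^{k/2})=O^*(1.5214^k)$ time. The only delicate point is that the gap bound is conditional on the existence of a kvd set of size at most $k$; the early-abort strategy above is what lets us dispose of the NO case cleanly.
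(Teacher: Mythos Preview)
Your proof is correct and follows essentially the same approach as the paper. Both rely on Lemma~\ref{lem:kvd} to transfer the kvd set to the preprocessed graph $G'$, establish the gap bound $vc(G')-vc^*(G')\le k/2$ via the K\"onig property of $G'\setminus O$, compute a minimum vertex cover using Corollary~\ref{cor:compute min vc}, and then invoke~\cite{MishraRSSS10} to extract a minimum kvd set. Your write-up is in fact more careful than the paper's one-line derivation of the corollary: you make explicit the early-abort strategy needed to handle the case where no kvd set of size at most $k$ exists (so the gap bound is not a priori available), whereas the paper's corollary leaves this point implicit.
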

\noindent
Since the size of a minimum Odd Cycle Transversal is at least the size of a minimum Konig Vertex Deletion set, we also have 
the following corollary. 
%A simple folklore algorithm for {\sc Vertex Cover param by OCT}  running in time in $O^*(2^k)$ can be found in appendix. 

\begin{corollary}
{\sc Vertex Cover Param by OCT} can be solved in time $O^*({1.5214}^{k})$.
\end{corollary}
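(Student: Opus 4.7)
The plan is to reduce \textsc{Vertex Cover Param by OCT} to \textsc{Vertex Cover Param by KVD} in a parameter-preserving way, and then invoke Theorem~\ref{thm:vc par by kvd}. The key observation is that every odd cycle transversal is automatically a K\"onig vertex deletion set: if $T \subseteq V(G)$ is an OCT, then $G \setminus T$ is bipartite, and by the classical K\"onig theorem every bipartite graph is a K\"onig graph. Hence $T$ is also a kvd set of $G$, of the same size.

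First, I would take the input $(G, T, \ell)$ to \textsc{Vertex Cover Param by OCT}, where $T$ is an OCT of size at most $k$. By the observation above, $(G, T, \ell)$ is also a valid input to \textsc{Vertex Cover Param by KVD}, since $T$ is a kvd set of size at most $k$. Next, I would simply run the algorithm from Theorem~\ref{thm:vc par by kvd} on this instance. That algorithm decides whether $G$ has a vertex cover of size at most $\ell$ in time $O^*(1.5214^{k})$, and this answer is exactly the answer we want for the OCT-parameterized instance.

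There is no real obstacle here beyond noticing the inclusion ``every OCT is a kvd set''; the corollary is essentially a free consequence of Theorem~\ref{thm:vc par by kvd} together with this structural remark. One might worry about where the kvd set witness comes from in the reduced instance, but the OCT given in the input serves directly as that witness, so no additional computation is required and the parameter is preserved exactly (not merely up to a constant factor). This yields the claimed $O^*(1.5214^k)$ running time.
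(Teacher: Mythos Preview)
Your proof is correct and follows essentially the same approach as the paper. The paper's justification is the single sentence ``Since the size of a minimum Odd Cycle Transversal is at least the size of a minimum K\"onig Vertex Deletion set,'' which encodes precisely your observation that every OCT is a kvd set (because bipartite graphs are K\"onig), so the OCT supplied in the input can be fed directly to the algorithm of Theorem~\ref{thm:vc par by kvd}.
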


%We remark that recently there has been several results where a problem is parameterized by another parameter. For an example {\sc Odd Cycle Transversal} parameterized by various structural parameters~\cite{abs-1107-3658} or {\sc  Treewidth} parameterized by 
%vertex cover~\cite{BodlaenderJK11} or {\sc Vertex Cover} parameterized by feedback vertex set~\cite{JansenB11}. We have added two 
%new algorithms in this direction by giving an $O^*({1.6181}^{k})$ algorithms for {\sc Vertex Cover param by KVD} and 
%{\sc Vertex Cover param by OCT}. 

\subsection{A simple improved kernel for {\sc Vertex Cover}}
We give a kernelization for {\sc Vertex Cover} based on Theorem~\ref{thm:main theorem} as follows. Exhaustively, apply the 
Preprocessing rules~\ref{red:NT_reduction} through ~\ref{red:struction} (see Section~\ref{sec:mainalgo}). When the rules no longer apply, if $k-vc^*(G)\leq \log k$, then solve the problem in time $O^*(2.3146^{\log k})=O(n^{O(1)})$. Otherwise, just return the instance. We claim that the number of vertices in the returned instance is at most $2k-2\log k$. Since $k-vc^*(G)>\log k$, $vc^*(G)$ is upper bounded by $k-\log k$. But, we also know that when Preprocessing Rule~\ref{red:NT_reduction} is no longer applicable, all $\frac{1}{2}$ is the unique optimum to LPVC($G$) and hence, the number of vertices in the graph $G$ is twice the value of the optimum value of LPVC($G$). Hence, $\vert V\vert= 2vc^*(G)\leq 2(k-\log k)$. Observe that by the same method we can also show that in the reduced instance the number of vertices is upper bounded by $ 2k-c \log k$ for any fixed constant $c$. Independently, Lampis~\cite{Lampis11} has also shown an upper bound of $2k- c\log k$ on the size of a kernel for {\sc vertex cover} for any fixed constant $c$. 

% {\bf Mention the new result of Lampis here...}
\section{Conclusion}
We have demonstrated that using the drop in LP values to analyze in branching algorithms can give powerful results for parameterized complexity. 
%As our aim was to demonstrate this for even simple algorithms, we deliberately resisted our temptation to develop involved branching rules. Hence 
We believe that our algorithm is the beginning of a race to improve the running time bound for \agvc\ and possibly for the classical {\sc vertex cover} problem, for which there has been no progress in the last several years after an initial plethora of results.

Our other contribution is to exhibit several parameterized problems that are equivalent to or reduce to \agvc\ through parameterized reductions. We observe that as the parameter change in these reductions are linear, any upper or lower bound results for kernels for one problem will carry over for the other problems too (subject to the directions of the reductions). For instance, recently, Kratsch and Wahlstr\"{o}m~\cite{KratschW11} studied the kernelization complexity of \agvc\ and obtained a randomized polynomial sized kernel for it through matroid based techniques.  This implies a randomized polynomial kernel for all the problems in this paper.

\bibliographystyle{siam}
\bibliography{references}

\newpage 
\section{Appendix: Problem Definitions}
\begin{center}
\begin{boxedminipage}{.9\textwidth}

\textsc{Vertex Cover}\vspace{10 pt}

\begin{tabular}{ r l }

\textit{~~~~Instance:} &  An undirected graph $G$ and a positive integer $k$. \\

\textit{Parameter:} & $k$.\\
\textit{Problem:} & Does $G$ have a vertex cover of of size at most $k$? 
 %, \\
%~~~~~~~~~~~~~~~~~~ & such that $G \setminus S$ does not contain \jc{} as a minor?
 \\
\end{tabular}
\end{boxedminipage}
\end{center}

\begin{center}
\begin{boxedminipage}{.9\textwidth}

\textsc{\agvcfull\  (\agvc)}\vspace{10 pt}

\begin{tabular}{ r l }

\textit{~~~~Instance:} &  An undirected graph $G$, a maximum matching $M$ and \\ 
& a positive integer $\ell$. \\

\textit{Parameter:} & $\ell$.\\
\textit{Problem:} & Does $G$ have a vertex cover of of size at most $|M|+\ell$? 
 %, \\
%~~~~~~~~~~~~~~~~~~ & such that $G \setminus S$ does not contain \jc{} as a minor?
 \\
\end{tabular}
\end{boxedminipage}
\end{center}

\begin{center}
\begin{boxedminipage}{.9\textwidth}

\textsc{Vertex Cover above LP}\vspace{10 pt}

\begin{tabular}{ r l }

\textit{~~~~Instance:} &  An undirected graph $G$, positive integers $k$ and $\lceil vc^*(G) \rceil $,  \\ 
& where $vc^*(G)$ is  the minimum value  of {\sc LPVC}. \\
\textit{Parameter:} & $k-\lceil vc^*(G) \rceil$.\\
\textit{Problem:} & Does $G$ have a vertex cover of of size at most $k$? 
 %, \\
%~~~~~~~~~~~~~~~~~~ & such that $G \setminus S$ does not contain \jc{} as a minor?
 \\
\end{tabular}
\end{boxedminipage}
\end{center}
A graph $G$ is called an {\em bipartite} if its vertices can be partitioned 
into $X$ and $Y$ such that $X$ and $Y$ are independent sets.

\begin{center}
\begin{boxedminipage}{.9\textwidth}

\textsc{Odd Cycle Transveral (OCT)}\vspace{10 pt}

\begin{tabular}{ r l }

\textit{~~~~Instance:} &  An undirected graph $G$ and a positive integer $k$.\\
\textit{Parameter:} & $k$.\\
\textit{Problem:} & Does $G$ have a vertex subset $S$ of size at most $k$ such  \\
&  that $G\setminus S$ is a bipartite graph?
 %, \\
%~~~~~~~~~~~~~~~~~~ & such that $G \setminus S$ does not contain \jc{} as a minor?
 \\
\end{tabular}
\end{boxedminipage}
\end{center}

\begin{center}
\begin{boxedminipage}{.9\textwidth}
 \textsc{Edge Bipartization (EB)}\vspace{10 pt}

\begin{tabular}{ r l }

\textit{~~~~Instance:} &  An undirected graph $G$ and a positive integer $k$.\\
\textit{Parameter:} & $k$.\\
\textit{Problem:} & Does $G$ have an edge subset $S$ of size at most $k$ such  \\
&  that $G'=(V,E\setminus S)$ is a bipartite graph?
 %, \\
%~~~~~~~~~~~~~~~~~~ & such that $G \setminus S$ does not contain \jc{} as a minor?
 \\
\end{tabular}
\end{boxedminipage}
\end{center}
 
 \noindent
A graph $G$ is called an {\em split} if its vertices can be partitioned 
into $X$ and $Y$ such that $X$ is a clique and $Y$ is an independent set.

\begin{center}
\begin{boxedminipage}{.9\textwidth}

\textsc{Split Vertex Deletion}\vspace{10 pt}

\begin{tabular}{ r l }

\textit{~~~~Instance:} &  An undirected graph $G$ and a positive integer $k$.\\
\textit{Parameter:} & $k$.\\
\textit{Problem:} & Does $G$ have a vertex subset $S$ of size at most $k$ such  \\
&  that $G\setminus S$ is a split graph?
 %, \\
%~~~~~~~~~~~~~~~~~~ & such that $G \setminus S$ does not contain \jc{} as a minor?
 \\
\end{tabular}
\end{boxedminipage}
\end{center}
\noindent
A graph $G$ is called an {\em \Konig} if the size of a maximum matching is equal to the 
size of a minimum vertex cover. 
 
\begin{center}
\begin{boxedminipage}{.9\textwidth}
\textsc{\Konig\ Vertex Deletion (KVD)}\vspace{10 pt}

\begin{tabular}{ r l }

\textit{~~~~Instance:} &  An undirected graph $G$ and a positive integer $k$.\\
\textit{Parameter:} & $k$.\\
\textit{Problem:} & Does $G$ have a vertex subset $S$ of size at most $k$ such  \\
&  that $G\setminus S$ is a \Konig\ graph?
 %, \\
%~~~~~~~~~~~~~~~~~~ & such that $G \setminus S$ does not contain \jc{} as a minor?
 \\
\end{tabular}
\end{boxedminipage}
\end{center}

If the input graph to {\sc KVD} has a perfect matching then we call it {\sc KVD$_{pm}$}. 
 
 % Here is my addition
%%%%%%%%%%%%%%%%%%%%%%%%%%%%%%%%%%%%%%%%%%%%%%%%%%%%%%%%%%%%%%%%%%%%%%%%%%%%%%%%%%%%%%
\noindent
Given a $2$-SAT formula $\phi$ on variables $x_1,\dots,x_n$, and with clauses $C_1,\dots,C_m$, we define \emph{deleting} a clause from $\phi$ as removing the clause from the formula $\phi$ and deleting a variable from $\phi$ as removing all the clauses which involve that variable, from $\phi$. 
 
\begin{center}
\begin{boxedminipage}{.9\textwidth}

\textsc{Almost $2$-SAT}\vspace{10 pt}

\begin{tabular}{ r l }

\textit{~~~~Instance:} &  A $2$-SAT formula $\phi$ and a positive integer $k$.\\
\textit{Parameter:} & $k$.\\
\textit{Problem:} & Does there exist a set of at most $k$ clauses, whose deletion \\ 
& from $\phi$   makes the resulting formula satisfiable?
 %, \\
%~~~~~~~~~~~~~~~~~~ & such that $G \setminus S$ does not contain \jc{} as a minor?
 \\
\end{tabular}
\end{boxedminipage}
\end{center}

\begin{center}
\begin{boxedminipage}{.9\textwidth}

\textsc{Almost $2$-SAT-VARIABLE VERSION} ({\sc Almost $2$}-{\sc SAT}($v$))\vspace{10 pt}

\begin{tabular}{ r l }

\textit{~~~~Instance:} &  A $2$-SAT formula $\phi$ and a positive integer $k$.\\
\textit{Parameter:} & $k$.\\
\textit{Problem:} & Does there exist a set of at most $k$ variables, whose deletion \\ 
& from $\phi$   makes the resulting formula satisfiable?
 %, \\
%~~~~~~~~~~~~~~~~~~ & such that $G \setminus S$ does not contain \jc{} as a minor?
 \\
\end{tabular}
\end{boxedminipage}
\end{center}
%{\bf CORRECT THIS PROBLEM}
\noindent
Given a graph $G$, a vertex subset $K$ of $G$ is said to be a {\Konig} vertex deletion ({\sc KVD}) set if the graph $G\setminus K$ is a {\Konig} graph.

\begin{center}
\begin{boxedminipage}{.9\textwidth}

\textsc{Vertex Cover Param By KVD}\vspace{10 pt}

\begin{tabular}{ r l }

\textit{~~~~Instance:} &  An undirected graph $G$, a positive integer $k$, and a set $K$, \\
& which is a {\sc KVD} set for $G$.\\
\textit{Parameter:} & $\vert K\vert$.\\
\textit{Problem:} & Does $G$ have a vertex cover of size at most $k$? \\

 %, \\
%~~~~~~~~~~~~~~~~~~ & such that $G \setminus S$ does not contain \jc{} as a minor?
 \\
\end{tabular}
\end{boxedminipage}
\end{center}

\begin{center}
\begin{boxedminipage}{.9\textwidth}

\textsc{Vertex Cover Param By OCT}\vspace{10 pt}

\begin{tabular}{ r l }

\textit{~~~~Instance:} &  An undirected graph $G$, a positive integer $k$, and a set $K$, \\
& which is an OCT for $G$.\\
\textit{Parameter:} & $\vert K\vert$.\\
\textit{Problem:} & Does $G$ have a vertex cover of size at most $k$? \\

 %, \\
%~~~~~~~~~~~~~~~~~~ & such that $G \setminus S$ does not contain \jc{} as a minor?
 \\
\end{tabular}
\end{boxedminipage}
\end{center}

% till here.
%%%%%%%%%%%%%%%%%%%%%%%%%%%%%%%%%%%%%%%%%%%%%%%%%%%%%%%%%%%%%%%%%%%%%%%%%%%%%%%%%%%%%%
\noindent
{\sc Horn} denotes the set of CNF formulas where each clause contains at most one positive literal. {\sc RHorn} denotes the class of 
renamable {\sc Horn} CNF formulas, that is, of  CNF formulas $F$ for which there exists a set $X\subset var(F)$ such that, 
replacing in the clauses of $F$ the literal $x$ by $\bar{x}$ and the literal $\bar{x}$ by $x$ whenever $x\in X$,  
yields a Horn formula. The set $var(F)$ contains the variables contained in $F$.  
Obviously, {\sc RHorn} properly contains {\sc Horn}.  For a CNF formula $F$ and a set of variables  $B\subseteq var(F)$  
let $F\setminus B$ denote
the CNF formula $\{C\setminus (B\cup \overline{B})~:~C\in F\}$, that is, set of clauses obtained after deleting the variables and its negation in the set $B$.  For a formula $F$, we say that a set $B\subseteq var(F)$ is {\em deletion {\sc RHorn}-backdoor set} if $F\setminus B$ is in {\sc RHorn}.

\begin{center}
\begin{boxedminipage}{.9\textwidth}

\textsc{RHorn-Backdoor Detection Set (RHBDS)}\vspace{10 pt}

\begin{tabular}{ r l }

\textit{~~~~Instance:} &  A CNF formula $\phi$ and a positive integer $k$.\\
\textit{Parameter:} & $k$.\\
\textit{Problem:} & Does there exists a deletion {\sc RHorn}-backdoor set of size \\ 
& at most $k$?
 %, \\
%~~~~~~~~~~~~~~~~~~ & such that $G \setminus S$ does not contain \jc{} as a minor?
 \\
\end{tabular}

\end{boxedminipage}
\end{center}

%\begin{center}
%\begin{boxedminipage}{.9\textwidth}
%\textsc{Vertex Cover param by KVD}\vspace{10 pt}
%
%\begin{tabular}{ r l }
%
%\textit{~~~~Instance:} &  An undirected graph $G$, a  {\Konig} vertex deletion set $S$ of size \\ 
%& at most $k$   and a positive integer $\ell$.\\
%\textit{Parameter:} & $k$.\\
%\textit{Problem:} & Does $G$ have a vertex cover of size at most $\ell$?
% %, \\
%%~~~~~~~~~~~~~~~~~~ & such that $G \setminus S$ does not contain \jc{} as a minor?
% \\
%\end{tabular}
%\end{boxedminipage}
%\end{center}
%
%\begin{center}
%\begin{boxedminipage}{.9\textwidth}
%\textsc{Vertex Cover param by OCT}\vspace{10 pt}
%
%\begin{tabular}{ r l }
%
%\textit{~~~~Instance:} &  An undirected graph $G$, an odd cycle transversal $S$ of size \\ 
%& at most $k$   and a positive integer $\ell$.\\
%\textit{Parameter:} & $k$.\\
%\textit{Problem:} & Does $G$ have a vertex cover of size at most $\ell$?
% %, \\
%%~~~~~~~~~~~~~~~~~~ & such that $G \setminus S$ does not contain \jc{} as a minor?
% \\
%\end{tabular}
%\end{boxedminipage}
%\end{center}

\end{document}